\documentclass[11pt]{article}
\usepackage[letterpaper,margin=1in]{geometry}

\usepackage[utf8]{inputenc} 
\usepackage[T1]{fontenc}    
\usepackage{url}            
\usepackage{booktabs}       
\usepackage{amsfonts}       
\usepackage{nicefrac}       
\usepackage{microtype}      
\usepackage{xcolor}         

\usepackage{nicefrac,bm,bbm}
\usepackage{amsmath,amsthm,color,colortbl,amssymb}

\usepackage[hidelinks]{hyperref}

\usepackage{enumitem}
\usepackage{caption}
\captionsetup[table]{skip=5pt}
\captionsetup[figure]{font=small}

\usepackage{natbib}

\usepackage{wrapfig}
\usepackage{comment}

\usepackage{graphicx}
\usepackage{multirow}
\usepackage{xspace}
\usepackage{mathtools}
\usepackage{euscript}
\usepackage{thm-restate}
\usepackage{enumitem}
\usepackage{tikz}
\usepackage{subcaption}
\usepackage{tablefootnote,tabularx}
\usepackage[T1]{fontenc}    

\usepackage{pifont}

\usepackage{euscript}
\usepackage{mleftright}

\usepackage{algorithm}
\usepackage[noend]{algpseudocode}

\usepackage[capitalize]{cleveref}

\crefalias{AlgoLine}{line}%
\crefname{algocf}{Algorithm}{Algorithms}

\usepackage{cancel}
\usepackage{newpxtext}
\usepackage{newpxmath}



\usepackage{amsthm,xspace,xcolor}




\newcommand{\rev}{\textsf{rev}}
\newcommand{\revt}{\textsf{rev}_t}

\newcommand{\augment}{\textsc{Augment-the-Best-Mechanism}\xspace}

\newcommand{\E}[1]{\mathbb{E}\left[#1\right]}
\newcommand{\Ei}[1]{\mathbb{E}^i\left[#1\right]}
\newcommand{\Eo}[1]{\mathbb{E}^0\left[#1\right]}

\newcommand{\Ea}[1]{\mathbb{E}^{\alpha}\left[#1\right]}
\renewcommand{\P}[1]{\mathbb{P}\left(#1\right)}
\newcommand{\Pb}{\mathbb{P}}

\newcommand{\Po}[1]{\mathbb{P}^0\left(#1\right)}

\renewcommand{\Pi}[1]{\mathbb{P}^i\left(#1\right)}

\newcommand{\kl}[2]{\mathcal{D}_{\mathrm{KL}}\left(#1,#2\right)}
\newcommand{\tv}{\mathrm{TV}}

\newcommand{\eps}{\varepsilon}
\newcommand{\supp}{\mathrm{supp}}
\newcommand{\Eeps}{E_\eps}
\newcommand{\Veps}{V_\eps}
\newcommand{\Geps}{G_\eps}

\DeclareMathOperator*{\argmin}{arg\,min}

\newcommand{\A}{\mathcal{A}}

\newcommand{\cE}{\mathcal{E}}
\newcommand{\F}{\mathcal{F}}
\newcommand{\M}{\mathcal{M}}
\newcommand{\Q}{\mathcal{Q}}
\newcommand{\cS}{\mathcal{S}}

\newcommand{\pL}{\textsc{PathLearning}\xspace}

\newcommand{\ind}[1]{\mathbbm{1}_{\{{#1\}}}}


\newtheorem{lemma}{Lemma}
\newtheorem{claim}{Claim}
\newtheorem{example}{Example}

\newtheorem{corollary}{Corollary}
\newtheorem{proposition}{Proposition}

\newtheorem{theorem}{Theorem}
\newtheorem{definition}{Definition}

\title{Selling Joint Ads: A Regret Minimization Perspective}

\author{
    Gagan Aggarwal$^\star$ \quad
    Ashwinkumar Badanidiyuru$^\ddagger$ \footnote{This work was done while Ashwinkumar was at Google.}  \quad
    Paul D\"{u}tting$^\star$ \quad
    Federico Fusco$^\#$\footnote{Part of this work was done while Federico was intern at Google under the supervision of Paul D\"{u}tting.}  \vspace{6mm}\\
    $^\star$\ Google Research\\
    $^\ddagger$\ Uber\\
    $^\#$\ Sapienza University of Rome\vspace{2mm}\\
    {\textcolor{black}{\small\texttt{\{gagana,duetting\}@google.com}, \texttt{ashwinkumarbv@uber.com,}}}
    {\textcolor{black}{\small\texttt{fuscof@diag.uniroma1.it}}}
}


\date{}
\begin{document}

\maketitle
\thispagestyle{empty}

\begin{abstract}
Motivated by online retail, we consider the problem of selling one item (e.g., an ad slot) to two non-excludable buyers (say, a merchant and a brand). 
This problem captures, for example,  situations where a merchant and a brand cooperatively bid in an auction to advertise a product, and both benefit from the ad being shown. A mechanism collects bids from the two and decides whether to allocate and which payments the two parties should make. This gives rise to intricate incentive compatibility constraints, e.g., on how to split payments between the two parties. We approach the problem of finding a revenue-maximizing incentive-compatible mechanism from an online learning perspective; this poses significant technical challenges. First, the action space (the class of all possible mechanisms) is huge; second, the function that maps mechanisms to revenue is highly irregular, ruling out standard discretization-based approaches.

In the stochastic setting, where agents' valuations are drawn according to some fixed but unknown distribution, we design an efficient learning algorithm achieving a regret bound of $O(T^{\nicefrac 34})$. Our approach is based on an adaptive discretization scheme of the space of mechanisms, as any non-adaptive discretization fails to achieve sublinear regret. 
In the adversarial setting, when the valuations are arbitrarily generated upfront, we exploit the non-Lipschitzness of the problem to prove a strong negative result, namely that no learning algorithm can achieve more than half of the revenue of the best fixed mechanism in hindsight.
We then consider the $\sigma$-smooth adversary, which randomly generates the valuations from smooth distributions but, unlike in the stochastic case, can do so in a non-stationary way. In this setting, we construct an efficient learning algorithm that achieves a regret bound of $O(T^{\nicefrac 23})$ and builds on a succinct encoding of exponentially many experts.
Finally, we prove that no learning algorithm can achieve less than $\Omega(\sqrt T)$ regret, in both the stochastic and the smooth setting, thus narrowing the range where the minimax regret rates for these two problems lie.
    
\end{abstract}

\clearpage

\tableofcontents

\clearpage

\pagenumbering{arabic}

\section[Introduction]{Introduction 
{}}
\label{sec:introduction}

    Consider an online retail website such as Amazon or Alibaba. These websites sell goods directly but also serve as middlemen between buyers and sellers. Moreover, they typically feature ads by which a certain offer can appear more prominently on a search results page. A crucial feature of these ads is that they may simultaneously benefit a brand and a merchant (say Nike and Footlocker). This is precisely what happens in Facebook’s ``Collaborative Ads'' program \citep{meta2024}, where brands and merchants cooperate to purchase joint ads. 
    
    Situations like this, where a brand and a merchant cooperatively bid in an auction, are a prime motivation for designing selling mechanisms for a single, non-excludable 
    good \citep{Guth86}. The mechanism --- whose goal is revenue maximization --- collects bids from the merchant and the brand and decides whether it wants to allocate the good (show the ad). 
    In the cases where it allocates, the mechanism decides on a payment and how it will be split between the two parties. We call the problem of designing such revenue-maximizing mechanism 
    the \emph{Joint Ads Problem}. 
    Non-excludable mechanism design problems like this give rise to intricate incentive constraints: 
    Unless we set up the mechanism in an incentive-compatible way, the merchant (or the brand) may try to shade their bid, thereby transferring part of the costs to the other party. An incentive-compatible mechanism prevents such gaming opportunities and ensures that it is in the agents' best interest to report their valuations truthfully. While the ``one-shot'' task of selling a single, non-excludable, public good is well understood \citep{Guth86,csapo2013optimal}, it requires strong informational assumptions, e.g., the agents' valuations need to be drawn (independently) from regular or finitely supported distributions which are perfectly known. For practical purposes, a more data-driven approach is desirable. 

    Following \citet{KleinbergL03}, 
    we approach the design of an incentive-compatible revenue-maximizing mechanism for the Joint Ads Problem from an online learning perspective. 
    Formally, we study the Repeated Joint Ads problem, where a mechanism designer interacts at each time step $t = 1, \ldots, T$, with a new pair of agents (e.g., merchant and brand) that want to purchase an item (e.g., an ad slot). Each of these two agents $i \in \{ 1,2\}$ is characterized by a private valuation $v_i^t$ representing its willingness to pay. The mechanism designer (that we often refer to as learner) posts a dominant-strategy-incentive-compatible (DSIC) and individually-rational (IR) mechanism $M^t$. 
    The agents' valuations and the mechanism induce a revenue $\rev_t(M^t)$ for the learner, given by the sum of the two agents' payments. 
    The mechanism designer's goal is to maximize revenue, and the performance of a learning algorithm is measured in terms of its regret: the difference between the revenue achieved and that of the best fixed DSIC and IR mechanism. 
    
    The best fixed mechanism is an ambitious benchmark, which differentiates us, for instance, from the line of work on second price auctions with reserve price(s) \citep[e.g.,][]{Cesa-BianchiGM15,RoughgardenW19}, where the regret is measured with respect to the best fixed reserve price(s). 
    We investigate three possible models generating the agents valuations: (i) the adversarial setting, where valuations are generated up-front in an arbitrary way; (ii) the stochastic setting, where valuations are drawn i.i.d. from a fixed but unknown distribution; and 
    (iii) the $\sigma$-smooth setting \citep{Haghtalab20,HaghtalabRS21}, where the valuations are drawn from a sequence of distributions that are not too concentrated (see \Cref{sec:preliminaries} for formal definitions). 
    Note that, in the stochastic i.i.d. case, the benchmark coincides with the best Bayesian mechanism, but unlike in the ``one-shot'' Bayesian mechanism, the input distribution is not known in advance and has to be learned on the fly. 
    Settings (ii) and (iii) are both considered weaker adversaries than (i), as they consider valuations that are generated randomly, but are incomparable to each other. In (ii), valuations are drawn from a fixed distribution, which can be ``spiky.'' In (iii), valuations are non-stationary but smooth.

\subsection[Our Results]{Our Results {}}
\label{sec:results}

    We provide an overview of our results for the Repeated Joint Ads problem, 
    and compare them with the closest results in the literature. While our main motivation lies in its practical importance, it is also arguably the most fundamental problem for initiating the study of non-excludable mechanism design from a no-regret learning perspective. 
    \begin{itemize}
        \item In the stochastic setting, we provide an efficient learning algorithm, \augment, that exhibits a regret of $\tilde O(T^{\nicefrac 34})$ (\Cref{thm:stochastic upper}), where $\tilde O$ hides poly-logarithmic terms. This upper bound is complemented by an $\Omega(\sqrt{T})$ lower bound (\Cref{cor:lower_stochastic}).
        \item In the adversarial setting, we prove that no learning algorithm can achieve sublinear regret with respect to the best fixed mechanism in hindsight. We actually present a stronger result: it is impossible to even approach {\em half} of the revenue of the best mechanism (\Cref{thm:lb-adversarial}).
        \item In the $\sigma$-smooth setting, we design an efficient learning algorithm, \pL, which achieves an $O(T^{2/3})$ regret  (\Cref{thm:hedge}). Furthermore, we prove that any algorithm suffers regret $\Omega(\sqrt{T})$ in such setting (\Cref{thm:lower_smooth}). 
    \end{itemize}

    Together, our results paint a sharp separation between the two settings with randomly generated input
    and the adversarial case. 
    As we detail below, this divide is caused by the fact that the random nature of the valuations, which is slightly different in the two models, 
    enables two different ways of dealing with the complexities of the model (size of the action space, and irregularity 
    of the objective function), which proves insurmountable when the input is adversarial.

    Our work can be seen as generalizing the model of \citet{KleinbergL03} (also previously studied by \citet{BlumKRW03}) from one buyer to two buyers. 
    The first difference between the single-buyer and multi-buyer cases is that posted pricing is optimal in the former. The natural type of feedback in the single-buyer case is thus \emph{censored feedback} (i.e., whether the buyer purchased or not); while in our model, optimal mechanisms are complex, and the valuations are necessary to implement payments (see \Cref{sec:DSIC}). A second feature differentiating our model from theirs is that in the single-buyer case, 
    the objective function, i.e., the function that maps prices to revenue, is left-Lipschitz. These differences lead to provably different regret regimes. In particular, in the single-buyer case, in the full feedback model, 
    their results show no separation between the stochastic and the adversarial setting (the minimax regret in both cases is $\Theta(\sqrt{T})$). 
    Our positive result in the $\sigma$-smooth setting, as opposed to the negative one in the adversarial case, further emphasizes the relevance of the smoothness paradigm as a tool to overcome adversarial lower bounds while retaining its non-stationary behavior. This phenomenon has already been observed for, e.g., general learning problems \citep{Haghtalab20,HaghtalabRS21}, bilateral trade \citep{CesaCCFL23smoothed}, and first price auctions \citep{CesaCCFL23}. 
    Closely related, \citet{DurvasulaHZ23} investigate smoothness in revenue-maximizing auctions. However, our results are not subsumed by theirs as their work considers the incomparable setting where, at most one of $n$ buyers gets the item. 

\subsection[Challenges  and Techniques]{Challenges  and Techniques {}}\label{sec:challenges}
\begin{figure*}[ht!]
	\captionsetup[subfigure]{aboveskip=0.5pt}
	\centering
	\begin{subfigure}{.33\textwidth}
		\centering
 		\scalebox{0.33}{\centering\includegraphics{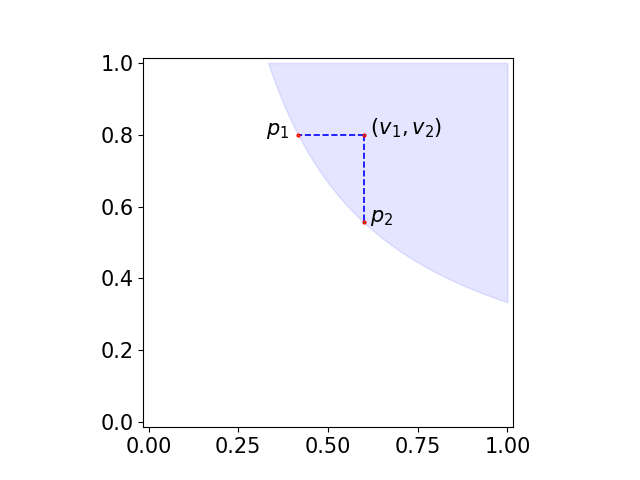}}
		\caption{\footnotesize Concave allocation region}
		\label{fig:concave}
	\end{subfigure}\hspace{0.1pt}%
	\begin{subfigure}{.33\textwidth}
		\centering
		\scalebox{0.33}{\includegraphics{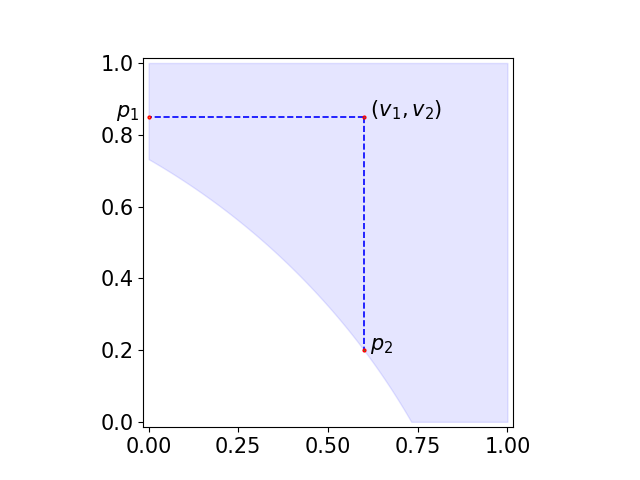}}
		\caption{\footnotesize Convex allocation region}
		\label{fig:convex}
	\end{subfigure}\hspace{0.1pt}%
	\begin{subfigure}{.33\textwidth}
		\centering 		\scalebox{0.33}{\includegraphics{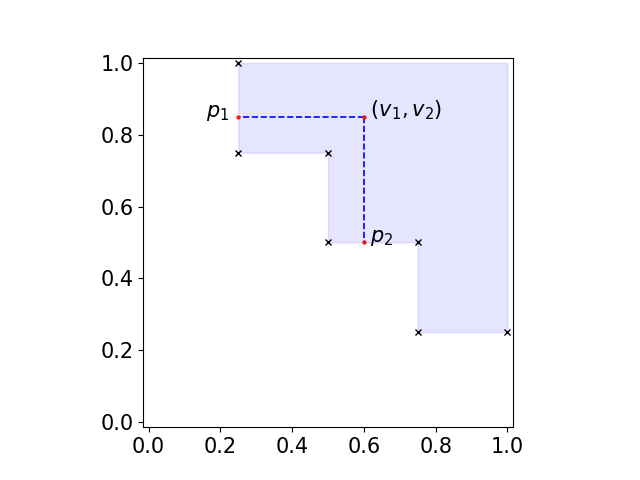}}
		\caption{\footnotesize Complex allocation region}
		\label{fig:complex}
	\end{subfigure}
	\caption{\small The allocation regions of optimal mechanisms corresponding to different distributions. The payment functions can be visualized as the valuations projection on the allocation region's boundary. For instance, in \Cref{fig:convex}, the first agent (x-axis) pays $p_1=0$, while the second agent (y-axis) pays $p_2 = 0.2$. The valuations are drawn i.i.d. from the distribution with cumulative density function $F(x) = 2(x+1)^{-2}$ in \Cref{fig:concave} and $F(x) = x^2$ in \Cref{fig:convex}. 
 In \Cref{fig:complex}, the distribution has support on the black points.} 
	\label{fig:regions}
\end{figure*}
    
    The main technical challenge in the Repeated Joint Ads Problem is the action space: the set $\M$ of all DSIC and IR mechanisms is not only huge but also complex. Furthermore, the function mapping mechanisms to revenue become highly non-regular when the valuation distribution has point masses. A mechanism is characterized by two functions- an allocation and a payment rule- mapping bids to allocation and payments. 
    Standard mechanism design arguments 
    \citep{myerson81} enable a simple geometric interpretation of DSIC and IR mechanisms for our problem. The allocation rule of any mechanism corresponds to a subdivision of the unit square $[0,1]^2$ into two regions: one in which the good is allocated and one where it's not. For DSIC and IR, the two regions must be ``north-east'' monotone so that increasing either valuation can only lead to a better allocation. The payments are then given by the ``south-west'' projection of the valuation pair on the boundary of the allocation region. See \Cref{fig:regions} for a visualization. This simplifies the problem from learning two functions to a single curve. However, the curve defining the optimal mechanism can be complex, even in simple settings. For instance, even in the (independent) Bayesian setting, the boundary of the optimal mechanism can be convex or concave (see \Cref{fig:concave,fig:convex}).
    In general, for a fixed set of (possibly adversarially given) valuations pairs, the optimal allocation boundary may be neither convex nor concave (see  \Cref{fig:complex}). This notion of complexity is further reinforced by the fact that standard notions of dimensions in learning theory, e.g., VC-dimension and pseudo-dimension, are unbounded for this class (see also \Cref{app:complexity}). This contrasts to what happens, e.g., in \citet{KleinbergL03}, where the learning goal is {\em one single number}: the best fixed price. 

    \paragraph{A Hard Instance} The standard approach to solving an online learning problem on a large action space $A$ is to find a small subset $A' \subseteq A$ with two properties (i) it is possible to achieve sublinear regret with respect the best action in $A'$ and (ii) the performance of the best action in $A'$ is close to that of the best action in $A$. Typically, this is done by exploiting some regularity of the learning problem, such as convexity or Lipschitzness of the objective function \citep{KleinbergSU19,Hazan16}.  To provide an intuition of the difficulty posed by the Repeated Joint Ads problem, we introduce the following (hard) example based on the equal revenue distribution.
    \begin{example}[Equal-revenue]
    \label{ex:counter-example}
    For any $\delta \in (0,1)$ and integer $n$, consider the random variable $V$:
    \[
        \P{V = (\delta (1-\tfrac 1{2^i}), \tfrac 1{2^i})} = \begin{cases}
            2^{1-n} &\text{ for } i = 1 \\
            2^{i - n- 1} &\text{ for } i = 2, \dots, n
        \end{cases} 
    \]
    The random variable is supported on the $(0,1)$-$(\delta,0)$ segment (red dashed line in \Cref{fig:hard-instance}), so the optimal mechanism $M^{\star}$ is clearly the one that allocates if and only if the valuations fall either on the segment itself or its right. The expected revenue of $M^{\star}$ is the sum of the payments of the two agents, which can be computed via the Myerson payment scheme (see \Cref{sec:DSIC} for details), for an expected revenue of $
        \E{\rev(M^*)} \ge \sum_{i=1}^n \tfrac 1{2^i}\P{V = (\delta (1-\tfrac 1{2^i}), \tfrac 1{2^i})}  \in \Theta(\tfrac{n}{2^n}),$ where we ignored the $\delta$ terms as this is a parameter that can be set arbitrarily small.        
    \end{example}
    \begin{figure*}[t!]
	\centering
 \includegraphics[width=0.46\linewidth]{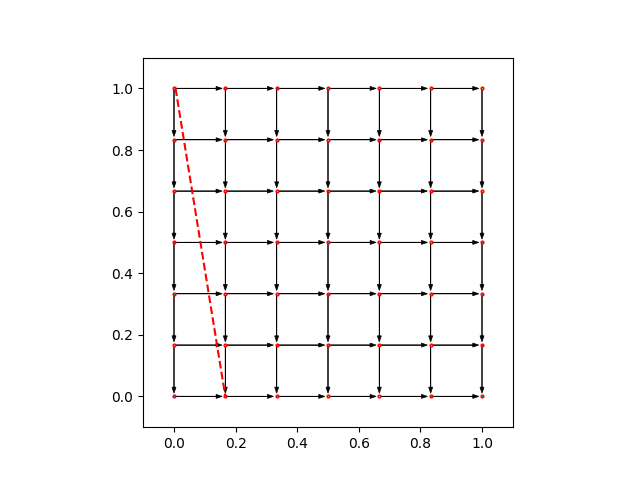}
	\caption{\small The distribution described in \Cref{ex:counter-example} is supported on the red segment (with $\delta = \nicefrac 16$). The Figure also represents the uniform grid for $\eps = \nicefrac 16$.} 
    \label{fig:hard-instance}
\end{figure*}
    We show how the instance in the example fools natural attempts to discretize the mechanisms space: the best mechanism in each one of these families has a multiplicative $\Theta(n)$ gap with respect to the expected revenue extracted by the best mechanism, thus invalidating the hope to achieve sublinear regret (even with respect to a constant fraction of the optimal revenue).
    \begin{itemize}
        \item \textbf{Posted prices.} A first parametric class of mechanisms is provided by posted prices, characterized by rectangular allocation regions. A mechanism in such class offers two prices, $p_1$ and $p_2$, and allocates if both agents accept. This is the immediate generalization of what happens for one agent. However, one can verify that the expected revenue of the best-fixed prices mechanism is $2^{-n}$ (ignoring the first agent who has an arbitrarily small valuation). 
        \item \textbf{Mechanisms on a grid.} A more elaborate approach (similar to what is done, e.g., in \citet{DurvasulaHZ23}) is to consider all the mechanisms whose allocation region can be described by the union of the tiles of a uniform grid (we formalize the class of grid mechanisms in \Cref{sec:stochastic_upper}; see \Cref{fig:grid1}). Surprisingly, this rich class is not more expressive than posted price mechanisms. It is always possible to set $\delta$ smaller or equal to the step size of the grid so that the whole distribution is supported on the first column (see \Cref{fig:hard-instance}).
    \end{itemize}

    \paragraph{The Stochastic Algorithm: An Adaptive Grid.} The hard instance in \Cref{ex:counter-example} hints that we need an adaptive discretization. In particular, we show how to {\em augment} the class of mechanisms associated with uniform grids using the observed samples to approximate the revenue of the best mechanism while maintaining that the class is learnable. Our approach consists of adaptively refining the uniform grid tiles crossed by the boundary of the allocation region of the mechanism (see \Cref{fig:augmented} for a visualization).
    Both the learnability and the approximation properties of these class of ``augmented grid'' mechanisms build on a 
    decomposition of the revenue function, which associates a weight to each edge of the grid and reduces the problem to finding a longest path. Besides achieving sublinear regret, our algorithm \augment can be efficiently implemented. We mention that also the analysis of \citet{csapo2013optimal} features a reduction to a graph problem; although somehow related, there the optimization goal is maximal closure, the decomposition is node-based, and the reduction critically needs finite support of the distribution. 

    \paragraph{The Smooth Algorithm: From Paths to Edges.} As observed, there is a natural class of mechanisms described by the $(0,1)$ to $(1,0)$ paths that are ``down-right'' in the uniform grid $G_{\eps}$, where $\eps$ is the step-size. A crucial observation is that assuming smoothness, the discretization error over $T$ rounds, i.e., the gap between the best mechanism and the best mechanism supported on $G_{\eps}$ is of order $O(\eps T)$ (\Cref{lem:discretization_smooth})\footnote{Note, the instance described in \Cref{ex:counter-example} is not smooth as it is supported on a line.}.  This can be shown by considering the best mechanism, observing through which tiles it passes, and considering the ``inner hull'' 
    mechanism on the grid. This discretization is not enough to get an efficient algorithm with a good regret bound: crucially, in fact, the cardinality of the paths in $G_{\eps}$ is still exponential in $\nicefrac{1}{\eps}$. We observe that standard approaches to deal with large classes of experts either are not implementable (e.g., the Hedge algorithm \citep{AroraHK12}, needs to maintain a weight for each one of the paths) or give bounds worse than our $O(T^{\nicefrac 23})$ (e.g., ``follow the leader'' approaches exhibit way worse bounds, see for instance Theorem 1.1. of \citet{KalaiV05} ). We overcome this problem by designing a careful sampling scheme that allows us to implement the Hedge algorithm efficiently; in particular, with an approach similar in spirit to \citet{TakimotoW03}, we maintain weights on the {\em edges} of $\Geps$ instead of its paths. 

    \paragraph{Lower Bounds.} Proving lower bounds in a structured and continuous environment is challenging, as it entails constructing hard instances while respecting the structure of the problem, i.e., points are valuations mapped to revenue via a specific function. In the lower bound for the stochastic and smooth adversary, we embed the standard hard instance for prediction with expert \citep{nicolo06} into a smooth distribution. For the adversarial lower bound, we use a {\em needle in a haystack} technique to construct an instance characterized by an optimal mechanism --the needle-- that always makes the trade happen and extracts maximum profit from it, while no learning algorithm can extract more than half of the revenue at each iteration.

\subsection{Further Related Work {}}
\label{sec:related}

    
    \paragraph{(Online) Learning of Economic Problems} Following the seminal work of \citet{KleinbergL03}, there is a flourishing line of works that studies economic problems from an online learning perspective. For instance, the following problems have been studied: auctions from both the mechanism designer \citep[e.g.,][]{Cesa-BianchiGM15,RoughgardenW19} and the bidders \citep[e.g.,][]{FengPS18,DaskalakisS22,CesaBianchiCCFS24} perspective, bilateral trade \citep{AzarFF22,CesaCCFL23,CesaCCFL23smoothed,BernasconiCCF24,Babaioff24}, contracts \citep{HoSV14,ZhuBYWYJ22,DuettinGSW23},  prophet inequalities \citep{GatmiryKSW24}, brokerage \citep{BolicCC24}, and Bayesian persuasion \citep{CastiglioniCMG23}. With the exemption of \citet{DaskalakisS22} and \citet{HoSV14,ZhuBYWYJ22}, the positive results in this line of work mostly pertain to cases where the object of learning is not complex. For example, \citet{Cesa-BianchiGM15} consider the problem of learning reserve prices (rather than optimal auctions), and the positive results in \citet{ZhuBYWYJ22,DuettinGSW23} are for linear contracts which are defined through a single parameter. The results in \citet{DaskalakisS22} and \citet{HoSV14,ZhuBYWYJ22} for more complex objects are mostly impossibility results. 
    
    A closely related approach to learning in economic design problems considers the \emph{offline} learning problem in the spirit of PAC-learning \citep[e.g.,][]{ColeR14,MorgensternR15}. While there are known constructions that derive regret bounds for the online problem from sample complexity bounds for the offline problem, we are not aware of any prior work that would imply non-trivial results for our settings. For instance, \citet{ColeR14} consider an excludable setting where bidders’ valuations are independent and make regularity assumptions on the distribution; we consider a non-excludable setting with arbitrary distributions.  \citet{MorgensternR15} provide results only if the pseudo-dimension is finite; in Appendix~\ref{app:complexity} we prove that such measure is unbounded in our problem. \citet{GuoHGZ21} develop an approach that does not rely on any complexity measures of the hypothesis classes, and show general sample complexity bounds for product distributions. Their general bounds imply sample complexity bounds for a range of applications, via appropriate discretizations of the value space. In contrast, a main challenge in our problem is the lack of such discretization (see Example~\ref{ex:counter-example}). Instead, our main result for the stochastic setting is driven by a non-trivial data-dependent discretization of the mechanism space (captured by a carefully designed infinite family of mechanisms, for which \Cref{thm:learnability} provides a sample complexity bound).


    \paragraph{Non-Excludable Mechanism Design.} Questions of non-excludable mechanism design are an active field of research. For example, \citet{HaghpanahKL21} consider the problem of selling to a group where the payment of all group members has to be the same. This model is inspired by situations (e.g., a department purchasing some equipment) where there is only a single payment (to be made by the department), but all agents (members of the department) have different values for the item and need to agree on a purchase decision. \citet{BalseiroMLZ21} tackle instead the design of dynamic mechanisms that exploit the repeated interaction with the same agents (think Cremer-McLean style full-surplus extraction); that work falls into a rather different territory than we do here.

\section[The Learning Model]{The Learning Model }
\label{sec:preliminaries}

    We formally describe the learning protocol for the Repeated Joint Ads problem. We also refer to the pseudocode for further details. At each time step $t = 1, 2,\dots, T$, a new pair of agents arrives, characterized by private valuations $(v^t_1, v^t_2)$ in the $[0,1]^2$ square. Independently, the learner proposes a mechanism $M^t$ to the agents, who then declare bids $(b_1^t,b_2^t)$. A mechanism $M^t$ is composed by a non-excludable allocation rule $x^t: [0,1]^2 \to \{0,1\}$ 
    and two payment functions $p_1^t,p_2^t : [0,1]^2 \to [0,1]$ 
    that map bids to allocation, respectively payment. The revenue of mechanism $M^t$ at time $t$ is denoted with $\revt(M^t)$ and is defined as
    \[        \revt(M^t) = x^t(b^t_1,b^t_2) \cdot [p^t_1(b^t_1,b^t_2) + p^t_2(b^t_1,b^t_2)].
    \]

    \begin{algorithm}[t!]
    \begin{algorithmic}[ht]
        \For{time $t=1,2,\ldots$}
            \State a new pair of agents arrives with (hidden) valuations $(v^t_1,v_2^t) \in [0,1]^2$
            \State the learner declares a mechanism $M^t \in \mathcal{M}$
            \State the agents declare their bids to the learner, allocation and payments are made accordingly
            \State the learner gains $\revt(M^t)$ (relatively to the bids)
        \EndFor
    \end{algorithmic}
    \caption*{\textbf{Learning Protocol for the Repeated Joint Ads Problem}}
    \end{algorithm}

\subsection[Structure of Incentive Compatible Mechanisms]{Structure of Incentive Compatible Mechanisms } 
    \label{sec:DSIC}
    
        The agents behave strategically, meaning that they strive to maximize their own quasi-linear utility $u^t_i(b_1^t,b_2^t) = v^t_i \cdot x^t(b_1^t,b_2^t) - p^t_i(b_1^t,b_2^t)$ for $i \in \{1,2\}$. To avoid unpredictable strategic misreporting from the agents, we require the mechanism $M^t$ proposed by the learner to enforce two properties:
        \begin{itemize}
            \item{\em ex-post dominant strategy incentive compatibility (DSIC)}: regardless of the other's bid, each agent maximizes its utility by being truthful. Formally, for any pair of actual valuations $(v_1^t,v_2^t)$ and possible bids $b_1$ and $b_2$ we have $     u^t_1(v_1^t,b_2) \ge u^t_1(b_1,b_2)$ and $ u^t_2(b_1,v_2^t) \ge u^t_2(b_1,b_2)$.
            \item{\em ex-post individual rationality (IR)}: truthfulness never induces negative utility. Formally, for any actual valuations $(v_1^t,v_2^t)$ and possible bids $b_1$ and $b_2$ we have 
            $u^t_1(v_1^t,b_2) \ge 0$ and $u^t_2(b_1,v_2^t) \ge 0.$
        \end{itemize}

        Our problem falls within the framework of single parameter auctions \citep{myerson81}, as each agent is characterized by a single value representing its value for the good for sale. This tells us that $(i)$ an allocation rule is implementable if and only if it is monotone and $(ii)$ for each such allocation rule there exists a unique payment rule that completes it to a DSIC and IR mechanism.\footnote{Technically, payments are unique only up to constant shifts. We adopt the minimal payments to achieve DSIC and IR.}

        \begin{itemize}
            \item{\em monotone allocation rule:} an allocation rule $x$ is monotone if for every bid $b_1 $ and $b_2$, the two functions $x(z,b_2)$ and $x(b_1,z)$ are non-decreasing in $z$.
        \end{itemize}

        Given a monotone allocation rule $x$, the unique payment rule $p = (p_1,p_2)$ that completes it to a DSIC and IR mechanism can be computed via the critical prices, in particular (adopting the convention that the $\argmin$ of the empty set is $0$) we have:
        \[
            \begin{cases}
                p_1(b_1,b_2) = \argmin\{\pi \in [0,1] \mid x(\pi,b_2) = 1\},\\ p_2(b_1,b_2) = \argmin\{\pi \in [0,1] \mid x(b_1,\pi) = 1\}.
            \end{cases}
        \]
        \paragraph{A Geometric View.}
        The notion of monotone allocation rule and the relative payments have a clean geometric interpretation in our setting: any DSIC and IR mechanism $M$ is identified by its allocation region $A_M = \{(v_1,v_2)\in [0,1]^2 \mid x(v_1,v_2) = 1\}.$
        We say that a point $(w_1,w_2)$ dominates point $(v_1,v_2)$ if $w_1 \ge v_1$ and $w_2 \ge v_2$. To respect monotonicity $A_M$ has the following property: if $(v_1,v_2) \in A_M$, and $(w_1,w_2) \in [0,1]^2$ dominates it, then $(w_1,w_2) \in A_M$. We call a subset $A$ of $[0,1]^2$ monotone if such property is respected. A monotone region is enclosed by the sides of the $[0,1]^2$ square and a monotonically non-increasing curve that starts from either the north or the west side of the square and terminates into either the east or south side of it (see Figure~\ref{fig:regions}).
        The total payment is then given by the sum of the orthogonal projections (horizontal and vertical) of the valuations onto the boundary of $A_M$. 
        All in all, this allows us to forget about the strategic nature of the problem by characterizing mechanisms via ``monotone'' allocation regions (or their boundary) and implementing payments accordingly; for this reason, in the rest of the paper, there is no mention of bids as the agents always report their valuations truthfully. Similarly, we always intend mechanisms to be DSIC and IR. 
        
        \paragraph{Technical convention.} We make the technical assumption that allocation regions are always closed. This is without loss of generality as we are interested in revenue maximizing auctions and taking the closure of the allocation region may only increase the revenue. The assumption allows us, for instance, to use $\min$ instead of $\inf$ in the pricing rule and basically says that valuations belonging to the allocation region's boundary result in an allocation.
    
    \subsection[Regret Minimization]{Regret Minimization }

        The goal of the mechanism designer is to maximize its revenue over the $T$ rounds. In particular, the performance of a learning algorithm is measured in terms of its regret: the difference between the revenue achieved and that of the best fixed mechanism in hindsight. For any (possibly randomized) adversary $\cS$ that generates the sequence of valuations, and learning algorithm $\A$, it is possible to define the regret of $\A$ against $\cS$ as follows:
    \[
        R_T(\A,\cS) =\sup_{M \in \M} \E{\sum_{t = 1}^{T} \revt(M) - \sum_{t = 1}^{T} \revt(M^t)}.    
    \]
    The mechanisms $M_t$ are proposed by the learning algorithm $\A$, while the agents' valuations are generated by the adversary $\cS$. $\M$ denotes the class of all the DSIC and IR mechanisms. 
    The expectation in the above formula is with respect to the internal randomness of $\A$ and (possibly) of $\cS$. The goal of the learner is to design an algorithm $\A$ with small regret over all the possible adversaries: $ R_T(\A) = \sup_{\cS}R_T(\A,\cS)$. We consider three models of adversaries:
    \begin{itemize}
        \item In the (oblivious) adversarial setting, an adversary generates the sequence of valuations $\{(v_1^t,v_2^t)\}$ up-front in an arbitrary way.
        \item In the stochastic setting, the valuations are drawn i.i.d. from an unknown and possibly correlated but fixed distribution. 
        \item In the $\sigma$-smooth setting, an adversary chooses a sequence of $\sigma$-smooth random 
        variables $V^t = (V_1^t,V_2^t)$ from which the actual valuations $v^t = (v_1^t,v_2^t)$ are drawn from. We say that a random variable is $\sigma$-smooth if its distribution is $\sigma$-smooth  as defined below. 
    \end{itemize}
    \begin{definition}[\citet{HaghtalabRS21}]
        Let $X$ be a domain supporting a uniform distribution $\nu$. A measure $\mu$ on $X$ is $\sigma$-smooth, for some $\sigma \in (0,1]$, if for all measurable $A \subseteq X$, we have $\mu(A) \le \frac{\nu(A)}{\sigma}$.
    \end{definition}

\section{The Stochastic Upper Bound}
\label{sec:stochastic_upper}

    In this section, we present an (efficient) learning algorithm, \augment, which exhibits a $\tilde O(T^{3/4})$ upper bound on the regret. First, in \Cref{sec:max_flow}, we introduce the family of orthogonal mechanisms that well approximate the whole class of mechanisms for the Joint Ads Problem. Second, in \Cref{sec:orthogonal}, we present a data-dependent discretization scheme focusing on a learnable subfamily of orthogonal mechanisms that retain their approximating power. Finally, in \Cref{sec:adaptive}, we discuss how to combine these ingredients in our algorithm.

\subsection{From Revenues to Distances on a Graph: The Power of Orthogonal Mechanisms}
\label{sec:max_flow}

    Given the theory of single parameter auctions, to which our problem belongs, any DSIC and IR mechanism $M$ for the Joint Ads Problem is characterized by its allocation region $A_M$. The first natural attempt to discretize such action space is to consider allocation regions that can be seen as a union of tiles in a uniform grid; however, we know that such a class is not rich enough to successfully approximate the best mechanism (see the discussion after \Cref{ex:counter-example}). To overcome this, we introduce a generalization of such a family, the class of orthogonal mechanisms. 

    \begin{definition}[Orthogonal Mechanisms]
        A {DSIC and IR} mechanism $M$ is said to be orthogonal if its allocation region's boundary is composed of a finite union of vertical and horizontal segments. We denote the class of all orthogonal mechanisms with $\M^{\perp}$.
    \end{definition}

    Such mechanisms have a natural interpretation as ``south-east'' paths in a suitable class of graphs. 

    \begin{definition}[Orthogonal Graphs]
        A directed graph $G = (V,E)$ is said to be orthogonal if:
        \begin{itemize}
            \item[(i)] Each vertex $u \in V$ is a distinct point $(u_1,u_2) \in [0,1]^2$.
            \item[(ii)] There is a unique source $s=(s_1,1)\in V$ and a unique sink $t=(1,t_2) \in V$.
            \item[(iii)] Each edge $e = (u,v) \in E$ is either vertically pointing down-wards ($u_1 = v_1$ and $u_2 > v_2$) or horizontally pointing right-wards ($u_1 < v_1$ and $u_2 = v_2$), and the segments representing edges may only intersect in their endpoints.
            \item[(iv)] Each {node in} $V \setminus \{s,t\}$ has at least an incoming and an outgoing edge. 
        \end{itemize}
        We say that a path $\pi$ in $G$ from vertex $w$ to vertex $v$ is {\em complete} if $w$ belongs to the north side of the square ($w_2 =1$) and $v$ belongs to the east side of it ($v_1 = 1$).
    \end{definition}

    Property (iv) guarantees that any maximal path starting from a node on the north side of the square terminates on the east side of it (and is thus complete). Furthermore, any orthogonal graph $G$ is acyclic by property (iii). For any complete path $\pi$ in an orthogonal graph, it is natural to identify the mechanism $M_{\pi}$ whose allocation region falls on the ``top-right'' part of the square with respect to $\pi$. Formally, valuation $(v_1, v_2)$ is in the allocation region of $M_{\pi}$ if there exists a node $u = (u_1,u_2)$ in $\pi$ such that $v_1 \ge u_1$ and $v_2 \ge u_2$. We refer to the \Cref{fig:grid1} for a visualization: the allocation region of the orthogonal mechanism is shaded in blue, while the path representing its boundary is red. 
    The relation between orthogonal graphs and mechanisms is explicit in the following Proposition.

    \begin{restatable}{proposition}{orthogonal}
    \label{prop:orthogonal}
        Let $G = (V,E)$ be any orthogonal graph, and $\pi$ any complete path in $G$, then $M_{\pi} \in \M^{\perp}$. Conversely, for any $M \in \M^{\perp}$, there exists a complete path $\pi$ in some orthogonal graph such that $M = M_\pi$.
    \end{restatable}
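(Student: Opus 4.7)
Plan: The proposition has two directions, both of which leverage the geometric view of DSIC and IR mechanisms recalled in \Cref{sec:DSIC}.

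For the forward direction, I would first verify that the allocation region $A_{M_\pi}$ of the mechanism associated with a complete path $\pi$ is monotone (up-right closed): if $(v_1,v_2) \in A_{M_\pi}$ is witnessed by a node $u=(u_1,u_2) \in \pi$ with $v_1 \ge u_1$ and $v_2 \ge u_2$, then any $(w_1,w_2)$ dominating $(v_1,v_2)$ also dominates $u$ and thus lies in $A_{M_\pi}$. By the Myerson-style characterization recalled in \Cref{sec:DSIC}, monotonicity of the allocation rule is exactly what is needed to complete it uniquely to a DSIC and IR mechanism, so $M_\pi$ is well-defined as an element of $\M$. I would then show that the ``free'' portion of the topological boundary of $A_{M_\pi}$ (the part not lying on the sides of the unit square) coincides with the trace of $\pi$: any boundary point must be approached by elements of both $A_{M_\pi}$ and its complement, and the monotonicity of $A_{M_\pi}$ together with the staircase shape of $\pi$ pins it down to one of the axis-aligned edges of $\pi$. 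Since $\pi$ consists of finitely many horizontal and vertical edges, the boundary is a finite union of such segments, hence $M_\pi \in \M^\perp$.

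For the reverse direction, I would start from the observation that monotonicity and closedness of $A_M$ force the non-trivial portion of $\partial A_M$ to be a monotonically non-increasing axis-aligned staircase that begins at some point on the north or west side of the square and ends at some point on the east or south side. To match the source/sink convention of orthogonal graphs, I would \emph{extend} this staircase: if it starts at a point $(0,w_2)$ on the west side, prepend the vertical segment from $(0,1)$ down to $(0,w_2)$; if it ends at $(s_1,0)$ on the south side, append the horizontal segment from $(s_1,0)$ rightward to $(1,0)$. These extensions lie on the sides of the unit square, and since $A_M$ is already monotone, adding the corresponding points to the candidate path leaves the allocation region unchanged. I would then take as vertex set $V$ the endpoints of the maximal axis-aligned pieces of the resulting polyline and as edge set $E$ the pieces themselves, oriented rightward or downward according to the staircase direction. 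Properties (i)--(iv) of orthogonal graphs are then immediate from the construction, and the extended polyline is by design a complete path $\pi$ in this graph with $M=M_\pi$.

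The main technical subtlety lies in the reverse direction, specifically in handling the cases in which $\partial A_M$ starts on the west side or terminates on the south side; this is precisely the reason for the explicit boundary-extension step, and one must check carefully that these extensions do not alter the allocation region (a short monotonicity argument). Degenerate allocation regions deserve a word too: the full square $A_M=[0,1]^2$ is realized by the path $(0,1) \to (0,0) \to (1,0)$, while the empty region can be handled by convention as the degenerate ``path'' sitting at the north-east corner $(1,1)$. The forward direction, once monotonicity of $A_{M_\pi}$ has been established, is a routine bookkeeping exercise.
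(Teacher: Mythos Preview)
Your proposal is correct and follows essentially the same approach as the paper: both directions hinge on the monotonicity argument for $A_{M_\pi}$ and the observation that the non-trivial boundary of an orthogonal mechanism is itself an axis-aligned staircase that can be read as a complete path. Your treatment is in fact more careful than the paper's, which simply asserts that the boundary ``goes from the north to the east side'' without spelling out the west/south extension step you describe; one small quibble is that the degenerate path at $(1,1)$ yields $A_{M_\pi}=\{(1,1)\}$ rather than the empty region, but this is immaterial for the paper's purposes.
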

    \begin{proof}
        Fix any orthogonal graph $G$ and complete path $\pi$ in it, and consider the associated mechanism $M_\pi$, which is the mechanism associated by the allocation region $A_{\pi}$ of all the points that dominate at least a node in $\pi$. Regarding the shape of $A_{\pi}$, its boundary is given by the union of the edges in $\pi$ (which are either horizontal or vertical by definition of orthogonal graphs) and possibly portions of the north or east side of the $[0,1]^2$ square. We must also formally prove that $A_{\pi}$ is monotone. Let $(v_1,v_2)\in A_{\pi}$ and $(w_1,w_2)$ any point that dominates it, then also $(w_1,w_2)\in A_{\pi}$ because, by definition of $M_{\pi}$ there exists $u = (u_1,u_2)$ in $\pi$ such that $w_1 \ge v_1 \ge u_1$ and $w_2 \ge v_2 \ge u_2$. 

        The converse is easy to prove: fix any orthogonal mechanism $M \in \M^{\perp}$ and consider the part of the boundary of $A_M$ that goes from the north to the east side of the square; by definition, such curve is the union of vertical and horizontal segments. Thus, it constitutes a path that respects the properties of an orthogonal graph.
    \end{proof}

    For this reason, we can safely identify complete paths on orthogonal graphs with the corresponding orthogonal mechanisms and orthogonal graphs $G$ with the family of orthogonal mechanisms that complete paths of $G$ can describe. The relation between these two concepts goes even further: the revenue of any orthogonal mechanism is equal to the sum of carefully chosen weights on the associated path. 
    
    \begin{definition}[Edge weights]
    \label{def:weights}
        Fix any orthogonal graph $G$ and any edge $e$ in it. We define its influence region 
        $A_e$ and intrinsic weight $w_e$ as follows: 
        \[
        \begin{cases}
            A_e = [x,1] \times [u_2,v_2) \text{ and } w_e =x \text{ if $e = (u,v)$ with $u_1 = v_1 = x$}  \\
            A_e = [u_1,v_1) \times [y,1]  \text{ and } w_e =y \text{ if $e = (u,v)$ with $u_2 = v_2 = y$} 
        \end{cases}
        \]
        where we adopt the convention that if $v_2 = 1$ in the first case or $v_1 = 1$ in the second case, then the corresponding interval extreme is closed. Fix now any random variable $V = (V_1,V_2)$ on $[0,1]^2$, we define the weight function $w_V : E \to [0,1]$ as $  w_V(e) = w_e \cdot \P{V \in A_e}.$
    \end{definition} 

    Stated differently, we associate to each edge $e$ its contribution to the revenue {\em if} such edge belongs to the path associated with the orthogonal mechanism. (See \Cref{fig:grid2} for a visualization of the allocation regions $A_e$, in red for a vertical edge and blue for a horizontal one).

    \begin{proposition}
    \label{prop:decomposition}
        Consider any random variable $V$ in $[0,1]^2$ describing the agents' valuations, and any orthogonal graph $G = (V,E)$, then for any complete path $\pi$ in $G$ it holds that
        \[
            \E{\rev(M_{\pi})} = \sum_{e \in \pi} w_V(e) = \sum_{e \in \pi} w_e \cdot \P{V \in A_e}.
        \]
    \end{proposition}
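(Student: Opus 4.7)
The plan is to establish a pointwise (in the realization of $V$) identity
\[
\rev(M_\pi) \;=\; \sum_{e \in \pi} w_e \cdot \ind{V \in A_e},
\]
and then take expectations on both sides: pulling the expectation inside the finite sum by linearity and using $\E{\ind{V \in A_e}} = \P{V \in A_e}$ immediately yields the claim.

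I would split the verification of the pointwise identity into two cases. First, when $V \notin A_\pi$, the left-hand side is $0$ because $M_\pi$ does not allocate; moreover, a direct inspection of \Cref{def:weights} shows that $V \in A_e$ for any edge $e$ on $\pi$ forces $V$ to dominate at least one endpoint of $e$, and hence $V \in A_\pi$ by the correspondence in \Cref{prop:orthogonal}. Contrapositively, the right-hand side is also $0$.

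Second, assume $V \in A_\pi$. Using the geometric description of the boundary of $A_\pi$ (the path $\pi$, possibly joined with portions of the north/east sides of $[0,1]^2$) guaranteed by \Cref{prop:orthogonal}, I would identify exactly one vertical edge $e_v \in \pi$ and exactly one horizontal edge $e_h \in \pi$ whose influence regions contain $V$. The edge $e_v$ is the one hit by the leftward horizontal projection of $V$ onto the boundary of $A_\pi$: its $x$-coordinate equals the horizontal critical price $p_1$, and by the construction in \Cref{def:weights} its influence region consists precisely of the points whose leftward projection lands on $e_v$. For every other vertical edge of $\pi$, either the $x$-range or the $y$-range condition of \Cref{def:weights} fails, so the indicator vanishes. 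A symmetric argument using the downward projection singles out $e_h$ with $w_{e_h} = p_2$. Summing the two non-vanishing contributions gives exactly $p_1 + p_2$, which equals the revenue of $M_\pi$ on this realization.

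The step I expect to be most delicate is the geometric partition argument: verifying that the influence regions $\{A_e\}_{e \in \pi}$ split so that each $V \in A_\pi$ is captured by exactly one vertical and exactly one horizontal edge, with consistent behaviour at the nodes of $\pi$ and on the north/east sides of $[0,1]^2$. This is precisely where the half-open/closed conventions in \Cref{def:weights}, together with the closedness of $A_M$ adopted in \Cref{sec:DSIC}, do the bookkeeping; once these boundary conventions are checked, taking expectation and invoking linearity over the finite edge set of $\pi$ concludes the proof.
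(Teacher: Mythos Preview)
Your proposal is correct and follows essentially the same approach as the paper: both arguments identify, for any valuation in the allocation region, the unique vertical edge and the unique horizontal edge of $\pi$ whose influence regions contain it, and observe that the corresponding intrinsic weights are exactly the Myerson critical prices $p_1$ and $p_2$. The paper phrases this directly in expectation (``the allocation region is partitioned by the $A_e$ for vertical edges, and the price paid by agent~1 equals $w_e$''), whereas you first establish the pointwise identity $\rev(M_\pi)=\sum_{e\in\pi} w_e\,\ind{V\in A_e}$ and then take expectations; these are the same argument.
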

    \begin{proof}
        We start by considering the payment made by the first agent. Agent $1$ pays if and only if the valuation $(v_1,v_2)$ falls into the allocation region, partitioned by all the $A_e$ for $e$ vertical edges in $\pi$. If the valuation falls into $A_e$, then the price paid by agent $1$ is computed according to Myerson's rule and is set precisely to $w_e$. All in all, the revenue extracted by the first agent is exactly the sum over all the vertical edges in $\pi_e$ of $w_e \P{V \in A_e}$. A similar argument can be carried over for agent $2$ and the horizontal edges, thus proving the statement. Note that the edges that belong to the west and south sides of the square carry no weight.
    \end{proof}        

    The previous proposition allows us to ``discretize'' the task of computing the revenue-maximizing mechanism. Given an orthogonal graph $G$, it is possible to calculate its best mechanism by simply computing the longest complete path. We can say something more; if the underlying distribution is finitely supported, then it is possible to create an orthogonal graph (whose node set is composed of the grid generated by the points in the support and thus contains them) that is guaranteed to contain an optimal mechanism on one of its paths, therefore allowing for efficient recovery of the optimal mechanism. 

    \begin{theorem}    \label{thm:best_mechanism}
        Consider any finitely-supported random variable $V$ in $[0,1]^2$ describing the agents' valuations, then there exists an orthogonal mechanism $M^\star \in \M^\perp$ that can be computed efficiently (in $|supp(V)|$) that is revenue maximizing:
        \[
            \E{\rev(M^\star)} \ge \E{\rev(M)}, \, \forall M \in \M.
        \]
    \end{theorem}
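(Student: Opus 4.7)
The plan is to show that any DSIC and IR mechanism $M \in \M$ can be ``orthogonalized'' on the grid generated by the coordinates of $\supp(V)$ into some $M' \in \M^\perp$ with expected revenue at least $\E{\rev(M)}$, and then invoke \Cref{prop:decomposition} to find the best such $M'$ via a longest-path computation on a DAG.

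Let $S := \supp(V)$, and let $x_1 < \ldots < x_k$ and $y_1 < \ldots < y_l$ be the distinct first and second coordinates of points in $S$. I would take $G$ to be the orthogonal graph whose interior node set is the grid $\{x_1, \ldots, x_k\} \times \{y_1, \ldots, y_l\}$, extended with a source at $(0,1)$ and sink at $(1,0)$ together with all axis-aligned edges between adjacent grid points so that property (iv) of the definition of orthogonal graph holds. For any $M \in \M$ with closed monotone allocation region $A_M$, I define $M'$ by
\[
A_{M'} := \{(v_1,v_2) \in [0,1]^2 : \exists\, (w_1,w_2) \in A_M \cap S \text{ with } w_1 \le v_1,\ w_2 \le v_2\},
\]
the upward-right closure of the allocated support points. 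As a finite union of closed upward-right quadrants anchored at support points, $A_{M'}$ is closed, monotone, and has a staircase boundary supported on the grid lines $x = x_i$ and $y = y_j$; hence $M' \in \M^\perp$ and corresponds to a complete path in $G$.

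Two claims then complete the revenue comparison. First, $A_{M'} \cap S = A_M \cap S$: every support point in $A_M$ is trivially in $A_{M'}$, and conversely any support point in $A_{M'}$ dominates some $(w_1,w_2) \in A_M \cap S$ and thus lies in $A_M$ by monotonicity. Second, and this is the crucial step, one has the containment $A_{M'} \subseteq A_M$: any $(v_1, v_2) \in A_{M'}$ dominates some $(w_1, w_2) \in A_M$, so by monotonicity $(v_1, v_2) \in A_M$. Fixing any $v_2$, this yields $\{x : (x,v_2) \in A_{M'}\} \subseteq \{x : (x,v_2) \in A_M\}$, so the critical price for agent $1$ under $M'$ is at least that under $M$ at every allocated valuation; a symmetric argument handles agent $2$. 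Since $V$ is supported on $S$ and the two mechanisms allocate the same support points, these pointwise inequalities integrate to $\E{\rev(M')} \ge \E{\rev(M)}$.

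Applying this reduction to a revenue-maximizing $M$ shows that $\M^\perp$ restricted to complete paths of $G$ already contains a revenue maximizer. By \Cref{prop:decomposition}, its expected revenue equals the sum of the edge weights $w_V$ along the associated path, so $M^\star$ is recovered as a longest complete path in the DAG $G$; this runs in time polynomial in $|V|+|E| = O(|S|^2)$, with each edge weight $w_V(e) = w_e \cdot \P{V \in A_e}$ precomputable in $O(|S|)$. The main obstacle is the payment-dominance claim: a priori, redrawing the boundary as an orthogonal staircase could shift critical prices in either direction, and the argument works precisely because the outer-staircase construction yields $A_{M'} \subseteq A_M$ (shrinking the region can only push critical prices outward), which itself crucially exploits the monotonicity of $A_M$.
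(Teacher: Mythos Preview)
Your proposal is correct and follows essentially the same approach as the paper: define $A_{M'}$ as the upward closure of $A_M \cap \supp(V)$, use $A_{M'} \subseteq A_M$ to get the revenue inequality, observe that $M'$ is orthogonal on the grid generated by the support coordinates, and then reduce to a longest path via \Cref{prop:decomposition}. The only cosmetic differences are that the paper frames the construction via ``inclusion-minimal'' allocation regions (its Claims~\ref{cl:minimal} and~\ref{cl:orthogonal}) rather than directly as a union of quadrants, and it is slightly more explicit that including $\{0,1\}$ among the grid coordinates makes every such $M'$ a complete path; you may want to add $0$ and $1$ to your coordinate lists to make the source/sink construction clean.
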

     \begin{proof}
        First, we prove that we can restrict our attention to study mechanisms whose allocation region is minimal with respect to the inclusion. Second, we fix any such mechanism $M$ and prove that it is indeed orthogonal. Finally, we explain how to efficiently compute such $M^{\star}$.

        We start with the first step. Fix any mechanism $M$, and consider the auxiliary mechanism $M'$ characterized by the following allocation region:
        \[
            A_{M'} = \{(w_1,w_2) \in [0,1]^2 \mid \exists (v_1,v_2) \in \supp(V) \cap A_M \text{ s.t. } w_1 \ge v_1 \text{ and } w_2 \ge v_2\}. 
        \]
        $A_{M'}$ contains all points that are dominated by the valuations in $supp(V)$ that also belong to $A_M$. 
        \begin{claim}
        \label{cl:minimal}
            $A_{M'}$ is the monotone region that contains $\supp(V) \cap A_M$ which is minimal with respect to inclusion. Therefore $A_{M'} \subseteq A_M$ and $\E{\rev(M')} \ge \E{\rev(M)}$.
        \end{claim}
        \begin{proof}[Proof of \Cref{cl:minimal}]
            For any $(u_1,u_2) \in A_{M'}$ there exists a $(v_1,v_2) \in \supp(V) \cap A_M$ which dominates it. But such $(v_1,v_2)$ belongs to $A_M$ which is monotone, so $(u_1,u_2)$ has to belong to $A_M$ by monotonicity. Minimality follows by the same argument. Since $A_M$ contains $A_M'$, it immediately holds that $\E{\rev(M')} \ge \E{\rev(M)}$, as increasing the size of the allocation region without changing the intersection with the support of $V$ only causes the payment to decrease.    
        \end{proof}
                We move to the second part of the proof: any mechanism $M$ whose allocation region is minimal with respect to inclusion is orthogonal. The intuition is that any ``curve'' part of the boundary is not inclusion-minimal and can be rectified without affecting the points in the support of $V$ covered.
        \begin{claim}
        \label{cl:orthogonal}
            Any mechanism $M$ whose allocation region is minimal with respect to inclusion with respect to a subset of $\supp(V)$ is orthogonal.
        \end{claim}
        \begin{proof}[Proof of \Cref{cl:orthogonal}]
            Let $S_x$ be the set of all the values of the first coordinates of the points in $\supp(V)\cap A_M$ union $\{0,1\}$, and let $S_y$ be the analogous set for the second coordinate. We introduce the orthogonal graph $G$ whose vertex set is given by the grid $S_x \times S_y$, and each vertex is linked by a directed edge to the closest nodes on the right and on the bottom (if any). 
            Consider now the set of vertices $S_M$ that belongs to the grid $S_x \times S_y$ and to the allocation region $A_M$. By monotonicity of $A_M$, $S_M$ is connected and respects a ``discrete'' monotonicity property: if $u$ is in $S_M$ and $v$ is a generic node of $G$ that dominates $u$, then $v$ is in $S_M$. Consider now the (Pareto) frontier of $S_M$: all the points $(u_1,u_2) \in S_M$ that do not strictly dominate other points in $S_M$. More precisely, $u \in S_M$ belongs to the frontier of $S_M$ if $\not \exists w \in S_M$ such that $u_1 > w_1$ and $u_2>w_2$.
            We prove that the path $\pi$ composed by the points on the frontier of $S_M$ is such that $M = M_{\pi}$ (note, $M_{\pi}$ is orthogonal by \Cref{prop:orthogonal}). In particular, we prove that $A_M = A_{M_{\pi}}$. 
            One inclusion is easy: $A_{\pi}$ contains $\supp(V) \cap A_M$ and therefore is contained in $A_M$ because of its minimality (\Cref{cl:minimal}). Consider now any $u \in A_{\pi}$, we want to prove that it belongs to $A_M$. Point $u$ is in $A_{\pi}$, therefore there exists $v \in \pi$ which is dominated by it, but $v$ also belong to $A_M$ and thus $u \in A_M$ by monotonicity.
        \end{proof}

        We are left with the final step of our proof, that concerns presenting an efficient algorithm to find $M^{\star}$: the revenue maximizing mechanism that is also orthogonal. First we observe that the support of $V$ is finite, therefore there exists a finite number of subsets of points that can be included in the allocation region. To each of these subsets we can associate the inclusion minimal mechanism, which is revenue maximizing and orthogonal (by Claims~\ref{cl:minimal} and \ref{cl:orthogonal}). 
        Consider now the orthogonal graph whose node set is composed by the Cartesian product of the $x$ and $y$ coordinates of {\em all} the points on the support of $V$ (and the edges are constructed similarly to what we did in the proof of \Cref{cl:orthogonal}); this graph will contain as subgraph the one used in the proof of \Cref{cl:orthogonal} for the optimal subset of the support of $V$, so it is enough to find the best orthogonal mechanism supported on such graph. By \Cref{prop:decomposition}, this is equivalent to finding the maximum weight complete path in $G$. This task can be done efficiently (i.e., in time polynomial in the cardinality of the support of $V$) by running the Belman-Ford algorithm that solves the shortest path problem on the same graph but where the costs are $c(e) = - w_V(e)$. In particular, one needs to compute the shortest path between any pair of north-east side points that can be the extreme of a complete path. Note, $G$ is acyclic so the algorithm converges to the optimum.
    \end{proof}

        \begin{figure*}[ht!]
	\captionsetup[subfigure]{aboveskip=0.5pt}
	\centering
	\begin{subfigure}{.33\textwidth}
		\centering
 		\scalebox{0.33}{\centering\includegraphics[width = 3.5\textwidth]{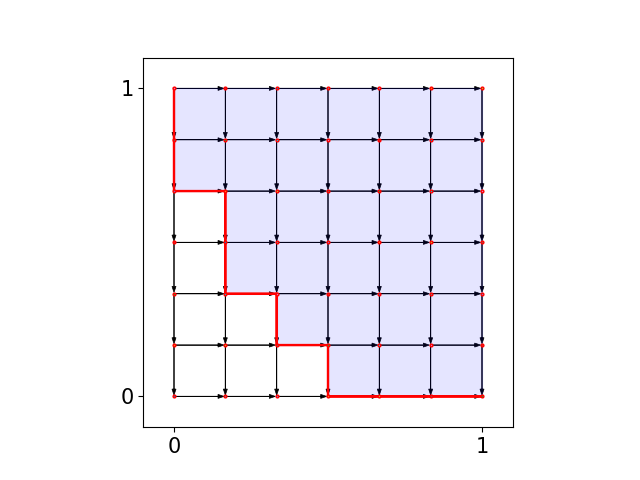}}
		\caption{\footnotesize An orthogonal graph: $G_{\nicefrac 16}$}
		\label{fig:grid1}
	\end{subfigure}\hspace{0.1pt}%
	\begin{subfigure}{.33\textwidth}
		\centering
		\scalebox{0.33}{\includegraphics[width = 3.5\textwidth]{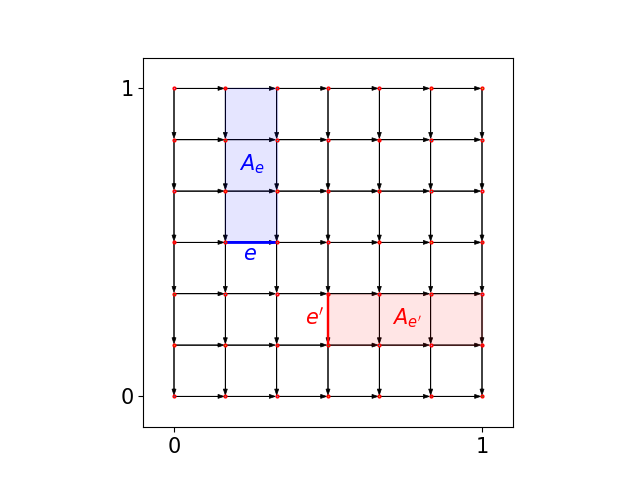}}
		\caption{\footnotesize Influence regions for $e$ and $e''$}
		\label{fig:grid2}
	\end{subfigure}\hspace{0.1pt}%
	\begin{subfigure}{.33\textwidth}
		\centering 		\scalebox{0.33}{\includegraphics[width = 3.5\textwidth]{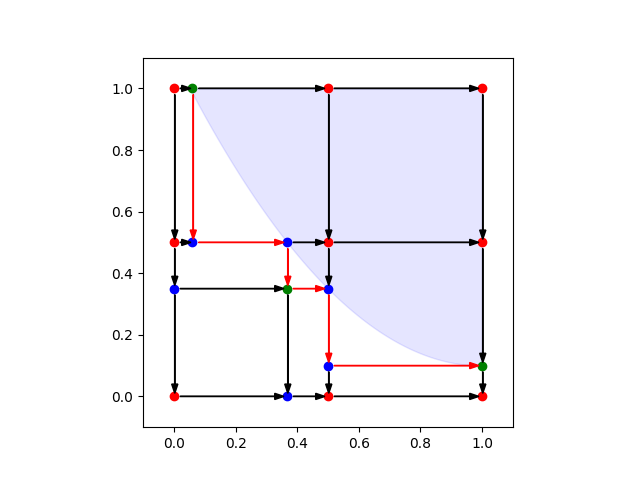}}
		\caption{\footnotesize Augmentation Procedure}
		\label{fig:augmented}
	\end{subfigure}
	\caption{\small An orthogonal graph (left) and the rectangles $A_e$ for vertical and horizontal edges (center). {Note that while the two figures are for a uniform grid, this is not needed for orthogonal mechanisms.} On the right, there is a visualization of the augmentation procedure: $G_{\nicefrac 12}$ (red nodes) is augmented with the three green points (note the auxiliary points in blue). These points have been chosen to mimic the procedure used in \Cref{lem:approx} to approximate the mechanism whose allocation region is shaded in blue with the orthogonal one corresponding to the red complete path.
 } 
\end{figure*}
    
\subsection{A (Learnable) Data-Dependent Discretization Scheme}
\label{sec:orthogonal}

    From an online learning perspective, \Cref{thm:best_mechanism} allows us to efficiently compute the best mechanism on the empirical distribution of the valuations seen so far in a Follow-The-Leader way. However, this would not yield a meaningful algorithm, as both the class $\M$ and even the subclass of orthogonal mechanisms $\M^{\perp}$ seem too complex for the revenue function to be learned efficiently (this is witnessed for instance, by the fact that $\M^{\perp}$ has unbounded pseudo-dimension, see \Cref{app:complexity}). As an intermediate step, we formally introduce a natural subfamily of the orthogonal graphs: the uniform grids (see also the second bullet after \Cref{ex:counter-example}). 
    \begin{definition}[Uniform Grid] \label{def:grid}
        For any $\eps > 0$, we define the uniform grid graph $\Geps = (V_{\eps},E_{\eps})$:
        \begin{itemize}
            \item $\Veps = \{0,{\eps}, 2 \eps, \ldots, 1\}^2$
            \item $\Eeps$ connects each point of the grid to its right and bottom neighbor (if any).
        \end{itemize}
    \end{definition}

    For instance, the graph depicted in \Cref{fig:grid1} is $G_{\nicefrac 16}.$ For any fixed $\eps$, the family $\M_{\eps}$ of all the mechanisms that are induced by complete paths in $G_{\eps}$ has the desirable property that its revenue function can be learned quickly, as we only need to learn the weight function $w_V$ on the $\Theta(\nicefrac 1{\eps^2})$ edges (by \Cref{prop:decomposition}). However, as we observed in \Cref{ex:counter-example}, for any fixed $\eps$, there exists a simple distribution $V$ such that the revenue of the best fixed mechanism in $\M_{\eps}$ is far from that of the best mechanism in $\M$.  
    We are {thus} in a situation where the large family of orthogonal mechanisms well approximates the optimal mechanism, but it is unclear whether it can be learned effectively. In contrast, mechanisms supported on uniform grids have revenue functions that are easy to learn but may exhibit a big gap with respect to the performance of the best mechanism overall. {To escape this situation, we next describe} 
    a class of mechanism that is {only} slightly larger than {the former class, and (approximately)} 
    retains the approximation property of the {latter class.}
    
    For any $\eps$, the grid $G_{\eps}$ partitions the $[0,1]^2$ square into $\nicefrac 1{\eps^2}$ tiles of area $\eps^2$ (out of simplicity, we often implicitly assume that $\nicefrac 1\eps$ is integer, this does not hinder the general validity of our results). Consider now one of these tiles, let's say the one whose top left vertex is $(i \eps, j \eps)$ and a point $(v_1,v_2)$ inside such tile. To avoid overlapping, we do not consider the south and west sides of each tile, so that $ i\eps < v_1 \le (i+1) \eps $ and $(j-1) \eps < v_2 \le j \eps$. We show a procedure to incorporate point $(v_1,v_2)$ into its tile (and thus into $G_{\eps}$) while retaining the orthogonality of the overall graph. We call the outcome of this procedure as $G_{\eps}$ {\em augmented} with point $(v_1,v_2)$.
    We have four cases:
    \begin{itemize}
        \item If $(v_1,v_2)$ belongs to the interior of the tile, i.e., $ i\eps < v_1 < (i+1) \eps $ and $(j-1) \eps < v_2 < j \eps$, then we add $(v_1,v_2)$ and four other nodes to the node set of of $G_{\eps}$: $(v_1,j \eps)$, $(v_1,(j-1) \eps)$, $(i \eps,v_2)$ and $((i+1) \eps,v_2)$. Then we add an edge between each one of these nodes and $(v_1,v_2)$, respecting the right-down direction. The four nodes added belong to the sides of the tile, so, to maintain orthogonality, we split such edges in two: for instance, let $e = (u,z)$ be the edge containing node $ w \in \{(v_1,j \eps), (v_1,(j-1) \eps), (i \eps,v_2),((i+1) \eps,v_2))$, then we remove $e$ from the edge set, and we insert $e' = (u,w)$ and $e'' = (w,z)$ in its place (south-west tile in \Cref{fig:augmented}).
        \item If $(v_1,v_2)$ belongs to a vertical edge $e = (u,w)$, then we only add two nodes: $(v_1,v_2)$ and $(i \eps,v_2)$, we split the corresponding vertical edges and add a horizontal edge between them (south-east tile in \Cref{fig:augmented}). 
        \item If $(v_1,v_2)$ belongs to a horizontal 
        edge, $e = (u,w)$, then we only add two nodes: $(v_1,v_2)$ and $(v_1,j \eps)$, we split the corresponding horizontal edges and add a vertical edge between them (north-west tile in \Cref{fig:augmented}).
        \item If $(v_1,v_2)$ is already a vertex of the graph, we do nothing.
    \end{itemize}
    We can apply this augmentation property recursively for a subset $S$ of points, at the condition that at most one node appears in any tile {of the original grid $G_{\eps}$}. Note, it is possible that edges of the original grid $G_{\eps}$ may be split in more than two edges, as, e.g., the $(0,\nicefrac 12)$-$(\nicefrac12, \nicefrac 12)$ edge in \Cref{fig:augmented}. We define the resulting orthogonal graph as the augmented grid graph $G_{\eps,S}$ and denote with $M_{\eps,S}$ the corresponding orthogonal mechanism. We then have the following definition:
    \begin{definition}[Augmented-Grid Mechanisms]
        For any $\eps > 0$, we define the family of Augmented-Grid Mechanisms $\hat {\M}_{\eps}$ as all the orthogonal mechanisms $M_{\eps,S}$ for some set of points $S$, with $|S| \le \nicefrac 2\eps$.  
    \end{definition}

    Surprisingly enough, adding only this small amount of points ($\Theta(\nicefrac 1\eps)$ many) gets the best of both the orthogonal and the grid mechanisms, achieving approximation and learnability! This is proved in the following Lemma. Starting from any mechanism $M$, it is possible to construct an augmented grid mechanism that well approximates $M$. The procedure is depicted in \Cref{fig:augmented} and consists of adapting the tiles of $G_{\eps}$ that contain the border of the allocation region to mimic the behavior of $A_M$.
    \begin{lemma}
    \label{lem:approx}
        Fix any $\eps > 0$ and any random variable $V$ in $[0,1]^2$ describing the agents' valuations. For any mechanism $M \in \M$, there exists an augmented-grid mechanism $\hat M \in \hat \M_{\eps}$ such that 
        \[
            \E{\rev({M})} \le \E{\rev({\hat M})} + 2\eps
        \]
    \end{lemma}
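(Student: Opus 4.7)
The plan is to build an augmented-grid mechanism $\hat M$ whose allocation region $\hat A$ is a slight ``outward inflation'' of $A_M$: $\hat A \supseteq A_M$, with the Myerson projections onto $\partial \hat A$ lagging those onto $\partial A_M$ by at most $\eps$ in each coordinate. Since enlarging the allocation region only decreases payments, the total payment at any $v \in A_M$ can drop by at most $2\eps$ when moving from $M$ to $\hat M$, while outside $A_M$ the revenue of $\hat M$ is nonnegative. Taking expectations will then give the additive $2\eps$ bound.

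For the construction, $\partial A_M$ is a monotone non-increasing curve from the top/left side of $[0,1]^2$ to its right/bottom side, so it crosses at most $2/\eps - 1$ tiles of $G_\eps$ by the standard taxicab argument (at most $1/\eps - 1$ column transitions plus at most $1/\eps - 1$ row transitions). For each crossed tile $T$ I denote by $(a_T, b_T)$ the entry point of $\partial A_M$ in $T$ (on top or left) and by $(c_T, d_T)$ its exit point (on right or bottom), with $a_T \le c_T$ and $b_T \ge d_T$. I would then set $q_T = (a_T, d_T)$, the south-west corner of the axis-aligned bounding box of $\partial A_M \cap T$, and route $\partial \hat A$ in $T$ along the L-shape $(a_T, b_T) \to q_T \to (c_T, d_T)$. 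Since the exit point of one tile coincides with the entry of the next, these L-shapes glue into a complete orthogonal path in $G_{\eps, S}$ where $S = \{q_T\}$, and $|S| \le 2/\eps - 1$, so $\hat M \in \hat \M_\eps$.

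To finish the proof I would establish two facts. (i) $\hat A \supseteq A_M$: within each crossed tile, every point of $\partial A_M \cap T$ dominates $q_T$, so it lies in $\hat A \cap T$; tiles not crossed by $\partial A_M$ are either entirely inside or outside $A_M$ and $\hat A$ agrees with $A_M$ on them. (ii) For every $v = (v_1, v_2) \in A_M$, $p_i^{\hat M}(v) \ge p_i^M(v) - \eps$ for $i = 1, 2$. For the first coordinate, let $T$ be the tile containing $(p_1^M(v), v_2)$. Inspecting the four possible entry/exit configurations of $T$, the L-shape's vertical segment sits at $x = a_T$ over the $y$-range $[d_T, b_T]$, which contains $v_2$, so $p_1^{\hat M}(v) = a_T$. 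But $p_1^M(v) \in [a_T, c_T] \subseteq [a_T, a_T + \eps]$, yielding the desired bound; the argument for $p_2$ is symmetric. Combining (i) and (ii) gives $p_1^{\hat M}(v) + p_2^{\hat M}(v) \ge p_1^M(v) + p_2^M(v) - 2\eps$ for $v \in A_M$, from which $\E{\rev(\hat M)} \ge \E{\rev(M)} - 2\eps \cdot \P{V \in A_M} \ge \E{\rev(M)} - 2\eps$ follows by taking expectation.

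The main obstacle will be the case analysis justifying $p_1^{\hat M}(v) = a_T$: one must verify that no neighboring tile's L-shape extends $\hat A$ at height $v_2$ further left than $x = a_T$. Using the global monotonicity of $\partial A_M$, this reduces to ruling out that the L-shape of the tile $T_{i-1,j}$ immediately to the left of $T$ covers row $v_2$; if it did, the curve would cross row $v_2$ in $T_{i-1,j}$, contradicting that $(p_1^M(v), v_2) \in T$. Similar checks handle the degenerate configurations where $q_T$ lies on a tile side or at a grid vertex (corresponding to cases 2--4 of the augmentation procedure) and the edge cases where $\partial A_M$ runs along the boundary of $[0,1]^2$ or contains horizontal/vertical segments aligned with the grid.
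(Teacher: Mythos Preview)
Your approach is essentially the paper's: augment each tile of $G_\eps$ crossed by $\partial A_M$ with a single point so that the resulting orthogonal boundary hugs $\partial A_M$ from the south--west while staying in the same tile, then read off the $2\eps$ bound from $\hat A\supseteq A_M$ together with the per-coordinate $\eps$ payment drop. The paper is terser---it augments via a short case split and then simply takes the minimal complete path on the augmented graph whose region contains $A_M$, asserting that projections land in the same tile---whereas you spell out the L-shapes, their gluing, and the neighboring-tile check.

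One small technical point to tighten: your choice $q_T=(a_T,d_T)$ can land on the south or west edge of $T$ (in the top-to-bottom and left-to-right crossing cases). By the paper's half-open tile convention such a point belongs to the adjacent tile $T'$, which is itself crossed and receives its own $q_{T'}$; you may then end up with two augmentation points in one tile, formally violating the ``at most one node per tile'' condition in the definition of $G_{\eps,S}$ and hence of $\hat\M_\eps$. The paper's case split sidesteps this by placing the augmentation point on the north edge, the east edge, or in the interior of $T$. The fix is painless: in those two cases replace $q_T$ by the entry point $(a_T,b_T)$ (north edge) respectively the exit point $(c_T,d_T)$ (east edge); the resulting path and allocation region are unchanged, so your revenue analysis carries over verbatim.
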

    \begin{proof}
        Let $M$ be any mechanism, and consider its allocation region $A_M$. If $A_M$ is the union of tiles of $G_{\eps}$, then there is nothing to prove, as $M$ is then orthogonal. We then focus on all the tiles of $G_{\eps}$ that lie at the boundary of $A_M$, i.e., that intersect $A_M$ but are not entirely contained in it; these are the tiles we augment. These tiles are at most $\nicefrac{2}{\eps}$, so the output of this procedure belongs to $\hat \M_{\eps}$, as requested in the statement.

        Consider any one of these tiles, and the intersections $u$ and $v$ of the boundary of the allocation region with the sides on the tile.\footnote{The intersection $I$ of the boundary of $A_M$ and the sides of the tile may contain more than two points when the tile and $A_M$ share part of their borders. We define points $u$ and $v$ to be the extremal ones}. We have various cases:
        \begin{itemize}
            \item If $u$ belongs to the north side and $v$ to the east side, then we augment the tile with $(u_1,v_2)$ (see south-west tile of \Cref{fig:augmented})
            \item If $u$ belongs to the north side and $v$ to the south side, then we augment the tile with $u$ (see north-west tile of \Cref{fig:augmented})
            \item If $u$ belongs to the east side and $v$ to the west one, then we augment the tile with $v$ (see south-east tile of \Cref{fig:augmented})
            \item Otherwise we do not augment the tile
        \end{itemize}
        Consider now the minimal complete path $\pi$ on the augmented graph with the property that $A_{\pi}$ contains $A_M$. As $A_{\pi}$ contains $A_M$, the orthogonal mechanism always trades when $M$ does it. Conversely, for any valuation $v \in A_M$ the projections on $\pi$ and the boundary of $A_M$ {\em belong to the same tile!}, therefore the prices proposed by the orthogonal mechanism are at most an additive $\eps$ term smaller than the ones charged by $A_M$. This concludes the proof.   
    \end{proof}
    
    For any $M \in \M$, we {refer to} the mechanism $\hat M$ of \Cref{lem:approx} as {\em the associated augmented-grid mechanism with precision $\eps$}. An immediate consequence of \Cref{lem:approx} is that the best mechanism in $\hat \M_{\eps}$ is only an additive $\Theta(\eps)$ factor away from the optimal mechanism. 

    \begin{theorem}
    \label{thm:approximation}
        Fix any $\eps > 0$ such that $\nicefrac{1}{\eps} \in \mathbb{N}$ and any random variable $V$ in $[0,1]^2$ describing the agents' valuations, then there exists $\hat M \in \hat \M_{\eps}$ such that 
        \[
            \sup_{M \in \M} \E{\rev({M})} - \E{\rev(\hat M)} \le 3\eps.
        \]
    \end{theorem}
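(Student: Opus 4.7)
The plan is essentially a one-line reduction to \Cref{lem:approx}. The only subtlety is that $\sup_{M \in \M} \E{\rev(M)}$ need not be attained by any $M \in \M$ (the class is not compact in a way that would guarantee a maximizer, and the revenue functional need not be upper semicontinuous because of possible point masses in $V$), so we cannot in general pick a maximizer and apply \Cref{lem:approx} directly.

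To get around this, I would first invoke the definition of supremum to choose, for the given $\eps>0$, a mechanism $M_\eps \in \M$ whose expected revenue is $\eps$-close to optimal, i.e.,
\[
    \E{\rev(M_\eps)} \ge \sup_{M \in \M} \E{\rev(M)} - \eps.
\]
Then I would apply \Cref{lem:approx} to $M_\eps$ and the same random variable $V$, which yields an augmented-grid mechanism $\hat M_\eps \in \hat \M_\eps$ such that
\[
    \E{\rev(M_\eps)} \le \E{\rev(\hat M_\eps)} + 2\eps.
\]
Chaining the two inequalities immediately gives
\[
    \sup_{M \in \M} \E{\rev(M)} \le \E{\rev(M_\eps)} + \eps \le \E{\rev(\hat M_\eps)} + 3\eps,
\]
which is the desired bound with $\hat M = \hat M_\eps$.

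The main obstacle was conceptual rather than technical: recognizing why the $3\eps$ appears instead of the $2\eps$ coming from \Cref{lem:approx}. If the supremum were always attained, the statement would follow from \Cref{lem:approx} with a $2\eps$ slack; the extra $\eps$ simply absorbs the approximation of the sup by a near-optimal mechanism $M_\eps$. If one can exhibit a maximizer $M^\star$ (e.g., in the stochastic setting with finitely supported $V$ this follows from \Cref{thm:best_mechanism}), the stronger bound $2\eps$ is immediate. Since the statement of the theorem targets arbitrary $V$, the $3\eps$ bound is the natural clean statement.
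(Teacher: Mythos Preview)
Your proposal is correct and matches the paper's approach: the theorem is stated as an immediate consequence of \Cref{lem:approx}, with the extra $\eps$ accounting for passing from the supremum over $\M$ to a near-optimal $M_\eps$ before applying the lemma. Your explanation of why $3\eps$ rather than $2\eps$ is exactly the point.
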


    We move our attention to learnability. We exploit the edge-decomposition of the revenue function (\Cref{prop:decomposition}) to argue that, to achieve an $\eps$-precise approximation of the revenue function on all the mechanisms in  $\hat{\M}_{\eps}$, it is enough to achieve an $\eps^2$ approximation of all the rectangles in the $[0,1]^2$ square (as any path contains $O(\nicefrac 1 \eps)$ edges and each edge $e$ is associated to a rectangular region $A_e$). Since the range space of all the rectangles has constant VC dimension, $n \in \tilde \Omega(\nicefrac{1}{\eps^4})$ samples are sufficient.
    \begin{theorem}
    \label{thm:learnability}
        Fix any failure probability $\delta \in (0,\nicefrac 12)$ and precision $\eps > 0$, and let $V_1, V_2, \dots, V_n$ be $n$ i.i.d. samples from the random variable $V$ describing the agents valuations. If $n \ge \tfrac{256}{\eps^4}\left(32 \log \tfrac {8}{\eps} + \log \tfrac 2\delta \right)$, then the following holds with probability at least $1-\delta$: 
        \[
            |\hat{\mathbb E}[\rev(M)] - \E{\rev(M)}| \le \eps, \, \forall M \in \hat \M_{\eps},
        \]
        where $\hat{\mathbb E}$ denotes the expectation with respect to the empirical distribution on the $n$ samples.
    \end{theorem}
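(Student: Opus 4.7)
The plan is to combine two ingredients: the revenue decomposition of \Cref{prop:decomposition}, which rewrites $\E{\rev(M)}$ as a sum of edge contributions that depend linearly on rectangle probabilities, and a classical VC-uniform-convergence bound for axis-aligned rectangles.

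Fix an arbitrary $M = M_\pi \in \hat{\M}_\eps$ with $\pi$ a complete path in some augmented grid $G_{\eps,S}$, $|S| \le 2/\eps$. Applying \Cref{prop:decomposition} to both the true distribution of $V$ and to the empirical distribution $\hat{\Pb}$ over the $n$ samples yields
\[
    \E{\rev(M)} - \hat{\mathbb E}[\rev(M)] \;=\; \sum_{e \in \pi} w_e \bigl( \P{V \in A_e} - \hat{\Pb}(V \in A_e) \bigr),
\]
where every $A_e$ is an axis-aligned rectangle in $[0,1]^2$ and every $w_e \in [0,1]$. The augmented grid $G_{\eps,S}$ has at most $1/\eps+1+2/\eps\le 4/\eps$ vertical lines and as many horizontal lines, so any complete path uses at most $|\pi| \le 8/\eps$ edges. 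Therefore
\[
    \bigl|\E{\rev(M)} - \hat{\mathbb E}[\rev(M)] \bigr| \;\le\; \frac{8}{\eps} \cdot \sup_{R}\bigl| \P{V \in R} - \hat{\Pb}(V \in R) \bigr|,
\]
where the supremum is over \emph{all} axis-aligned rectangles $R \subseteq [0,1]^2$.

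The key structural observation is that this supremum is uniform over rectangles, and hence automatically uniform over $\hat{\M}_\eps$: the edge-rectangles $\{A_e\}$ arising across \emph{every} augmented grid lie inside the fixed, distribution-free class of axis-aligned rectangles in the plane, whose VC dimension equals~$4$. A standard $\alpha$-sample theorem for range spaces of VC dimension $d$ guarantees that, with probability at least $1-\delta$, the rectangle supremum above is at most $\alpha$ whenever $n \gtrsim \alpha^{-2}\bigl(d \log \alpha^{-1} + \log \delta^{-1}\bigr)$. Setting $\alpha = \eps^2/8$, the right-hand side of the displayed inequality is at most $\eps$, and plugging $d = 4$ and $\alpha = \eps^2/8$ into the sample-size formula yields the stated bound $n \ge \tfrac{256}{\eps^4}\bigl(32 \log(8/\eps) + \log(2/\delta)\bigr)$ after absorbing absolute constants and simplifying the logarithmic term.

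The main conceptual hurdle is precisely the second step: realizing that no union bound over the (uncountably many) augmented grids $G_{\eps,S}$ is required, because a single VC-uniform-convergence statement, applied to the ambient class of axis-aligned rectangles, simultaneously controls the revenue estimate for every $M \in \hat{\M}_\eps$. The rest is routine bookkeeping: counting path lengths in the augmented grid, and substituting the target precision $\eps^2/8$ into the VC $\alpha$-sample bound.
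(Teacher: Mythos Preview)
Your proposal is correct and follows essentially the same route as the paper: apply the edge decomposition of \Cref{prop:decomposition}, bound the number of edges in any complete path of an augmented grid by $O(1/\eps)$, and invoke VC uniform convergence for axis-aligned rectangles (VC dimension $4$) at precision $\Theta(\eps^2)$. The only cosmetic difference is your path-length bound of $8/\eps$ versus the paper's $4/\eps$ (obtained by arguing that each of the $\le 2/\eps$ augmented points adds at most two edges to a base path of length $\le 2/\eps$), which you compensate for by taking $\alpha = \eps^2/8$ instead of $\eps^2/4$.
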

     \begin{proof}
        For any mechanism $\M \in  \hat \M_{\eps}$ let $\cE_{M}$ be the event that the empirical revenue of $M$ on the $n$ samples is $\eps$ close to the actual one:
        \[
            \cE_M = \{|\hat{\mathbb E}[\rev(M)] - \E{\rev(M)}| \le \eps\}.
        \]
        Let $\pi_M$ be the path in the grid-augmented graph associated with the boundary of $M$. Crucially, $\pi_M$ contains at most $\nicefrac 4\eps$ edges, this is because complete paths on $G_{\eps}$ have cardinality at most $\nicefrac 2 \eps$, and each added point adds at most $2$ edges each. 
        If we exploit the decomposition result of \Cref{prop:decomposition}, we can rewrite the difference between the two revenues as follows:
        \begin{align}
        \nonumber
            |\hat{\mathbb E}&[\rev(M)] - \E{\rev(M)}|  = |\sum_{e \in \pi_M} w_e (\hat{\mathbb{P}}(V \in A_e) - \P{V \in A_e})| \\
            &\le \sum_{e \in \pi_M} w_e |\hat{\mathbb{P}}(V \in A_e) - \P{V \in A_e}|
        \label{eq:hatrev}
            \le \frac{4}{\eps} \cdot \max_{e \in \pi_M} |\hat{\mathbb{P}}(V \in A_e) - \P{V \in A_e}|.
        \end{align}
        So the problem has reduced to bounding the error on the class of all the axis-aligned rectangles of the form $A_e$ for some $e.$ Such class has VC dimension of $4$, so by standard learning arguments (e.g., Theorem 14.15 of \citet{Mitzenmacher06}) and by our choice of $n$ we get that, with probability at least $1-\delta$, it holds $            \max_{e \in \pi_M} |\hat{\mathbb{P}}(V \in A_e) - \P{V \in A_e}| \le \frac {\eps^2} 4.$
        Plugging this into \Cref{eq:hatrev} yields the desired result.
    \end{proof}

\subsection{The \augment Algorithm}
\label{sec:adaptive}

    We have all the ingredients to present and analyze our learning algorithm: \augment. At a generic time $t$, it computes $M^\star_t$, the revenue-maximizing mechanism over the $t-1$ valuations observed so far (note, this can be done efficiently because of \Cref{thm:best_mechanism}) and then posts the associated grid-augmented mechanism $M^t$ for a suitable precision parameter $\eps_t$. We refer to the pseudocode for further details. Given $M^\star_t$ it is easy to compute the associated augmented mechanism by following the procedure described in \Cref{lem:approx} and visualized in \Cref{fig:augmented}.
    
\begin{algorithm}[t!]
    \begin{algorithmic}[ht]
        \State \textbf{Environment}: Fixed random variable $V$ from which valuations $\{(v_1^t,v_2^t)\}$ are drawn i.i.d.    
        \State  Propose any mechanism $M^1$ and observe $(v^1_1,v_2^1)$ 
        \For{time $t=2,\ldots,T$}
            \State Set precision parameter $\eps_t = 14 \sqrt[4]{\nicefrac{\log T}{t}}$
            \State Let $M^\star_t$ be the revenue-maximizing mechanism over valuations $(v^1_1,v_2^1), \dots, (v^{t-1}_1,v_2^{t-1})$
            \State Let $M^t \in \hat\M_{\eps}$ be the grid-augmented mechanism with precision $\eps$ associated with $M^\star_t$
            \State Propose mechanism $M^t$ and observe $(v^1_t,v_2^t)$
        \EndFor
    \end{algorithmic}
    \caption*{\textbf{\augment}}
    \label{alg:adaptive}
    \end{algorithm}

    \begin{theorem}\label{thm:stochastic upper}
    Consider the Repeated Joint Ads problem in the stochastic i.i.d. setting; then, there exists a learning algorithm $\A$ such that
            \[
            R_T(\A) \le 35 \sqrt[4]{T^3 \cdot \log T}.
            \]
    \end{theorem}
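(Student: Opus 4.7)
The plan is to bound the expected per-round regret $\E{\rev(M^\star)} - \E{\rev(M^t)}$ for each round $t$, where $M^\star$ denotes the true revenue-maximizer in $\M$. Since the valuations are i.i.d. and $M^t$ depends only on $v^1,\ldots,v^{t-1}$, we have $\E{\revt(M^t)} = \E{\rev(M^t)}$, so summing the per-round gaps yields $R_T(\A)$. I will chain three ingredients at each round: (i) the discretization error bound from Theorem~\ref{thm:approximation}, (ii) the uniform convergence bound from Theorem~\ref{thm:learnability} on the class $\hat\M_{\eps_t}$, and (iii) the empirical optimality of $M^\star_t$ together with the empirical analogue of Lemma~\ref{lem:approx} to compare $M^\star_t$ to its grid-augmentation $M^t$.

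More concretely, fix a round $t$ and let $\hat M^\star_t \in \hat\M_{\eps_t}$ be the augmented-grid mechanism associated with $M^\star$. By Theorem~\ref{thm:approximation}, $\E{\rev(M^\star)} \le \E{\rev(\hat M^\star_t)} + 3\eps_t$. Conditional on the uniform convergence event (which holds with probability at least $1-\delta_t$ once the sample complexity of Theorem~\ref{thm:learnability} is met), we may move between true and empirical revenues on every element of $\hat\M_{\eps_t}$ at additive cost $\eps_t$. Now, by Theorem~\ref{thm:best_mechanism} applied to the empirical distribution, $M^\star_t$ empirically dominates every orthogonal mechanism, in particular $\hat M^\star_t$; and by Lemma~\ref{lem:approx} applied to the empirical distribution, $\hat{\mathbb E}[\rev(M^t)] \ge \hat{\mathbb E}[\rev(M^\star_t)] - 2\eps_t$. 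Chaining these inequalities gives a clean telescoping bound $\E{\rev(M^\star)} - \E{\rev(M^t)} = O(\eps_t)$ conditional on the good event.

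Summing the per-round bound, with $\eps_t = 14\sqrt[4]{\log T / t}$, yields
\[
\sum_{t=1}^T \eps_t \;=\; O(1)\cdot \sqrt[4]{\log T}\sum_{t=1}^T t^{-1/4} \;=\; O\bigl(\sqrt[4]{T^3 \log T}\bigr),
\]
which matches the target rate. The failure probabilities are handled by a union bound: setting $\delta_t = 1/T^2$ keeps the total failure mass at $O(1/T)$, and bounds per-round revenue trivially by a constant on that event. For the first handful of rounds in which the sample-complexity condition of Theorem~\ref{thm:learnability} is not yet satisfied (i.e.\ the regime where $t-1 < (256/\eps_t^4)(32\log(8/\eps_t) + \log(2/\delta_t))$), I would use the trivial per-round regret bound of $2$; the choice of constant $14$ in $\eps_t$ is precisely what makes this initial regime short enough to contribute only a lower-order $O(\sqrt{T})$ term.

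The main obstacle is the joint calibration: the schedule $\eps_t$ must simultaneously (a) be large enough so that $t-1$ samples suffice for uniform convergence on $\hat\M_{\eps_t}$ at tolerance $\eps_t$ (an $\tilde\Omega(1/\eps^4)$ sample requirement, since the rectangle VC bound gives $\eps^2$ uniform control on rectangle probabilities and Theorem~\ref{thm:learnability} amplifies this to an $\eps$ tolerance through the $O(1/\eps)$ edge-count of paths in $\hat\M_\eps$), and (b) be small enough that $\sum_t \eps_t$ stays at $T^{3/4}$. Balancing these two constraints forces the exponent $1/4$ in $\eps_t$, and the constant in front must absorb both the $256$ inside Theorem~\ref{thm:learnability} and the $\log(8/\eps_t)$ factor; once this calibration is verified, the rest of the argument is a straightforward sum.
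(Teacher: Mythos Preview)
Your proposal is correct and follows essentially the same route as the paper: the same chain (Theorem~\ref{thm:approximation} $\to$ Theorem~\ref{thm:learnability} $\to$ empirical optimality of $M_t^\star$ $\to$ Lemma~\ref{lem:approx} $\to$ Theorem~\ref{thm:learnability} again) yields a per-round bound of $O(\eps_t)$, and summing with $\eps_t = 14\sqrt[4]{\log T/t}$ gives the claimed rate. The paper sets $\delta=1/T$ rather than $1/T^2$ and does not separate out an initial regime (the constant $14$ is chosen so that the sample-complexity condition holds from $t=2$), but these are cosmetic differences.
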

    \begin{proof}
        We prove that \augment yields the desired regret bound in the stochastic case. For all $t = 2, \dots, T$, we denote with $\cE_t$ the event that all the mechanisms in $\hat \M_{\eps_t}$ are well approximated by the first $t-1$ samples:
        \[
            \cE_t = 
            \left\{
                |\hat{\mathbb E}_{t-1}[\rev(M)] - \E{\rev(M)}| \le \eps_{t}, \,\forall M \in \hat \M_{\eps_t}
            \right\},
        \]
        where $\hat{\mathbb E}_{t-1}$ denotes the empirical expectation with respect to the first $t-1$ samples. By our choice of $\eps_t$ and \Cref{thm:learnability}, we have that the $\cE_t$ are realized with high probability, i.e., 
        \begin{equation}
            \label{eq:clean}
            \P{\mathcal E_t} \ge 1 - \nicefrac 1T.    
        \end{equation}

        Now focus on what happens at time $t$, conditioning on $\cE_t$ happening. We have:
        \begin{align*}
                \sup_{M \in \M} \E{\rev_t(M)} &\le \E{\rev_t(\hat M)} + 3 \eps_t \tag{for some $\hat M \in \hat \M_{\eps_t}$, by \Cref{thm:approximation}}\\
                &\le \hat{\mathbb E}_{t-1}[{\rev_t(\hat M)}] + 4\eps_t \tag{Conditioning on $\cE_t$}\\
                &\le \hat{\mathbb E}_{t-1}[{\rev_t(M_t^{\star})}] + 4\eps_t \tag{By optimality of $M_t^{\star}$ w.r.t. $\hat{\mathbb{E}}_t$}\\
                &\le \hat{\mathbb E}_{t-1}[{\rev_t(M^t)}] + 6\eps_t \tag{By definition of $M^t$ and \Cref{lem:approx}}\\
                &\le{\mathbb E}[{\rev_t(M^t)}] + 7\eps_t \tag{Conditioning on $\cE_t$}
            \end{align*}
            Combining this chain of inequality with the bound on the probability of $\cE_t$ in \Cref{eq:clean}, we get that the instantaneous regret suffered by \augment is at most:
            \[
                \sup_{M \in \M} \E{\rev_t(M)} - 
                \E{\rev_t(M^t)} \le \left(7 \eps_t + \frac 2T\right) \le 8 \eps_t.
            \]
            Overall, the regret is upper bounded by $8\sum_{t=1}^T \eps_t$, which verifies the bound in the statement. 
    \end{proof}

 \section[Adversarial Lower Bound]{Adversarial Lower Bound } 
    \label{sec:adversarial_lower}

    We prove a lower bound on the regret when the sequence of valuations is generated adversarially. In particular, we show a stronger result: for any $\eps$, no learning algorithm suffers sublinear $(2-\eps)$-regret.   
    \begin{theorem}\label{thm:lb-adversarial}
        For any constant $\eps \in (0,1)$ and learning algorithm $\mathcal A$, there exists an adversarial instance of the Repeated Joint Ads problem such that the following inequality holds:
        \[
            \sup_{M \in \M} \sum_{t=1}^T \revt(M) - (2-\eps)\sum_{t=1}^T\E{ \revt(M_t)} \ge \frac{\eps^2}8 T.
        \]
    \end{theorem}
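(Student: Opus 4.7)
The plan is to establish the lower bound via a needle-in-a-haystack construction. I would exhibit a (possibly randomized) oblivious adversarial instance where the best fixed mechanism extracts close to full revenue per round while any learning algorithm is forced to extract at most roughly half. By Yao's minimax principle, it suffices to construct a distribution over oblivious sequences for which no deterministic algorithm has small expected regret.

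First, I would fix $N = \Theta(1/\eps)$ and construct a family of ``needle'' pairs $\{(v_k, M^*_k)\}_{k=1}^N$ in $[0,1]^2$ with two structural properties: (P1) each $M^*_k$ has its allocation boundary passing through $v_k$, so it extracts full revenue $v_k^1 + v_k^2 \geq 1 - O(\eps)$ on $v_k$; and (P2) the cross-interaction is bounded, meaning that for any single mechanism $M \in \M$ the average revenue $\frac{1}{N} \sum_{k=1}^N \mathrm{Rev}(M, v_k)$ is at most $\frac{1}{2} + O(\eps)$. Property (P2) reflects the fact that no single monotone allocation boundary can simultaneously match $N$ distinct needle boundaries without losing critical price at each point.

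The adversary then draws a hidden index $k^* \in \{1, \ldots, N\}$ uniformly at random and plays a sequence of valuations compatible with $M^*_{k^*}$ such that (a) $M^*_{k^*}$ extracts full revenue per round, giving $\E{\mathrm{OPT}} \geq T(1 - O(\eps))$; and (b) the observations reveal negligible information about $k^*$. For (b), a natural approach is to play valuations at common intersection points of all needle boundaries so that observations are consistent with every $k^*$; specialization of $M^*_{k^*}$ is then revealed only through behavior at points the adversary does not play. Given property (P2), the learner's expected per-round revenue is at most $\frac{1}{2} + O(\eps)$, yielding $\E{\mathrm{ALG}} \leq T(\frac{1}{2} + O(\eps))$. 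The desired regret bound $\mathrm{OPT} - (2-\eps)\E{\mathrm{ALG}} \geq \eps^2 T / 8$ then follows by routine algebra, noting $1 - (2-\eps)/2 = \eps/2 \geq \eps^2/8$ for $\eps \in (0,1)$ once the $O(\eps)$ slacks are absorbed.

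The hard part will be establishing design (b): ensuring observations do not leak the needle identity while $\mathrm{OPT}$ still extracts full revenue per round. The naive construction of playing $v_{k^*}$ at every round fails, since the learner would identify $k^*$ after a single observation. Exploiting the non-Lipschitz behavior of the revenue function with respect to the allocation boundary (as emphasized in the challenges section) is crucial---one would design needles that agree on a ``public'' portion of the boundary probed by the played valuations but differ on a ``private'' portion whose location determines critical prices, and hence revenue, at the distinguishing valuation $v_{k^*}$. The combinatorial property (P2) is similarly delicate and would rely on a careful geometric argument about how monotone decreasing boundaries in $[0,1]^2$ can jointly cover a prescribed family of discrete points while respecting the Myersonian critical-price computation at each.
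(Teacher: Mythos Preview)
Your proposal has a genuine gap that makes the approach unworkable in the full-feedback model of this paper. You want to hide a single index $k^*$ from the learner while playing valuations at ``common intersection points'' of all needle boundaries. But if the played valuations do not depend on $k^*$ (so as not to leak it), then the valuation sequence is the same for every $k^*$, hence so is $\mathrm{OPT}$; every $M^*_k$ extracts the same revenue on that sequence, and the learner---who under Yao's principle knows the adversary's distribution and therefore the family $\{M^*_k\}$---can simply play $M^*_1$ every round and match $\mathrm{OPT}$. The randomness over $k^*$ becomes vacuous, and property~(P2) never bites. Conversely, if the played valuations do depend on $k^*$, the learner observes them and identifies $k^*$ after one round. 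Either way, a single hidden index cannot sustain uncertainty across $T$ rounds when the learner sees the valuations.

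The paper's construction is structurally different and avoids this trap: it uses \emph{per-round} randomness rather than a single hidden index. At each step $t$, a fresh coin (heads with probability $\zeta$) produces either $(b_t,1)$ or $(a_t,\zeta)$, where $a_t,b_t\in(0,\delta)$ are updated recursively with geometrically shrinking perturbations so that, for every realization, a single threshold $\tau$ separates all the $b_t$'s (from heads rounds) from all the $a_t$'s (from tails rounds). This guarantees a fixed mechanism in hindsight with expected per-round revenue at least $\zeta(2-\zeta)$. Crucially, at time~$t$ the learner knows $a_t,b_t$ but not the coin: since $(b_t,1)$ dominates $(a_t,\zeta)$, any mechanism either allocates to both (revenue $\le a_t+\zeta\le\delta+\zeta$) or only to the high one (expected revenue $\le \zeta(b_t+1)\le\delta+\zeta$). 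Setting $\zeta=\eps/2$ and $\delta=\eps^2/16$ yields the bound. The key idea you are missing is that the unpredictability must be refreshed every round, with the recursive $(a_t,b_t)$ construction doing the delicate work of ensuring a single good mechanism still exists in hindsight for every realization.
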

    \begin{proof}
        Let $\delta$ and $\zeta$ be small positive parameters we set later. We prove this lower bound via Yao's minimax principle: we construct a randomized instance against which any deterministic algorithm suffers expected linear $(2-\eps)$ regret. The randomized instance is built incrementally, together with an auxiliary sequence $(a_t,b_t)$, as follows: $a_1 = \nicefrac {\delta} 3$, $b_1 = \nicefrac {2\delta} 3$, then for any time step $t$: 
        \begin{itemize}
            \item with probability $\zeta$, the valuations are $(v_1^t, v_2^t) = (b_t, 1)$, and the auxiliary sequence is updated as follows: $(a_{t+1}, b_{t+1}) = (b_t + \Delta_{t+1}, b_t + 2\Delta_{t+1})$, where $\Delta_{t+1} = \nicefrac{\delta}{3^{t+1}}$
            \item with the remaining probability $1-\zeta$, the valuations are $(v_1^t, v_2^t) = (a_t, \zeta)$, and the next pair in the auxiliary sequence is $(a_{t+1}, b_{t+1}) = (a_t - \Delta_{t+1}, a_t - 2\Delta_{t+1})$.
        \end{itemize}

        This random sequence induces a {\em well defined} sequence of valuations and exhibits the key property that {\em for any realization}, there exists a threshold $\tau$ that separates all the valuations of the form $(a_t,\zeta)$ and all those of the form $(b_t,1)$. Formally, denote with $R$ the time steps characterized by a $(a_t,\zeta)$ valuation, and with $L$ the ones corresponding to $(b_t,1)$ valuations. We have the following result, whose formal proof is deferred to 
        \Cref{app:adversarial lower bound}.
        \begin{restatable}{proposition}{properties}
        \label{prop:properties}
            For any realization of the valuations and auxiliary sequence, we have:
            \begin{itemize}
                \item[(i)] $a_t \in [0,\delta]$ and $b_t \in  [0,\delta]$, for all $t = 1,\ldots, T$.
                 \item[(ii)] there exists a value $\tau \in  (0,\delta)$ such that $b_t < \tau$ for all $t \in R$, while $a_t \ge \tau$ for all $t \in L$. 
            \end{itemize}
        \end{restatable}

        The second point of \Cref{prop:properties} implies the existence of an optimal mechanism $M^\star$ that extracts the maximum revenue possible at each time step (the sum of the valuations): it accepts all the bids of the form $(x,1)$ for $x \le \tau$, and all the bids of the form $(x,y)$ with $x > \tau$ and $y \ge \zeta.$
        In each iteration, the expected revenue that $M^\star$ extracts is at least $ \E{\rev_t(M^\star)} \ge \zeta \cdot 1 + (1-\zeta) \cdot \zeta = \zeta \cdot (2-\zeta)$.
        
        Consider now any deterministic algorithm and its expected revenue at time $t$. Fix the history of the sequence up to time $t$ (excluded); there are two options for the next valuation: it is either $(b_t,1)$ with probability $\zeta$, or $(a_t,\zeta)$ with the remaining probability. By design, it holds that the possible valuation $(b_t,1)$ would dominate the other possible valuation $(a_t,\zeta)$. therefore any mechanism that the learning algorithm may post can do one of two things: either make the trade happen for both possible valuations $(b_t,1)$ and $(a_t,\zeta)$ (with expected revenue of at most $(a_t + \zeta)$), or only for $(b_t,1)$ (with expected revenue of at most $\zeta(b_t + 1)$). All in all, the expected revenue is at most $\E{\revt(M^t)} \le \max\{(a_t + \zeta), \zeta(b_t + 1)\} \le \delta + \zeta$, where for the second inequality we used that $a_t,b_t \le \delta$ by point (i) of \Cref{prop:properties}, so that $(a_t + \zeta) \le \delta + \zeta$ and $\zeta(b_t + 1) \le  \zeta + \zeta \delta \le \zeta + \delta$.

        Summing up over all the time steps $t$, and setting $\zeta = \nicefrac \eps2$, and $\delta = \nicefrac{\eps^2}{16}$, the expected $(2-\eps)$-regret of any deterministic algorithm is at least $          \zeta (2-\zeta) T - (2-\eps) (\delta + \zeta) T \ge \frac{\eps^2}8 T$.
        \end{proof}

\section{The Smooth Adversary } 
    \label{sec:smooth_upper}

    In this section, we develop an efficient learning algorithm that achieves $O(T^{\nicefrac 23})$ regret against a $\sigma$-smooth adversary. First, we present a discretization result for the mechanisms supported on uniform grids. Second, we exploit this discretization to explicitly design our learning algorithm following a multiplicative-weight-update approach. Typically, such an approach would require the learner to maintain an array of weights, one for each expert/mechanism; this is, however, infeasible in our case due to the combinatorial nature of our mechanism space. We show how to sample from the desired distribution while only maintaining an exponentially smaller number of weights.

    \begin{figure}[t!]
    \centering
    	\includegraphics[width = 0.5 \textwidth]{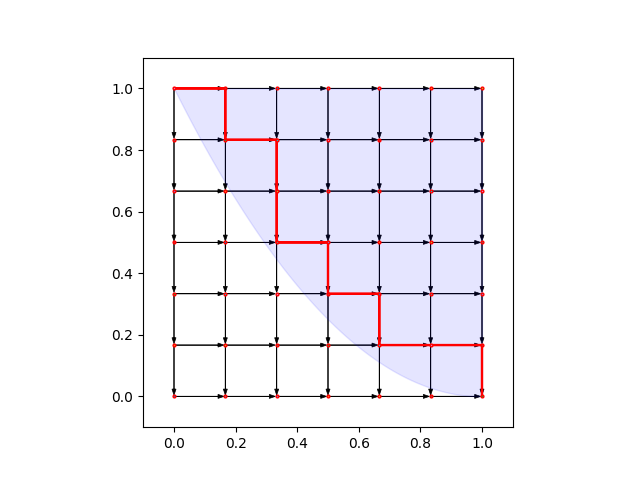}
    		\caption{\small Visualization of the grid $\Geps$ for $\eps = \nicefrac 16$. The shaded area represents the allocation region of some mechanism, while the red path is the corresponding inner hull.} 
    	\label{fig:inner}
    \end{figure}

     \subsection{The Discretization Lemma}
    We start by showing that against a $\sigma$-smooth adversary, it suffices to consider all mechanisms $\mathcal{M}_\eps$ on the uniform $G_{\eps}$ grid to obtain a small discretization error. This fact contrasts sharply with the case where we don't impose smoothness; see \Cref{ex:counter-example}. Given the smooth nature of the adversary, we can ignore what happens in regions of zero Lebesgue measure; therefore, we restrict our attention to complete paths in $G_{\eps}$ that start from $(0,1)$ and terminate in $(1,0)$.
    \begin{lemma}[Discretization error for $\sigma$-smooth adversaries]  \label{lem:discretization_smooth}
        For any $\eps > 0$, any $\sigma \in (0,1]$, and any sequence of $\sigma$-smooth distributions, the following inequality holds: 
        \[
            \sup_{M \in \mathcal{M}} \sum_{t=1}^{T} \E{\revt(M)} - \max_{M_{\eps} \in \mathcal{M}_{\eps}} \sum_{t=1}^{T} \E{\revt(M_{\eps})} \le \frac{5}{\sigma}\eps T. 
        \]
    \end{lemma}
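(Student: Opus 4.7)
The plan is to construct, for each mechanism $M \in \M$, a single \emph{inner hull} mechanism $M_\eps \in \M_\eps$ (independent of $t$) whose allocation region $A_{M_\eps}$ is the union of all tiles of $\Geps$ entirely contained in $A_M$. As suggested by \Cref{fig:inner}, monotonicity of $A_M$ is inherited by this union, so its ``north-east'' boundary is a complete path in $\Geps$; since any $\sigma$-smooth distribution assigns zero mass to $\{v_1=0\}\cup\{v_2=0\}$, up to measure-zero modifications we may take this path to start at $(0,1)$ and end at $(1,0)$, so $M_\eps$ is indeed in $\M_\eps$.

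The central observation is that the per-round revenue comparison reduces to valuations falling in $A_M \setminus A_{M_\eps}$. Indeed, for every $v \in A_{M_\eps} \subseteq A_M$ both mechanisms allocate, and Myerson's critical-price formula from \Cref{sec:DSIC} gives $p_i^{M_\eps}(v) \ge p_i^M(v)$ for $i \in \{1,2\}$, since shrinking the allocation region can only push the critical prices upward. Hence the revenue on $A_{M_\eps}$ is (weakly) larger under $M_\eps$, and the only loss comes from the \emph{strip} $A_M \setminus A_{M_\eps}$, on which the $M$-payments are at most $p_1^M(v) + p_2^M(v) \le 2$. This gives the per-round bound
\[
    \E{\revt(M)} - \E{\revt(M_\eps)} \le 2 \cdot \P{V^t \in A_M \setminus A_{M_\eps}}.
\]

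Next I would bound $\P{V^t \in A_M \setminus A_{M_\eps}}$ via smoothness combined with a geometric counting argument. The set $A_M \setminus A_{M_\eps}$ is contained in the union of the tiles of $\Geps$ that the monotone boundary of $A_M$ traverses. A monotone non-increasing curve in $[0,1]^2$ crosses each of the $\nicefrac{1}{\eps}-1$ interior vertical grid lines at most once and similarly for the horizontal ones, so it visits at most $2/\eps$ tiles, yielding a Lebesgue area at most $2\eps$. By $\sigma$-smoothness we conclude $\P{V^t \in A_M \setminus A_{M_\eps}} \le 2\eps/\sigma$, so the per-round revenue loss is at most $4\eps/\sigma$.

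Summing over $t = 1, \ldots, T$ and using that $M_\eps$ was fixed as a function of $M$ alone, for every $M \in \M$ we get $\sum_t \E{\revt(M)} \le \sum_t \E{\revt(M_\eps)} + 4\eps T/\sigma \le \max_{M'_\eps \in \M_\eps} \sum_t \E{\revt(M'_\eps)} + 4\eps T/\sigma$; taking the supremum over $M$ yields the claim with $4/\sigma \le 5/\sigma$ of slack to absorb any care needed at boundary tiles. The main delicate step is the monotonicity-of-payments claim $p_i^{M_\eps}(v) \ge p_i^M(v)$: it rests on $A_{M_\eps} \subseteq A_M$ and requires treating horizontal vs.\ vertical critical prices symmetrically, but otherwise follows directly from the geometric view of Myerson's rule in \Cref{sec:DSIC}.
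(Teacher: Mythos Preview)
Your proposal is correct and follows essentially the same approach as the paper: construct the inner-hull mechanism on $\Geps$, use that shrinking the allocation region only increases Myerson critical prices so the loss is confined to the strip $A_M\setminus A_{M_\eps}$, bound the strip's area by $2\eps$ via the at-most-$2/\eps$-tiles counting argument, and convert to probability via $\sigma$-smoothness. The only cosmetic difference is that the paper first passes to a near-optimal $M^\star$ (incurring an extra $\eps T$ term, which is where the fifth $\eps T/\sigma$ comes from), whereas you work with arbitrary $M$ and take the sup at the end, getting $4\eps T/\sigma$ directly.
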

    \begin{proof}
        Fix $\eps$, the sequence of $\sigma$-smooth distributions as in the statement, and consider any mechanism $\M^\star$ such that its total expected revenue is only at most an additive $\eps T$ far from the one of the $\sup$.
        We define the inner-hull $H$ of $M^\star$ as the set of all the tiles in the grid $G_{\eps}$ contained in the allocation region $A^\star$ of $M^\star$ (see \Cref{fig:inner}). Let $M$ be the mechanism in $\M_{\eps}$ whose allocation region is the inner-hull $H$ of $M^\star$.       
        For any time step $t$,  we have the following inequality:
        \begin{align*}
            \E{\revt(M^\star)} &=  \E{\revt(M^\star) \ind{(v_1^t,v_2^t) \in H}} +  \E{\revt(M^\star) \ind{(v_1^t,v_2^t) \in A^\star\setminus H}}\\
            &\le \E{\revt(M)} + 2\P{(v_1^t,v_2^t) \in A^\star\setminus H} \tag*{\text{($\revt(M^\star) \le 2$)}} \\ 
            &\le \E{\revt(M)} + \tfrac{2}{\sigma}\nu(A^\star\setminus H)\le  \E{\revt(M)} + 4\tfrac{\eps}{\sigma}. \tag*{\text{($\sigma$-smoothness)}}  
        \end{align*}
        Crucially, in the first inequality, we used that all the valuations in $H\subseteq A^\star$ result in trade under both mechanisms, but $M$ charges higher prices. Finally, the Lebesgue measure $\nu$ of $A^\star\setminus H$ is at most $2\eps$ because it is contained in at most $\frac{2}{\eps}$ tiles of $G_{\eps}$ (all the tiles whose interior have a nonempty intersection with the boundary of $A^\star$) of area $\eps^2$.
    \end{proof}

    \subsection{Path Learning Algorithm }
    
        The discretization result in \Cref{lem:discretization_smooth} and the bijection between the mechanisms $\M_{\eps}$ and the set of all complete paths $\mathcal P_{\eps}$ on $\Geps$, implies that it is enough for the learner to achieve sublinear regret with respect to the best mechanism in a large but {\em finite} class of candidates. We exploit this in our algorithm \pL, which is simply an instantiation of the well-known Hedge algorithm where each path in $\mathcal P_{\eps}$ is an expert. We refer to the pseudocode for further details and to \Cref{sec:sampling} for an efficient implementation of the sampling procedure. Before presenting the Theorem, which analyzes \pL, we observe that a simple combinatorial argument yields an upper bound on the number of mechanisms in $\M_{\eps}$. The formal proof is deferred to \Cref{app:smooth_upper}.
        \begin{lemma}
        \label{lem:N_e}
            For any $\eps > 0$, the number $N_{\eps}$ of mechanisms in $\M_{\eps}$ is at most $2\cdot 4^{\nicefrac1{\eps}}.$
        \end{lemma}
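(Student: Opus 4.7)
The plan is to upper-bound $N_\eps$ by the total number of complete paths in $G_\eps$, evaluate this count exactly via two applications of the hockey-stick identity, and then bound the resulting binomial coefficient by $2\cdot 4^{\nicefrac{1}{\eps}}$. By \Cref{prop:orthogonal}, every mechanism $M \in \M_\eps$ is induced by at least one complete path in $G_\eps$, so it suffices to count complete paths.

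Let $m = \nicefrac{1}{\eps}$. A complete path in $G_\eps$ starts at some node $(i\eps, 1)$ on the north side and ends at some node $(1, j\eps)$ on the east side, with $i, j \in \{0, 1, \ldots, m\}$; the number of monotone down-right lattice paths from $(i\eps, 1)$ to $(1, j\eps)$ has exactly $m-i$ right moves and $m-j$ down moves, and is therefore $\binom{2m - i - j}{m - i}$. Changing variables $a = m - i$ and $b = m - j$, the total number of complete paths satisfies
\[
    N_\eps \;\le\; \sum_{a=0}^{m} \sum_{b=0}^{m} \binom{a + b}{a}.
\]
Using the hockey-stick identity on the inner sum gives $\sum_{a=0}^m \binom{a+b}{a} = \binom{m+b+1}{b+1}$, and a second hockey-stick application on the outer sum collapses the expression to $N_\eps \le \binom{2m+2}{m+1} - 1$.

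It remains to prove $\binom{2m+2}{m+1} \le 2 \cdot 4^m$. Pascal's rule together with the symmetry $\binom{2m+1}{m} = \binom{2m+1}{m+1}$ yields $\binom{2m+2}{m+1} = 2 \binom{2m+1}{m}$, so it suffices to show $\binom{2m+1}{m} \le 4^m$. For this I would invoke the symmetry of odd binomial rows: $\sum_{k=0}^{m}\binom{2m+1}{k} = \tfrac{1}{2}\sum_{k=0}^{2m+1}\binom{2m+1}{k} = 2^{2m} = 4^m$, and since $\binom{2m+1}{m}$ is one of the nonnegative summands on the left, $\binom{2m+1}{m} \le 4^m$. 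Combining the steps gives $N_\eps \le 2 \cdot 4^m - 1 < 2\cdot 4^{\nicefrac{1}{\eps}}$, as claimed.

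The main obstacle is the algebraic collapse of the double sum into a single binomial coefficient; the repeated hockey-stick identity handles it cleanly. The slightly subtle point is the odd-row symmetry used at the end: a cruder application of the standard bound $\binom{2n}{n} \le 4^n$ at $n = m+1$ would only give $4 \cdot 4^{\nicefrac{1}{\eps}}$, so the tight factor of $2$ in the lemma really relies on splitting $\binom{2m+2}{m+1}$ through Pascal's rule first.
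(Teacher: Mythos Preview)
Your argument is correct, but it takes a longer road than the paper's. The paper exploits the remark made just before the lemma that, in the smooth setting, one may restrict attention to complete paths in $G_\eps$ that start at $(0,1)$ and end at $(1,0)$. With that restriction, every such path consists of exactly $m=\nicefrac{1}{\eps}$ right moves and $m$ down moves, so $N_\eps \le \binom{2m}{m}$ directly, and the bound $\binom{2m}{m}\le 2\cdot 4^m$ is obtained via Stirling. There is no double sum and no hockey-stick identity.

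Your approach instead counts \emph{all} complete paths (arbitrary start on the north side, arbitrary end on the east side), which is why you need the two hockey-stick collapses to reach $\binom{2m+2}{m+1}-1$. This is more work, but it has two mild advantages: it proves the bound for the unrestricted $\M_\eps$ as well (not just the $(0,1)$--$(1,0)$ subclass), and your final step via Pascal plus the odd-row symmetry $\sum_{k=0}^m\binom{2m+1}{k}=4^m$ is fully elementary, whereas the paper appeals to Stirling. Either route is fine here; the paper's is simply shorter because it leans on the contextual restriction you did not invoke.
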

       \begin{proof}
            To show this, we consider the number of "down-right" paths
            in $\Geps$ from $(0,1)$ to $(1,0)$, as we know that there is a bijection between these paths and the mechanisms in $\M_{\eps}$. Each path can be described by specifying which of its $\nicefrac{2}{\eps}$ edges are the $\nicefrac{1}{\eps}$ vertical edges and which are the $\nicefrac{1}{\eps}$ horizontals. All in all: 
            \[
                N_{\eps} = \binom{\nicefrac{2}{\eps}}{\nicefrac{1}{\eps}} \le 2\cdot 4^{\nicefrac1{\eps}},
            \]
            where the inequality follows from Sterling's inequalities (e.g., Formula 9.15 of \citet{Feller67}).
        \end{proof}
        We now have all the ingredients to prove the regret bound provided by $\pL$. We present a way to efficiently implement the algorithm in \Cref{sec:sampling}
        \begin{theorem}\label{thm:hedge}
            Consider the Repeated Joint Ads problem against the $\sigma$-smooth adversary; then there exists a learning algorithm $\A$ such that
            \[
                R_T(\A) \le    \frac {13}{\sigma}\cdot T^{\nicefrac 23}.
            \]
        \end{theorem}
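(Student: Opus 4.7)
The plan is to reduce the problem to no-regret learning over the finite set $\M_{\eps}$, pay the discretization cost from \Cref{lem:discretization_smooth}, and then optimize $\eps$. Since \pL is an instantiation of Hedge with the paths in $\mathcal P_{\eps}$ (equivalently, the mechanisms in $\M_{\eps}$) as experts, and since the per-round reward $\revt(M) \in [0,2]$ for any mechanism, a standard Hedge analysis (with an appropriate choice of learning rate) yields an expected regret of
\[
    \sum_{t=1}^{T}\sup_{M_\eps\in\M_\eps}\E{\revt(M_\eps)}- \sum_{t=1}^T \E{\revt(M^t)} \;\le\; 2\sqrt{2 T \log N_{\eps}}
\]
against the best fixed expert. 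Using \Cref{lem:N_e}, we have $\log N_\eps \le \log 2 + (1/\eps)\log 4$, so this Hedge term is $O(\sqrt{T/\eps})$.

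Next I would plug in \Cref{lem:discretization_smooth} to pass from the best expert in $\M_\eps$ to the true supremum over $\M$, paying an additive discretization cost of $\tfrac{5}{\sigma}\eps T$. Combining,
\[
    R_T(\A) \;\le\; 2\sqrt{2 T \log N_\eps} + \frac{5}{\sigma}\eps T.
\]

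Finally, I would balance the two terms by choosing $\eps = \Theta(\sigma^{2/3} T^{-1/3})$; substituting into the display gives a bound of order $\tfrac{1}{\sigma} T^{2/3}$, and a routine verification of the constants produces the claimed $\tfrac{13}{\sigma}T^{2/3}$.

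\textbf{Main obstacle.} The mathematical content is light once \Cref{lem:discretization_smooth} and \Cref{lem:N_e} are in hand: everything reduces to a standard Hedge regret bound plus a tuning of $\eps$. The genuine difficulty, which the theorem statement itself does not address but the paper mentions, is \emph{efficiency}: a naive implementation of Hedge would maintain one weight per path, and there are $N_\eps = 2^{\Theta(1/\eps)}$ such paths. This is deferred to \Cref{sec:sampling}, where weights on edges of $\Geps$ (rather than on entire paths) are used to sample a path proportionally to the product of its edge weights; I would not need this for the regret bound itself, but it is what makes the claim of an \emph{efficient} algorithm meaningful.
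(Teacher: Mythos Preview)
Your approach is essentially the paper's: apply the discretization bound (\Cref{lem:discretization_smooth}), run Hedge over the finite class $\M_\eps$, plug in \Cref{lem:N_e}, and tune $\eps$; the paper simply sets $\eps=T^{-1/3}$ and uses $\sigma\le 1$ to absorb the Hedge term $8\sqrt{T/\eps}=8T^{2/3}\le \tfrac{8}{\sigma}T^{2/3}$, whereas your $\eps=\Theta(\sigma^{2/3}T^{-1/3})$ would in fact give the sharper $O(\sigma^{-1/3}T^{2/3})$, though either suffices for the stated bound. One slip to fix: in your displayed Hedge inequality the $\sup_{M_\eps}$ must sit \emph{outside} the sum over $t$ (best fixed expert in hindsight), not inside; Hedge does not control the per-round-best benchmark, but since \Cref{lem:discretization_smooth} is also stated with the $\sup$ outside, once you move it the chain of inequalities is exactly the paper's.
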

        \begin{proof}
            We prove that \pL with the right choice of the discretization and learning parameters verifies the statement of the Theorem. To this end, we combine \Cref{lem:discretization_smooth} with the well-known regret bound of Hedge  and optimize the two parameters:
            \begin{align*}
                R_T(\pL) &= \sup_{M \in \mathcal{M}} \E{\sum_{t=1}^{T} \revt(M) - \sum_{t=1}^T \revt(M^t)} \\
                &\le  \frac{5}{\sigma}\eps T + \max_{M_{\eps}\in \mathcal{M}_{\eps}}  \E{\sum_{t=1}^{T} \revt(M_{\eps}) - \sum_{t=1}^T \revt(M^t)}\\
                &\le \frac{5}{\sigma}\eps T +  4\sqrt{T \log N_{\eps}} \le  \frac{5}{\sigma}\eps T +  8\sqrt{\frac{T}{\eps}} \le \frac {13}{\sigma}\cdot T^{\nicefrac 23},
            \end{align*}
            where we used the standard bound of Hedge when the rewards are bounded in $[0,2]$ (see, e.g., Theorem 2.5 of \citet{AroraHK12} for $\eta = T^{- \nicefrac 12}$), \Cref{lem:N_e} and we set $\eps = T^{-\nicefrac 13}$.
        \end{proof}

        \begin{algorithm}[t!]
        \begin{algorithmic}[1]
            \State \textbf{Environment}: Sequence of $\sigma$-smooth random variables $\{(V_1^t,V_2^t)\}$
            \State \textbf{Input:} time horizon $T$, discretization parameter $\eps \in (0,1)$ and learning parameter $\eta \in (0,1)$
            \State For each path $\pi \in \mathcal P_{\eps}$ set $w^1_\pi = 1$ and $q^1_\pi = \frac{w^1_\pi}{W_1}$, where $W_1 = \sum_{\pi \in \mathcal P_{\eps}}w^1_\pi$. \label{line:weights_1}
            \For{each round $t=1,2,\dots,T$}
                \State Draw a path $\pi^t \in \mathcal P_{\eps}$ according to the distribution $q^t$\label{line:sampling}
                \State Implement the mechanism $M^t=M_{\pi^t}$ and observe $(v_1^t,v_2^t)\sim (V_1^t,V_2^t)$
                \State For each $\pi \in \mathcal P_{\eps}$ set $w_\pi^t = w_\pi^t \cdot \exp{[\eta \cdot  \revt({M_\pi})]}$ and $q^t_\pi = \frac{w^t_\pi}{W_t}$, where $W_t = \sum_{\pi \in \mathcal P_{\eps}}w^t_\pi$. \label{line:weights_2}
            \EndFor
        \end{algorithmic}
        \caption*{\textbf{\pL}}
        \end{algorithm}

    \subsection{Sampling Efficiently: From Paths to Edges }\label{sec:sampling}
    
        We show how to efficiently implement the sampling step of \pL (line \ref{line:sampling}) without the need to maintain exponentially many weights (lines \ref{line:weights_1} and \ref{line:weights_2}). We do this by constructing a sampling scheme that only maintains weights for the edges of $\Geps$, similarly to what is done, e.g., for online shortest path in \citet{TakimotoW03} and \citet{GyorgyLL05}. The crucial idea is to maintain conditional-probability weights on the edges so that it is possible to sample paths from the desired distribution (line \ref{line:sampling} of \pL) ``one edge at a time'' using these edge weights. 

        \begin{theorem}
        \label{thm:sampling}
            For any time step $t$  and sequence of valuations up to time $t$, excluded, there exist edge probabilities $q^t_e$ such that the following three properties hold: 
            \begin{itemize}
                \item[(i)] $q_e^t \in [0,1]$ for all $e$ edges in $\Geps$,
                \item[(ii)] $\sum_{e \in E_u^+}q_e^t = 1 $ for all node $u$ in $\Geps$, where $E_u^+$ are the outgoing edges from $u$
                \item[(iii)] $q_{\pi}^t = \prod_{e \in p}q^t_e$ for all paths $p \in \mathcal P_{\eps}$.
            \end{itemize}
            Furthermore, it is possible to compute $q_e^t$ in time polynomial in $T$ and $\nicefrac{1}{\eps}$.
        \end{theorem}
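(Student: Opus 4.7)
The plan is to exploit \Cref{prop:decomposition}, applied pointwise rather than in expectation, to factorize the Hedge weights over the edges of $\Geps$. At any round $s$, the valuation $(v_1^s,v_2^s)$ belongs to the influence rectangle $A_e$ of exactly one edge $e$ of any fixed path $\pi$, so
\[
    \rev_s(M_\pi) = \sum_{e \in \pi} w_e \cdot \ind{(v_1^s,v_2^s) \in A_e}.
\]
Consequently, the Hedge weight used at the beginning of round $t$, namely $w_\pi^t = \exp\bigl[\eta \sum_{s<t} \rev_s(M_\pi)\bigr]$, factorizes as $w_\pi^t = \prod_{e \in \pi} \omega_e^t$, where $\omega_e^t := \exp\bigl[\eta\, w_e \sum_{s<t}\ind{(v_1^s,v_2^s)\in A_e}\bigr]$.

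Next I would run a backward dynamic program on the DAG $\Geps$: for each node $u$ define $Z_u^t$ as the sum of $\prod_{e \in p}\omega_e^t$ over all paths $p \in \mathcal P_\eps$ from $u$ to the sink $(1,0)$, computed via $Z_{(1,0)}^t = 1$ and $Z_u^t = \sum_{e=(u,v)\in E_u^+}\omega_e^t \, Z_v^t$. This requires time polynomial in $\nicefrac 1\eps$, and by construction $Z_{(0,1)}^t = \sum_{\pi \in \mathcal P_\eps}\prod_{e \in \pi}\omega_e^t = W_t$. I then define, for every edge $e=(u,v)$,
\[
    q_e^t := \frac{\omega_e^t \, Z_v^t}{Z_u^t}.
\]

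Properties (i) and (ii) are immediate: $q_e^t$ is a ratio of nonnegative quantities, and the recursion yields $\sum_{e \in E_u^+} q_e^t = Z_u^t/Z_u^t = 1$. Property (iii) follows by telescoping: for $\pi = (u_0,u_1,\dots,u_k)$ with $u_0=(0,1)$, $u_k=(1,0)$ and $e_i = (u_{i-1},u_i)$,
\[
    \prod_{i=1}^k q_{e_i}^t = \prod_{i=1}^k \omega_{e_i}^t \cdot \frac{Z_{u_i}^t}{Z_{u_{i-1}}^t} = \frac{\prod_{i=1}^k \omega_{e_i}^t}{Z_{u_0}^t} = \frac{w_\pi^t}{W_t} = q_\pi^t.
\]
Sampling from $q^t$ then reduces to a simple random walk from $(0,1)$ that picks each outgoing edge with probability $q_e^t$. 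Per round, one only needs to update the $\omega_e^t$'s whose rectangle $A_e$ contains the newly observed valuation, recompute the DP in time $O(\nicefrac 1{\eps^2})$, and traverse a path of length $\nicefrac 2\eps$, so the total cost is polynomial in $T$ and $\nicefrac 1\eps$.

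The main conceptual step is recognizing the factorization $w_\pi^t = \prod_{e\in\pi}\omega_e^t$, which follows directly from the pointwise version of \Cref{prop:decomposition}; once it is in place, the construction is a standard DAG-based sampling scheme in the spirit of \citet{TakimotoW03}, and I do not anticipate any delicate technical obstacle beyond bookkeeping.
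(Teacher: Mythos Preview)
Your proposal is correct and follows essentially the same approach as the paper: your edge weights $\omega_e^t$ coincide with the paper's $w_e^t$, the DP values $Z_u^t$ with the paper's node weights $w_u^t$, and your telescoping argument for (iii) is equivalent to (and arguably cleaner than) the paper's conditional-probability derivation. One minor slip: a valuation in the allocation region of $M_\pi$ lies in the influence rectangles of exactly \emph{two} edges of $\pi$ (one horizontal, one vertical, contributing each agent's payment), not one---but your decomposition formula $\rev_s(M_\pi)=\sum_{e\in\pi} w_e\,\ind{(v_1^s,v_2^s)\in A_e}$ is correct regardless.
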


        Having access to the edge probabilities $q_e^t$, it is easy to sample from the distribution over paths as in line \ref{line:sampling} of \pL by constructing the path one edge at the time, using the $q_e^t$ to pick the following edge.

        \begin{proof}[Proof of Theorem~\ref{thm:sampling}]
            We prove the Theorem in three steps. First, we define the edge version of the weights in lines \ref{line:weights_1} and \ref{line:weights_2} of \pL. For each edge $e$ in $\Geps$, let $A_e$ be the allocation rectangle it ``covers'' (this is the influence region defined for general orthogonal graphs in \Cref{def:weights}): if $e$ is a vertical edge connecting $(\nicefrac{i}\eps,\nicefrac{j}\eps)$ to $(\nicefrac{i}\eps,\nicefrac{j-1}\eps)$, then $A_e$ is the rectangle $[\nicefrac{i} \eps,1]\times [\nicefrac{j-1}\eps,\nicefrac{j}\eps]$, similarly, if $e$ is a horizontal edge connecting $(\nicefrac{i}\eps,\nicefrac{j}\eps)$ to $(\nicefrac{i+1}\eps,\nicefrac{j}\eps)$, then $A_e$ is the rectangle $[\nicefrac{i}\eps,\nicefrac{i+1}\eps]\times [\nicefrac{j} \eps,1]$. Unlike in the general stochastic setting, the boundaries of the $A_e$ rectangles are irrelevant (as they carry no mass in smooth distributions). We can define the following edge weights: 
            \[
            w_e^t = \begin{cases}
                \exp{\{\eta \cdot \tfrac{i} \eps \sum_{s=1}^{t-1} \ind{(v^s_1,v^s_2) \in A_e}\}} \text{ if $e$ connects $(\tfrac{i}\eps,\tfrac{j}\eps)$ to $(\tfrac{i}\eps,\tfrac{j-1}\eps)$}
                \\
                \exp{\{\eta \cdot \tfrac{j} \eps \sum_{s=1}^{t-1} \ind{(v^s_1,v^s_2) \in A_e}\}}\text{ if $e$ connects $(\tfrac{i}\eps,\tfrac{j}\eps)$ to $(\tfrac{i+1}\eps,\tfrac{j}\eps)$}
            \end{cases}
        \]
        These weights are related to the ones used in \augment but have a different meaning. We also highlight that it is possible to compute these weights efficiently. In particular, by exploiting the structure of the revenue function and the exponential nature of weights (both for paths and edges), we can derive a simple relationship between $w^t_e$ and the path-weights $w_{\pi}^t$ used by \pL. The formal proof is deferred to 
        \Cref{app:smooth_upper}.
                
        \begin{restatable}{claim}{productdecomposition}
        \label{cl:prod}
            For any path $\pi \in \mathcal P_{\eps}$ the following equality holds: $w_{\pi}^t = \prod_{e \in \pi} w_e^t$
        \end{restatable}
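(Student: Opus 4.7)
The plan is to unpack both sides into explicit exponentials indexed by the samples $v^1,\dots,v^{t-1}$, and then use the edge-decomposition of the per-sample revenue (the pointwise version of \Cref{prop:decomposition}) together with the fact that $\exp$ turns sums into products.

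First I would observe that the path weight maintained by \pL satisfies
\[
    w_\pi^t \;=\; \exp\!\Bigl(\eta \sum_{s=1}^{t-1} \revt[s](M_\pi)\Bigr),
\]
since $w_\pi^1 = 1$ and the multiplicative update in line \ref{line:weights_2} contributes a factor $\exp(\eta\,\revt[s](M_\pi))$ at each past round $s$. Next I would decompose each per-round revenue. The rectangles $\{A_e\}_{e\in\pi}$ partition (up to boundaries, which are Lebesgue-null for smooth distributions and irrelevant here anyway) the allocation region of $M_\pi$, and whenever $(v_1^s,v_2^s)\in A_e$ the mechanism $M_\pi$ collects exactly the intrinsic weight $w_e$ — this is the pointwise content of \Cref{prop:decomposition}, and it matches the coefficients $i\eps$ (respectively $j\eps$) used in the definition of $w_e^t$ for vertical (respectively horizontal) edges. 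Consequently,
\[
    \revt[s](M_\pi) \;=\; \sum_{e\in\pi} w_e\,\ind{(v_1^s,v_2^s)\in A_e}.
\]

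Plugging this into the expression for $w_\pi^t$ and swapping the order of summation gives
\[
    w_\pi^t \;=\; \exp\!\Bigl(\eta \sum_{e\in\pi} w_e \sum_{s=1}^{t-1}\ind{(v_1^s,v_2^s)\in A_e}\Bigr) \;=\; \prod_{e\in\pi}\exp\!\Bigl(\eta\, w_e \sum_{s=1}^{t-1}\ind{(v_1^s,v_2^s)\in A_e}\Bigr),
\]
and each factor on the right is, by the very definition in the preceding display, exactly $w_e^t$. This yields the claim.

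The calculation is essentially a bookkeeping exercise, so I do not expect a real obstacle; the only thing to double-check is the identification of the intrinsic weight $w_e$ with the coefficient in the exponent of $w_e^t$ (i.e., that vertical edges at horizontal position $i\eps$ carry weight $i\eps$, which is the Myerson critical price charged to agent $1$ when $(v_1^s,v_2^s)\in A_e$, and symmetrically for horizontal edges). Once that correspondence is in place, the identity follows from turning a sum over samples and edges into a product over edges via the exponential.
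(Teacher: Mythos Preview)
Your argument is correct and follows essentially the same route as the paper: the paper's proof also reduces the claim to the per-round identity $\rev_s(M_\pi)=\sum_{e\in\pi} w_e\,\ind{(v_1^s,v_2^s)\in A_e}$ (which it justifies by the same case analysis you sketch as ``the pointwise content of \Cref{prop:decomposition}''), and then the passage to $w_\pi^t=\prod_{e\in\pi}w_e^t$ via the exponential is exactly your step. The only cosmetic difference is that the paper spells out the two cases (trade vs.\ no trade) explicitly rather than citing the proposition.
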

        
        Second, for every node $u = (\nicefrac{i}\eps,\nicefrac{j}\eps)$ of the graph $\Geps$ let $N^+_u$ be its out-neighborhood. We define $w_{(1,0)}^t = 1$, and recursively the weights of the remaining nodes as follows:  $w_u^t = \sum_{v \in N^+_u} w_{(u,v)} w_v^t.$
        
        Note that the weights $w_u^t$ can be efficiently computed via dynamic programming starting from $w_{(1,0)}^t$. A node's weight equals the sum of the weights of all the paths stemming from it. This property, formalized by the following claim whose proof is deferred to \Cref{app:smooth_upper}, is crucial in defining the edge probabilities $q_e^t$.
        \begin{restatable}{claim}{sumprod}\label{cl:node}
            For any node $u\neq (1,0)$, let $\mathcal P^u_{\eps}$ denote the $u$-$(1,0)$ paths, then $w_u^t = \sum_{\pi \in \mathcal P^u_{\eps}} \prod_{e \in \pi} w^t_e$.
        \end{restatable}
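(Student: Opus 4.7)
The plan is to establish Claim~\ref{cl:node} by backward induction along the reverse topological order of the DAG $\Geps$. Since every edge points either rightward or downward, each node $u = (i\eps, j\eps)$ has a well-defined distance to the sink $(1,0)$ given by $(\nicefrac{1}{\eps} - i) + j$; I induct on this quantity. Although the statement excludes $u = (1,0)$ itself, it is natural to seed the induction at the sink by reading $w_{(1,0)}^t = 1$ as the empty product over the unique (trivial) path from $(1,0)$ to $(1,0)$.

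For the base case, consider a node $u$ whose only out-neighbor is $(1,0)$. The recursive definition gives $w_u^t = w_{(u,(1,0))}^t \cdot w_{(1,0)}^t = w_{(u,(1,0))}^t$, while $\mathcal{P}^u_\eps$ consists of the single one-edge path $(u,(1,0))$, so both sides of the identity equal $w_{(u,(1,0))}^t$.

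For the inductive step, fix $u$ at distance $d > 1$ from $(1,0)$ and assume the identity for every out-neighbor $v \in N_u^+$ (using the convention above when $v = (1,0)$). Each path $\pi' \in \mathcal{P}^u_\eps$ decomposes uniquely as a first edge $(u,v)$ followed by a path $\pi \in \mathcal{P}^v_\eps$; therefore
\begin{align*}
\sum_{\pi' \in \mathcal{P}^u_\eps} \prod_{e \in \pi'} w_e^t
&= \sum_{v \in N_u^+} w_{(u,v)}^t \sum_{\pi \in \mathcal{P}^v_\eps} \prod_{e \in \pi} w_e^t \\
&= \sum_{v \in N_u^+} w_{(u,v)}^t \, w_v^t = w_u^t,
\end{align*}
where the second equality is the inductive hypothesis and the third is the defining recursion for $w_u^t$. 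This closes the induction.

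The argument is a standard DAG dynamic-programming identity, and I do not anticipate any real obstacle. The only piece of bookkeeping worth stating explicitly is the treatment of the sink as an out-neighbor, which is handled cleanly by interpreting $w_{(1,0)}^t = 1$ as the empty product on the trivial path from $(1,0)$ to itself; once this convention is fixed, the unique decomposition of each $u$-to-$(1,0)$ path into its first edge and remaining suffix does all the work.
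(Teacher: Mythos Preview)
Your proof is correct and follows essentially the same approach as the paper: backward induction on the distance from $u$ to the sink $(1,0)$, with the inductive step decomposing each $u$-to-$(1,0)$ path by its first edge and invoking the defining recursion for $w_u^t$. Your explicit handling of the sink via the empty-product convention is a minor expository addition, but the argument is otherwise identical.
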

        
        We are ready to define the edge probabilities $q_e^t$. Consider any node $u$; its outgoing edges have the following probabilities: if $N^+_u = \{v\}$, then $q_{(u,v)}^t = 1$, else $N^+_u = \{d,r\}$ and the probabilities are
        \[
            q_{(u,d)}^t = \tfrac{w_{(u,d)}^t \cdot w_{d}^t}{w_{(u,d)}^t \cdot w_{r}^t + w_{(u,r)}^t \cdot w_{r}^t} \text{ and }  q_{(u,r)}^t = \tfrac{w_{(u,r)}^t \cdot w_{r}^t}{w_{(u,d)}^t \cdot w_{r}^t + w_{(u,r)}^t \cdot w_{r}^t}.
        \]
        Now, properties (i) and (ii) in \Cref{thm:sampling} are clearly respected by $q_e^t$; we need to make sure that also the last property holds. For any $\pi = (e_1,e_2, \dots, e_{k})$, it holds that
        \[
             \P{\pi^t = \pi} = \prod_{i=1}^{k}\P{(e_1,e_2, \dots, e_{i}) \in \pi^t \mid (e_1,e_2, \dots, e_{i-1}) \in \pi^t}. 
        \]
        It then suffices to show that for any path $\pi$ and any $i=1, \dots, k$ the following equality holds
        \begin{equation}
        \label{eq:sampling}
            \P{(e_1,e_2, \dots, e_{i}) \in \pi^t \mid (e_1,e_2, \dots, e_{i-1}) \in \pi^t} = q_{e_i}^t.
        \end{equation}
        \Cref{eq:sampling} follows by a simple case analysis which is deferred to \Cref{cl:sampling} in \Cref{app:smooth_upper}.
        \end{proof}

\section{A \texorpdfstring{$\sqrt{T}$}{sqrt T} Lower Bound for the Smooth and the Stochastic Adversaries}
\label{sec:smooth_lower}
    
    We construct a hard instance for the Repeated Joint Ads Problem, which forces any learner to suffer at least $\Omega(\sqrt T)$ regret. Such instance is based on a smooth distribution from which valuations are drawn i.i.d., thus yielding a lower bound for both the smooth and the stochastic i.i.d. setting. 
    
    \begin{figure}
    \centering
    \includegraphics[width=0.4\textwidth]{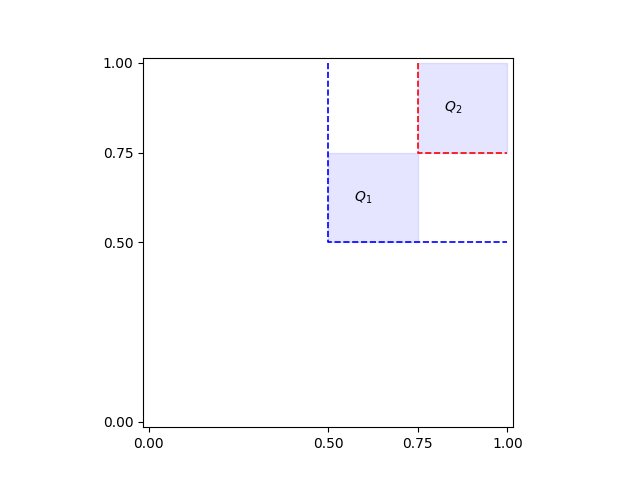}
    \caption{\small The two squares support the family of distributions we use in the lower bound. In blue, respectively red, are reported the boundary of the allocation region of $M_1$, respectively $M_2$.}
    \label{fig:lower_smooth}
    \end{figure}

        We introduce a family of smooth distributions, parameterized by $\alpha \in (\nicefrac{4}{15}, \nicefrac{2}{5})$. For the sake of simplicity, we denote with $V = (V_1, V_2)$ the random variables describing the agents' valuations, and with $\Pb^{\alpha}$ the probability measure (and with $\mathbb E^{\alpha}$ the expectation) under which $V$ has the following density: 
        \[
            f_{\alpha}(v_1,v_2) = 16 \left( \alpha \ind{(v_1,v_2) \in Q_1} + (1-\alpha) \ind{(v_1,v_2) \in Q_2}\right), 
        \]
        where $Q_1 = [\nicefrac 12,\nicefrac 34] \times  [\nicefrac 12,\nicefrac 34]$ and $Q_2 = [\nicefrac 34,1] \times  [\nicefrac 34,1]$ (see also \Cref{fig:lower_smooth}). In other words, under $\Pb^{\alpha}$ the random variable $V$ is drawn from a mixture of two uniform distributions, one supported on $Q_1$ (with probability $\alpha$) and one on $Q_2$ (with the remaining probability $1-\alpha$). A simple calculation shows the smoothness of the family, in particular, for any $\alpha \in (\nicefrac{4}{15}, \nicefrac{2}{5})$, $\Pb^{\alpha}$ is $\nicefrac{1}{12}$-smooth. We introduce two mechanisms: 
        \begin{itemize}
            \item Mechanism $M_1$, which allocates in the $[\nicefrac 12,1] \times [\nicefrac 12,1]$ square with $\Ea{\rev(M_1)} = 1$
            \item Mechanism $M_2$, which allocates in the $[\nicefrac 34,1] \times [\nicefrac 34,1]$ square with $\Ea{\rev(M_2)} = \tfrac 32(1-\alpha)$.
        \end{itemize}
        Specifically, it is possible to prove that for any $\Pb^{\alpha}$, either $M_1$ or $M_2$ are optimal, as they always dominate the other mechanisms. We present the formal result in the following Lemma, whose proof (as the complete one of the main Theorem) is deferred to  \Cref{app:smooth_lower}.
        \begin{restatable}[Domination]{lemma}{Domination}
        \label{lem:domination}
            Fix any $\alpha \in (\nicefrac{4}{15}, \nicefrac{2}{5})$ and any mechanism $M$, and let $A_M$ be its allocation region. Then, the following properties hold
            \begin{itemize}
                \item If $A_M \cap Q_1 \neq \emptyset$,  then $\Ea{\rev(M_1)} \ge \Ea{\rev(M)}$.
                \item If $A_M \cap Q_1 = \emptyset$,  then $\Ea{\rev(M_2)} \ge \Ea{\rev(M)}$.
            \end{itemize}
        \end{restatable}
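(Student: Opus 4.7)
The plan is to perform a case analysis on the intersection $A_M\cap Q_1$, driven by two geometric reductions that follow from the monotonicity (and closure) of any allocation region. First, if $(q_1,q_2)\in A_M\cap Q_1$ then $Q_2\subseteq A_M$, since every point of $Q_2=[\nicefrac34,1]^2$ dominates every point of $Q_1=[\nicefrac12,\nicefrac34]^2$. Second, if $A_M\cap Q_1=\emptyset$ then in fact $A_M\cap[0,\nicefrac34]^2=\emptyset$: any point of $A_M$ with both coordinates at most $\nicefrac34$ would, by upward monotonicity (by raising a coordinate below $\nicefrac12$ up to $\nicefrac12$), force a point of $A_M\cap Q_1$.

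For Case~2 ($A_M\cap Q_1=\emptyset$), the second reduction implies that the only mass seen by $M$ comes from $A_M\cap Q_2$, so $\mathbb{E}^\alpha[\rev(M)]=(1-\alpha)\,\mathbb{E}[\rev(M)\mid V\in Q_2]$ is the revenue of a joint-ads instance with valuations uniform on $Q_2$. I would first argue that without loss one can replace $A_M\cap Q_2$ by its rectangular hull $[c_1,1]\times[c_2,1]\subseteq Q_2$ (critical prices do not grow, and the allocation extends over a superset of the support, so revenue can only increase), and then maximize the explicit expression $16(1-\alpha)(1-c_1)(1-c_2)(c_1+c_2)$ over $c_1,c_2\in[\nicefrac34,1]$; a direct computation shows the optimum is attained at $c_1=c_2=\nicefrac34$, which is exactly $M_2$ and gives $\mathbb{E}^\alpha[\rev(M_2)]=\nicefrac32(1-\alpha)$.

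For Case~1 ($A_M\cap Q_1\neq\emptyset$), the first reduction gives $Q_2\subseteq A_M$, and I would parameterize $A_M$ by the two non-increasing boundary functions $g(v_2)=\inf\{p:(p,v_2)\in A_M\}$ and $h(v_1)=\inf\{p:(v_1,p)\in A_M\}$. Splitting the revenue into a $Q_1$-piece (supported on $A_M\cap Q_1$) and a $Q_2$-piece (supported on all of $Q_2$, where $g,h\le\nicefrac34$ by the case hypothesis), and using the symmetry of $\mathbb{P}^\alpha$ in the two coordinates to focus on agent~$1$'s contribution, the problem reduces to bounding the integral $16\alpha\int_{\nicefrac12}^{\nicefrac34}g(v_2)(\nicefrac34-g(v_2))_+\,dv_2+4(1-\alpha)\int_{\nicefrac34}^{1}g(v_2)\,dv_2$ subject to $g$ non-increasing, with the target of showing that the maximum is attained at $g\equiv\nicefrac12$, which corresponds to $A_{M_1}=[\nicefrac12,1]^2$ and yields total revenue $1$.

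The main obstacle is this coupled optimization: pointwise one would like to raise $g$ on $[\nicefrac34,1]$ to boost the second integral, but monotonicity of $g$ forces $g\ge\nicefrac34$ on $[\nicefrac12,\nicefrac34]$, which kills the first integral entirely. The window $\alpha\in(\nicefrac{4}{15},\nicefrac{2}{5})$ is calibrated precisely so that this trade-off is unfavorable at the symmetric rectangular family $A=[c,1]^2$: the derivative of the total revenue $32\alpha c(\nicefrac34-c)^2+2(1-\alpha)c$ at $c=\nicefrac12$ is $2-8\alpha<0$ throughout this range, certifying $M_1$ as a local maximum. Extending this from the rectangular family to all monotone boundaries, via an exchange/rectangularization argument analogous to the one used in Case~2 that reduces an arbitrary monotone boundary to a rectangle without increasing revenue, is where I expect the bulk of the technical work to land.
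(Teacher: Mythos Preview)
Your overall case split (whether $A_M$ meets $Q_1$) and the two geometric reductions at the start are correct and match the paper. The genuine gap is the ``rectangularization'' step you invoke in both cases.

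In Case~2 you write that after passing to the rectangular hull $[c_1,1]\times[c_2,1]$, ``critical prices do not grow, and the allocation extends over a superset of the support, so revenue can only increase.'' This inference is not valid: lower prices together with more trades say nothing about the direction of revenue (and in fact the claim about prices is itself wrong---if $A_M$ extends to the left of $Q_2$, say $(\tfrac12,1)\in A_M$, then replacing $A_M$ by a rectangle inside $Q_2$ \emph{raises} the critical price for agent~1). So the step that lets you optimize only over rectangles is not justified by the argument you give. The same issue reappears in Case~1, where you explicitly flag the reduction from arbitrary monotone boundaries to rectangles as ``where the bulk of the technical work lands''; the exchange argument you point to from Case~2 does not actually work, so there is nothing to transplant.

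The paper closes this gap with a single tool you do not use: the Myerson identity expressing expected revenue as the integral of the virtual surplus $2(v_1+v_2)-(b+d)$ over $A_M\cap R$ for $V$ uniform on a rectangle $R=[a,b]\times[c,d]$. On $Q_2$ this integrand is everywhere positive, so enlarging $A_M\cap Q_2$ to all of $Q_2$ can only increase revenue---this is the one-line replacement for your rectangularization in Case~2. In Case~1 the paper case-splits on which two sides of $Q_1$ the boundary crosses, fixes the entry/exit points $(x,\cdot),(\cdot,y)$, and again uses positivity of the virtual surplus on $Q_1$ to argue that the best $A_M\cap Q_1$ compatible with those crossings is the rectangle $[x,\tfrac34]\times[y,\tfrac34]$; only then does it optimize over $(x,y)$.

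Two smaller points. Your agent-1 integral over $Q_1$ should read $16\alpha\int_{1/2}^{3/4} g(v_2)\bigl(\tfrac34-\max\{g(v_2),\tfrac12\}\bigr)_+\,dv_2$; your formula over-counts whenever $g(v_2)<\tfrac12$. And the derivative computation $2-8\alpha<0$ only certifies that $c=\tfrac12$ is a \emph{local} maximum of the symmetric-rectangle family on $[\tfrac12,\tfrac34]$; you still need the global comparison against the other endpoint $c\to\tfrac34$, which is exactly where the constraint $\alpha\in(\tfrac4{15},\tfrac25)$ bites.
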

        \begin{restatable}{theorem}{lowersmooth}\label{thm:lower_smooth}
        Consider the Repeated Joint Ads problem against the $\sigma$-smooth adversary, for $\sigma \in (0,\nicefrac 1{12})$, then any learning algorithm $\A$ suffers regret $R_T(\A) \ge \tfrac{3}{64}\sqrt{T}.$
        \end{restatable}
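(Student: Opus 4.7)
}

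My plan is to apply a standard two-point (Le Cam) argument on the family $\{\Pb^{\alpha}\}$ introduced just before the statement. Concretely, I will pick two values $\alpha_1 = \tfrac13 - \gamma$ and $\alpha_2 = \tfrac13 + \gamma$, with $\gamma = c/\sqrt{T}$ for a constant $c$ to be tuned. A direct computation gives $\Ea{\rev(M_1)} = 1$ and $\Ea{\rev(M_2)} = \tfrac32 (1-\alpha)$, so $M_2$ is preferred to $M_1$ exactly when $\alpha < \tfrac13$. Thus, under $\Pb^{\alpha_1}$ the optimal mechanism (by \Cref{lem:domination}) is $M_2$, with revenue $1 + \tfrac{3\gamma}{2}$, and under $\Pb^{\alpha_2}$ the optimal mechanism is $M_1$, with revenue $1$, and in both cases picking the ``wrong'' of the two options loses exactly $\tfrac{3\gamma}{2}$ in expectation per round. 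A quick check that $f_{\alpha}$ is bounded by $16(1-\alpha) < 12$ on the unit square for $\alpha \in (\tfrac{4}{15},\tfrac25)$ confirms that both distributions are $\sigma$-smooth for every $\sigma \le \tfrac{1}{12}$, so both adversaries are admissible.

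Next I will use the \Cref{lem:domination} to reduce the algorithm's behavior to a binary choice. For any mechanism $M^t$ the learner proposes, let $X_t = \ind{A_{M^t}\cap Q_1 \neq \emptyset}$, i.e., whether the learner ``plays like $M_1$'' or ``plays like $M_2$'' at round $t$, and set $N_1 = \sum_{t=1}^T X_t$. By \Cref{lem:domination}, under $\Pb^{\alpha_1}$ every round with $X_t=1$ yields revenue at most $\Ea{\rev(M_1)} = 1$, while the best mechanism earns $1 + \tfrac{3\gamma}{2}$; analogously, under $\Pb^{\alpha_2}$ every round with $X_t=0$ costs $\tfrac{3\gamma}{2}$ compared to the optimal $M_1$. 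Consequently
\[
R_T(\A,\Pb^{\alpha_1}) \ge \tfrac{3\gamma}{2}\, \mathbb E^{\alpha_1}[N_1], \qquad R_T(\A,\Pb^{\alpha_2}) \ge \tfrac{3\gamma}{2}\, \bigl(T - \mathbb E^{\alpha_2}[N_1]\bigr).
\]

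I will then control the difference $\mathbb E^{\alpha_2}[N_1] - \mathbb E^{\alpha_1}[N_1]$ by an information-theoretic argument on the $T$ i.i.d. rounds. Since $|N_1| \le T$, the coupling/TV bound gives $|\mathbb E^{\alpha_1}[N_1] - \mathbb E^{\alpha_2}[N_1]| \le T\cdot \mathrm{TV}(\Pb^{\alpha_1,\otimes T}, \Pb^{\alpha_2,\otimes T})$. By Pinsker and tensorization, this is at most $T\sqrt{\tfrac{T}{2}\,\KL(\Pb^{\alpha_1}\,\|\,\Pb^{\alpha_2})}$. The per-round KL reduces to a Bernoulli KL between $\alpha_1$ and $\alpha_2$ on the two regions $Q_1,Q_2$, namely $\alpha_1\log\tfrac{\alpha_1}{\alpha_2} + (1-\alpha_1)\log\tfrac{1-\alpha_1}{1-\alpha_2}$, which I will upper bound by $\tfrac{(2\gamma)^2}{\alpha_2(1-\alpha_2)} \le 18\gamma^2$ using the standard $\log(1+x)\le x$ estimate on $[-1/2,\tfrac{1}{2}]$ (valid in our parameter range).

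Putting things together gives $R_T(\A,\Pb^{\alpha_1}) + R_T(\A,\Pb^{\alpha_2}) \ge \tfrac{3\gamma}{2}\,T\bigl(1 - 3\gamma\sqrt{T}\bigr)$, and taking the maximum of the two yields $R_T(\A) \ge \tfrac{3\gamma T}{4}(1-3\gamma\sqrt T)$. Optimizing over $\gamma$ (choosing $\gamma = 1/(6\sqrt T)$) gives exactly $R_T(\A) \ge \tfrac{3}{64}\sqrt T$. The main obstacle will be the careful constant bookkeeping in the KL bound and in the verification of the smoothness condition at the endpoints of the parameter interval $(\tfrac{4}{15},\tfrac25)$; the rest is a textbook Le Cam argument, with the problem-specific content entirely encoded in \Cref{lem:domination} and the revenue formulas for $M_1$ and $M_2$.
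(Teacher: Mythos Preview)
Your proposal is correct and follows essentially the same Le Cam two-point argument as the paper: choose two nearby parameters around $\alpha_0=\tfrac13$, invoke \Cref{lem:domination} to reduce any mechanism to the binary choice $M_1$ vs.\ $M_2$, and bound the learner's ability to distinguish the two instances via Pinsker on the Bernoulli KL. The only cosmetic differences are that the paper routes both alternatives through a central reference measure $\Pb^0$ and applies Pinsker round-by-round rather than once on the $T$-fold product; also note your final arithmetic actually yields $\tfrac{1}{16}\sqrt{T}$ (not $\tfrac{3}{64}\sqrt{T}$) with $\gamma=\tfrac{1}{6\sqrt T}$, which is a strictly stronger constant and still proves the stated bound.
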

        
        \begin{proof}[Proof Sketch]
            Let $\eps$ be a small parameter to be set later, and let $\alpha_0= \nicefrac 13$. We consider two probability distributions from the family introduced above, corresponding to parameters $\alpha_0 + \eps$ and $\alpha_0 - \eps$. Consider an adversary that selects one of these two distributions uniformly at random and extracts the valuations i.i.d. from it. 
            We sketch here an explanation of {why any learning algorithm suffers regret $\Omega(\sqrt{T})$ against this adversary,} 
            and defer the formal proof to \Cref{app:smooth_lower}. \Cref{lem:domination} critically tells us that we can restrict our attention, without loss of generality, to learning algorithms that only play the two mechanisms $M_1$ and $M_2$ described above. $M_1$ is optimal for the first distribution, while $M_2$ is optimal on the second. Moreover, their revenue is an additive $\Theta(\eps)$ away from each other, so that posting the wrong mechanism incurs an instantaneous regret of $\Theta (\eps)$. Any learning algorithm needs $\Omega(\nicefrac 1{\eps^2})$ samples to establish which one of the two instances it is playing against and suffers loss $\Theta(\eps)$ until it discovers it. If $\eps$ is too small, the learner cannot discover the best mechanism with high probability, thus suffering a regret of $\Omega(\eps T)$. All in all, it suffers $\Omega(\min\{\nicefrac{1}{\eps},\eps T\})$ regret. Setting $\eps = \nicefrac{1}{\sqrt{T}}$ yields the desired result.
         \end{proof}

        \begin{corollary}
        \label{cor:lower_stochastic}
        Any learning algorithm suffers $\Omega(\sqrt{T})$ regret in the stochastic i.i.d. setting.
        \end{corollary}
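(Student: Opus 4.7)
The plan is to observe that the lower bound construction used to prove \Cref{thm:lower_smooth} is in fact already a stochastic i.i.d.\ instance, so the corollary follows with essentially no additional work. Recall that in that proof the adversary picks one of the two distributions $\mathbb{P}^{\alpha_0 + \varepsilon}$ or $\mathbb{P}^{\alpha_0 - \varepsilon}$ uniformly at random at the outset, and then draws every valuation $(v_1^t, v_2^t)$ independently from the chosen distribution. Once the distribution is fixed (as it is after the initial random choice), the learner faces exactly an i.i.d.\ sequence from a fixed but unknown law on $[0,1]^2$, which is the definition of the stochastic i.i.d.\ adversary.

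Concretely, I would invoke Yao's minimax principle over the two-point prior on $\{\mathbb{P}^{\alpha_0 + \varepsilon}, \mathbb{P}^{\alpha_0 - \varepsilon}\}$: any (possibly randomized) learning algorithm $\A$ can be replaced by a deterministic one whose expected regret, averaged over this two-point prior, is at least $\tfrac{3}{64}\sqrt{T}$ by the argument already developed for \Cref{thm:lower_smooth}. Hence there exists at least one of the two fixed distributions against which $\A$ incurs regret $\Omega(\sqrt{T})$, and that fixed distribution constitutes a legitimate stochastic i.i.d.\ adversary for the Repeated Joint Ads problem. Since the smoothness parameter played no role in the stochastic benchmark (the benchmark $\sup_{M\in\M}\mathbb{E}[\rev(M)]$ is defined identically in both settings), and since \Cref{lem:domination} continues to apply unchanged to each fixed $\mathbb{P}^{\alpha}$, no part of the argument needs to be reworked. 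There is therefore no real obstacle; the only thing to check is the trivial fact that $\mathbb{P}^{\alpha_0 \pm \varepsilon}$ are valid fixed probability measures on $[0,1]^2$, which is immediate from their definition as mixtures of two uniform distributions.
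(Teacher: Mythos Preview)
Your proposal is correct and is essentially the paper's own argument: the paper explicitly notes that the hard instance in \Cref{thm:lower_smooth} draws valuations i.i.d.\ from a fixed (smooth) distribution, so the same Yao-based lower bound immediately applies to the stochastic i.i.d.\ setting. No additional idea is needed, and your identification of the relevant points (the two-point prior, the unchanged benchmark, and \Cref{lem:domination}) matches the paper's reasoning.
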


\section{Conclusion}
Motivated by online retail, in this paper, we introduced the online version of the Joint Ads Problem, in which two buyers (e.g., a merchant and a brand) cooperate to secure a non-excludable good (e.g., an ad slot). This problem is similarly fundamental to non-excludable mechanism design, as the single-buyer single-item problem is to excludable mechanism design. We have studied the problem in various data-generation models, providing efficient learning algorithms for the stochastic and the $\sigma$-smooth framework. We complemented these positive results with various lower bounds, in particular proving that in the adversarial setting, it is impossible to achieve sublinear regret. 
    Many interesting questions remain, including settling the minimax regret in the stochastic and $\sigma$-smooth models. Additionally, it would be interesting to see if our adaptive discretization technique can be useful for other related problems. Finally, we hope our work can be a stepping stone for future investigation that extends our results to more buyers (bigger groups and multiple groups). 

\section*{Acknowledgments}
    The authors would like to thank an anonymous reviewer for pointing out a flaw in a previous version of the paper.
    The work of Federico Fusco is supported by the ERC Advanced Grant 788893 AMDROMA, the MIUR PRIN grant 2017R9FHSR ALGADIMAR, the MUR PRIN grant 2022EKNE5K ``Learning in Markets and Society'', the PNRR Project: ``SoBigData.it -Strengthening the Italian RI for Social Mining and Big Data Analytics'' (Prot. IR0000013 - Avviso n. 3264 del 28/12/2021), and the FAIR (Future Artificial Intelligence Research) project PE0000013, spoke 5.

\bibliographystyle{plainnat}
\bibliography{bibliography}

\clearpage
\appendix

\section{Supplementary Material}

\subsection[The sample complexity of M]{The Learning complexity of $\M$ and $\M^{\perp}$ }
    \label{app:complexity}

        We prove that for any $\eps \in (0,1)$, the pseudo-dimension of $\M_{\eps} \subseteq \M^{\perp}$ is $\Omega(\nicefrac 1\eps)$. This result implies that the class of all mechanisms has unbounded pseudo-dimension and that the same holds for $\M^{\perp}$. We start by recalling the definition of pseudo-dimension of a class of functions. Let $\Q$ be a set and consider a class $\F$ of functions $f: \Q \to[0,H]$. Let $S$ be $m$ points in $\Q$, labeled according to $f$, meaning that for each sample $x \in S$, $f(x)$ is known for all $f \in \F$. Let $(r_1,r_2, \dots, r_m) \in [0,H]^m$ be a set of targets for $S$. We say that $(r_1,r_2, \dots, r_m)$ witnesses the shattering of $S$ by $\F$ if, for each $T \subseteq S$, there exists some $f_T \in \F$ such that $f_T(x_i) \ge r_i$ for all $x_i \in T$ and $f_T(x_i)<r_i$ for all $x_i \in S \setminus T$. If there exists some $(r_1,r_2, \dots, r_m)$ witnessing the shattering of $S$, we say that $S$ is shatterable by $\F$. 
            
        \begin{definition}
            The pseudo-dimension of $\F$ is the size of the largest set $S$ shatterable by $\F$.
        \end{definition}
        
        In our setting, the set $\Q$ is $[0,1]^2$, $H=2$ and $\F$ is the set of all functions $f = f_M$ which map valuations to revenue according to some mechanism $M \in \M.$ For simplicity, we refer to the pseudo-dimension of $\M$ (or some subset of $\M$) without specifying that we are actually studying the complexity of the class of functions just described.

        \begin{theorem}
        \label{thm:pseudo}
            For any $\eps \in (0,1)$, such that $\nicefrac 1 \eps \in \mathbb N$, the pseudo-dimension of $M_{\varepsilon}$ is $\Omega(\nicefrac 1{\varepsilon})$.
        \end{theorem}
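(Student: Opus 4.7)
The plan is to exhibit an explicit shatterable set of size $n := \nicefrac{1}{\eps}$ for the revenue functions $\{f_M : M \in \M_\eps\}$. I would place the sample points at the centers of the anti-diagonal cells of $G_\eps$: set $x_k = ((k-\nicefrac{1}{2})\eps,\, 1-(k-\nicefrac{1}{2})\eps)$ for $k=1,\ldots,n$, and choose uniform witness thresholds $r_k = (1-\eps)/2$. Since any DSIC/IR mechanism extracts revenue $0$ at an unallocated point, the shattering task reduces to building, for every $T \subseteq \{1,\ldots,n\}$, a mechanism $M_T \in \M_\eps$ whose allocation region contains exactly the $x_k$ with $k \in T$, and whose revenue at each such $x_k$ is at least $(1-\eps)/2$.

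The mechanism $M_T$ would be induced, via the path/mechanism correspondence, by the down-right path $\pi_T$ on $G_\eps$ obtained as follows. Listing $T = \{k_1 < k_2 < \cdots < k_s\}$, the path visits the grid corners $c_i := ((k_i-1)\eps,\,(n-k_i)\eps)$ in order, making at each $c_i$ a "down-to-right" turn, and connecting $(0,1) \to c_1$, $c_i \to c_{i+1}$, and $c_s \to (1,0)$ by "first all rights, then all downs" staircases; if $T=\emptyset$ the path just goes right to $(1,1)$ then down to $(1,0)$. A quick count shows that this uses exactly $n$ right moves and $n$ down moves, so $\pi_T$ is a valid complete path on $G_\eps$. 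When $k \in T$, the turn at $c_k$ forces the path to be at column $(k-1)\eps$ at the height of $x_k$ and at row $(n-k)\eps$ at the column of $x_k$, so by the Myerson critical-price formula the two payments sum to $(k-1)\eps + (n-k)\eps = 1-\eps$, comfortably exceeding $r_k$.

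The main technical step—and where the bookkeeping is delicate—is to show that the sample points with $k \notin T$ are all strictly below $\pi_T$. The key inequality is this: between two consecutive turns at $c_{k_i}$ and $c_{k_{i+1}}$, monotonicity and the staircase construction pin the path to height exactly $(n-k_i)\eps = 1-k_i\eps$ over the column range $[(k_i-1)\eps,(k_{i+1}-1)\eps]$; any intermediate $x_k$ with $k_i < k < k_{i+1}$ has height $1-(k-\nicefrac{1}{2})\eps < 1-k_i\eps$ (since $k \geq k_i+1$), so it lies strictly below the path and receives no allocation. The same inequality handles the initial segment before $c_{k_1}$ (where the path sits at height $1$) and the final segment after $c_{k_s}$ (where the path sits at height $0$), with the natural boundary conventions. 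Once these checks are assembled, $\{x_1,\ldots,x_n\}$ is shattered by $\M_\eps$ with witness $(r_1,\ldots,r_n)$, which yields the bound $\operatorname{Pdim}(\M_\eps) \geq n = \Omega(\nicefrac{1}{\eps})$.
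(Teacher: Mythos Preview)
Your proposal is correct and takes essentially the same approach as the paper: place $n=\nicefrac{1}{\eps}$ sample points along the anti-diagonal and, for each subset $T$, build a staircase path on $G_\eps$ whose allocation region contains exactly the points indexed by $T$ (the paper uses the grid nodes $(i\eps,1-i\eps)$ with witness $1$, while you shift to the cell centers with witness $(1-\eps)/2$, a purely cosmetic variation). One small slip: over the final horizontal segment after $c_{k_s}$ the path sits at height $1-k_s\eps$, not $0$; but your ``same inequality'' remark is still the right one there ($1-(k-\tfrac12)\eps<1-k_s\eps$ for $k>k_s$), so the exclusion of the tail points goes through unchanged.
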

        \begin{proof}
            Consider $S = \{(i\varepsilon, 1- i\varepsilon) \in [0,1]^2\mid i = 1, 2, \dots \nicefrac 1 \varepsilon\}$, we show that these $\nicefrac 1 \eps$ points are shattered by $\M_\eps$, using the vector $(1,1,\dots,1) \in [0,2]^{\nicefrac 1 \eps}$ has witness. For any subset $T$ of these points, it is easy to find a path on $G_{\eps}$ which passes through all the nodes in $T$ (thus ensuring a revenue of $1$ and clearing the witness condition on these points), and whose allocation region does not contain the vertices in $S \setminus T$ (thus getting no revenue from them and clearing the witness condition on these points). We refer to \Cref{fig:pseudo} for a visualization.
        \end{proof}
        
        We have two immediate corollaries.
        
        \begin{corollary} 
            The pseudo-dimension of $\M^{\perp}$ and $\M$ is unbounded.
        \end{corollary}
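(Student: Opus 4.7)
The plan is to derive the corollary directly from \Cref{thm:pseudo} by exploiting the chain of inclusions $\M_{\eps} \subseteq \M^{\perp} \subseteq \M$. Since pseudo-dimension is monotone under inclusion of hypothesis classes (any set shattered by the smaller class is shattered by the larger class, with the same witnessing vector), it suffices to show that $\sup_{\eps} \mathrm{pdim}(\M_\eps) = \infty$.

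First, I would observe that for every $n \in \mathbb{N}$, the choice $\eps = 1/n$ satisfies $1/\eps \in \mathbb{N}$, so \Cref{thm:pseudo} applies and yields $\mathrm{pdim}(\M_{1/n}) \ge c \cdot n$ for some absolute constant $c > 0$. Next, I would note that any mechanism in $\M_{1/n}$ is orthogonal (its allocation boundary is a union of horizontal and vertical segments of $G_{1/n}$), hence $\M_{1/n} \subseteq \M^{\perp}$, and trivially $\M^{\perp} \subseteq \M$ by definition. Therefore, taking the shattered set $S_n$ of size $\Omega(n)$ provided by the proof of \Cref{thm:pseudo}, together with its witnessing vector, exhibits a set of arbitrarily large size shattered by both $\M^{\perp}$ and $\M$. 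Letting $n \to \infty$ yields the claim.

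I do not anticipate any real obstacle: the only subtlety is being explicit that the shattering witness used in the proof of \Cref{thm:pseudo} transfers verbatim to the supersets, which is immediate from the set-theoretic containment. The corollary is essentially a one-line consequence, and the proof I would write amounts to combining \Cref{thm:pseudo} with monotonicity of pseudo-dimension and the observation $\M_{\eps} \subseteq \M^{\perp} \subseteq \M$.
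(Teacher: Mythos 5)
Your proposal is correct and is exactly the argument the paper intends when it calls this an "immediate corollary" of \Cref{thm:pseudo}: monotonicity of pseudo-dimension under the inclusions $\M_{\eps} \subseteq \M^{\perp} \subseteq \M$, combined with the $\Omega(\nicefrac{1}{\eps})$ bound for $\M_{\eps}$ as $\eps \to 0$. No gaps; the witness transfer you mention is indeed immediate.
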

        \begin{corollary}
            The VC dimension of all the monotone regions is unbounded.
        \end{corollary}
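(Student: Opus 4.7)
The plan is to exhibit, for every positive integer $n$, a set of $n$ points in $[0,1]^2$ that is shattered by the family of monotone regions, which by definition of VC dimension will immediately yield that it is unbounded. The natural candidate is an anti-chain under the coordinate-wise dominance order, concretely the points on the anti-diagonal: for each $i=1,\ldots,n$, set $p_i = (i/(n+1),\, 1 - i/(n+1))$. By construction, any two distinct $p_i, p_j$ are incomparable, i.e., neither dominates the other.

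Next I would show shattering constructively. For an arbitrary subset $T \subseteq \{p_1,\ldots,p_n\}$ define the up-closure
\[
A_T \;=\; \bigl\{(w_1,w_2)\in[0,1]^2 \,:\, \exists\, p\in T \text{ with } w_1\geq p_1 \text{ and } w_2\geq p_2\bigr\}.
\]
By construction $A_T$ is monotone (it is the union of finitely many monotone ``north-east'' cones, each of which is monotone). I would then verify that $A_T \cap \{p_1,\ldots,p_n\} = T$: every $p_i\in T$ dominates itself and therefore lies in $A_T$, while every $p_j\notin T$ cannot dominate any $p_i\in T$ since the chosen points form an anti-chain, and hence $p_j\notin A_T$. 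This shows every subset $T$ is realizable as the intersection of the $n$ points with some monotone region, so the set is shattered.

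Since $n$ was arbitrary, the family of monotone regions shatters sets of every finite size, which is precisely the statement that its VC dimension is unbounded. There is no real obstacle here: the argument is a direct consequence of the anti-chain structure of the anti-diagonal, mirroring the construction already used in the pseudo-dimension proof of \Cref{thm:pseudo}; the only minor care needed is to check that the up-closure construction yields a valid (closed, monotone) allocation region, which follows immediately from the definition.
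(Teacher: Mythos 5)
Your proposal is correct and follows essentially the same route as the paper: the anti-diagonal points form an antichain, and for each subset the monotone up-closure region (equivalently, the allocation region of a path through exactly those points) realizes it, which is precisely the shattering construction underlying Theorem~\ref{thm:pseudo}, from which the paper deduces the corollary immediately. The only cosmetic difference is that you argue VC shattering directly with up-closure regions instead of passing through the pseudo-dimension/revenue formulation.
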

        
        \begin{figure}[ht!]
        \centering
        \includegraphics[width = 0.5 \textwidth]{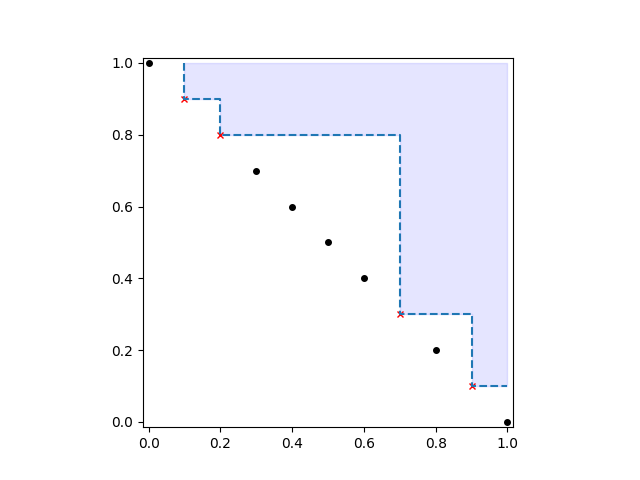}
        \caption{\small Example of the mechanism shattering points on the diagonal. The red points correspond to $T$ and yield revenue $1$ in the mechanisms corresponding to the shaded allocation region. All the black points in $S \setminus T$ are {\em not} in the allocation region and thus generate $0$ revenue.}
        \label{fig:pseudo}
    \end{figure}
    



\subsection[Proofs Omitted from Section 4]{Proofs Omitted from \Cref{sec:adversarial_lower} }
\label{app:adversarial lower bound}

    \properties*
    \begin{proof}
    We start by deriving closed-form solutions for $a_t$ and $b_t$, parameterized by the sets $L$ and $R$, which correspond to the coin tosses that lead to the two cases in constructing the input sequence. That is, $t \in R$ with probability $\zeta$ (i.e., $(v_1^t,v_2^t) = (b_t,1)$) and $t \in L$ with probability $1-\zeta$ (i.e., $(v_1^t,v_2^t) = (a_t,\zeta)$). 
    We say that $t = 0$ belongs to $R$ for convenience.
    Define
    \begin{align*}
    \alpha_t = \begin{cases}
    2 & \text{if $t \in R$} \\
    1 & \text{if $t \in L$}\\
    \end{cases}
    \quad 
    \text{and}
    \quad
    \gamma_t = \begin{cases}
    +1 & \text{if ${t-1} \in R$}\\
    -1 & \text{if $t-1 \in L$}
    \end{cases}.
    \end{align*}

    \begin{claim}\label{cla:sequences-closed-form}
    For any realization of the coin tosses, we have the following closed-form
    \begin{align*}
    &a_t = \gamma_t \cdot \Delta_t + \sum_{j=1}^{t-1} \gamma_j \cdot \alpha_j \cdot \Delta_j 
    \quad \text{and} \quad  
    b_t = 2\cdot \gamma_t \cdot \Delta_t + \sum_{j=1}^{t-1} \gamma_j \cdot \alpha_j \cdot \Delta_j.
    \end{align*}
    \end{claim}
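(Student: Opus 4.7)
The plan is a straightforward induction on $t$, where the only real bookkeeping concerns unwinding the two cases ($t \in R$ versus $t \in L$) in the recursive definition of the sequences. The point of the closed form is to split each update into a ``new increment'' term (proportional to $\Delta_t$, with sign $\gamma_t$ and size $\alpha_{t-1}\in\{1,2\}$) plus the accumulated history, and to verify that the recursion just appends one more such term.

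For the base case $t=1$, I would use the convention $0 \in R$ spelled out in the proof. This gives $\gamma_1 = +1$, and the empty sum convention yields $a_1 = \gamma_1 \cdot \Delta_1 = \delta/3$ and $b_1 = 2\gamma_1\Delta_1 = 2\delta/3$, matching the initialization.

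For the inductive step, assume the formulas hold at time $t$. I would split on the value of the coin toss at time $t$: if $t \in R$ then $\alpha_t = 2$ and $\gamma_{t+1} = +1$, and the recursion gives
\[
a_{t+1} = b_t + \Delta_{t+1} = 2\gamma_t\Delta_t + \sum_{j=1}^{t-1}\gamma_j\alpha_j\Delta_j + \Delta_{t+1} = \gamma_t \alpha_t \Delta_t + \sum_{j=1}^{t-1}\gamma_j\alpha_j\Delta_j + \gamma_{t+1}\Delta_{t+1},
\]
which regroups into $\gamma_{t+1}\Delta_{t+1} + \sum_{j=1}^{t}\gamma_j\alpha_j\Delta_j$, as desired. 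The computation for $b_{t+1} = b_t + 2\Delta_{t+1}$ is identical except for a factor of $2$ on the leading $\gamma_{t+1}\Delta_{t+1}$ term. If instead $t \in L$ then $\alpha_t = 1$ and $\gamma_{t+1} = -1$, and the recursion $a_{t+1} = a_t - \Delta_{t+1}$ unrolls via the inductive hypothesis into
\[
\gamma_t \Delta_t + \sum_{j=1}^{t-1}\gamma_j\alpha_j\Delta_j - \Delta_{t+1} = \gamma_t\alpha_t\Delta_t + \sum_{j=1}^{t-1}\gamma_j\alpha_j\Delta_j + \gamma_{t+1}\Delta_{t+1},
\]
again matching the claimed form; $b_{t+1} = a_t - 2\Delta_{t+1}$ is analogous.

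I do not expect any real obstacle beyond keeping track of which index carries $\alpha$ and which carries $\gamma$: the coefficient $\alpha_t$ attached to $\Delta_t$ inside the sum is determined by \emph{what happens at time} $t$ (i.e., by $|b_t - a_t|$ becoming the new base for step $t+1$), whereas the sign $\gamma_{t+1}$ of the \emph{leading} term $\Delta_{t+1}$ is determined by the same coin toss but enters with the opposite role (as a direction of the next jump). Once this distinction is clear, both cases collapse into the same telescoping identity and the induction closes.
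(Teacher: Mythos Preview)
Your proposal is correct and follows essentially the same approach as the paper: induction on $t$, with the base case handled via the convention $0\in R$, and the inductive step split into the two cases $t\in R$ and $t\in L$, each verified by substituting the inductive hypothesis into the recursive update and regrouping. The paper only spells out the $t\in R$ case and declares the other analogous, whereas you also write out the $t\in L$ case explicitly; otherwise the arguments are identical.
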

    \begin{proof}[Proof of \Cref{cla:sequences-closed-form}]
    We prove the claim by induction. For $t = 1$ we have
    \begin{align*}
    &a_1 = \gamma_1 \cdot \Delta_1 = \Delta_1 = \tfrac{\delta}{3}
    \quad \text{and} \quad
    b_1 = 2 \cdot \gamma_1 \cdot  \Delta_1 = 2 \cdot \Delta_1 = \tfrac{2\delta}{3},
    \end{align*}
    as needed. For the inductive step, assume the claim is true up to $t$. We distinguish the two cases according to whether $t \in R$ or $L$. 
    If $t \in R$. Then, by the induction hypothesis, we have:
    \begin{align*}
    a_{t+1} &= b_t + \Delta_{t+1} 
    = \left[\left(\sum_{j=1}^{t-1} \gamma_j \cdot \alpha_j \cdot \Delta_j \right) +  \gamma_t \cdot 2 \cdot \Delta_t\right] + \Delta_{t+1}\\
    &= \left[\left(\sum_{j=1}^{t-1} \gamma_j \cdot \alpha_j \cdot \Delta_j \right) +  \gamma_t \cdot \alpha_t \cdot \Delta_t\right] + \gamma_{t+1} \cdot \Delta_{t+1}= \left(\sum_{j=1}^{t} \gamma_j \cdot \alpha_j \cdot \Delta_j \right) + \gamma_{t+1} \cdot 
    \Delta_{t+1}
    \end{align*}
and similarly
\begin{align*}
    b_{t+1} &= b_t + 2 \cdot  \Delta_{t+1} 
    = \left[\left(\sum_{j=1}^{t-1} \gamma_j \cdot \alpha_j \cdot \Delta_j \right) +  \gamma_t \cdot 2 \cdot \Delta_t\right] + 2 \cdot \Delta_{t+1}\\
    &= \left[\left(\sum_{j=1}^{t-1} \gamma_j \cdot \alpha_j \cdot \Delta_j \right) +  \gamma_t \cdot \alpha_t \cdot \Delta_t\right] + \gamma_{t+1} \cdot 2 \cdot \Delta_{t+1}= \left(\sum_{j=1}^{t} \gamma_j \cdot \alpha_j \cdot \Delta_j \right) + \gamma_{t+1} \cdot 2 \cdot 
    \Delta_{t+1}.
    \end{align*}
    The remaining case, i.e., when $t \in L$ is analogous.
    \end{proof}

    The closed formulas in Claim~\ref{cla:sequences-closed-form} are all we need to prove point (i) of the Proposition: for all $t$:
    \begin{align*}
    a_t = \left(\sum_{j=1}^{t-1} \gamma_j \cdot \alpha_j \cdot \Delta_j \right) + \gamma_t \cdot \Delta_t 
    \leq \sum_{j=1}^{T} 2 \cdot \Delta_j \leq 2 \sum_{j=1}^{\infty} \frac{\delta}{3^j} 
    = \delta.
    \end{align*}
    On the other hand, we have that
    \begin{align*}
    a_t 
    = \left(\sum_{j=1}^{t-1} \gamma_j \cdot \alpha_j \cdot \Delta_j \right) + \gamma_t \cdot \Delta_t 
    \geq \Delta_1 - \sum_{j=2}^{\infty} 2 \cdot \Delta_j = \frac{\delta}{3} - 2 \left(\bigg(\sum_{j=1}^{\infty} \frac{\delta}{3^j}\bigg) - \frac{\delta}{3}\right) 
    = 0.
    \end{align*}
    
    Similarly, again by Claim~\ref{cla:sequences-closed-form}, for all $t$, it is possible to verify that $b_t \in (0,\delta)$. We omit the calculations that are analogous. 
    as claimed. To address point (ii) of the Proposition, we exploit a natural notion of monotonicity. Formally, we have the following result. 

    \begin{claim}\label{cla:sequences-are-monotone}
    The sequence $a_t$ is decreasing for $t\in R$, while $b_t$ is increasing for $t \in L$.
    \end{claim}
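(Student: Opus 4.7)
The plan is to derive the monotonicity directly from the closed-form expressions in \Cref{cla:sequences-closed-form}. The two sub-statements are symmetric under swapping the roles of $R$ and $L$ (and of $a$ and $b$), so it suffices to argue one in detail; the other follows by a parallel computation.

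First I would fix two consecutive relevant indices $t_1 < t_2$ (both in the same class, with no other index of that class in between, so that every intermediate $j \in (t_1, t_2)$ lies in the opposite class). Unfolding the recursive updates from $t_1$ to $t_2$, I would obtain a closed-form expression for the difference: the transition at $t_1$ contributes a dominant term of size $\Delta_{t_1+1}$; the all-opposite-class interval contributes a signed geometric sum $\sum_{j=t_1+2}^{t_2-1}\Delta_j$; and the final step $t_2-1 \to t_2$ contributes $\pm\Delta_{t_2}$. Equivalently, using \Cref{cla:sequences-closed-form}, one can write $a_{t_2}-a_{t_1}$ as a telescoping sum $\sum_{j=t_1}^{t_2-1} \gamma_j\alpha_j\Delta_j$ plus boundary corrections $\gamma_{t_2}\Delta_{t_2}-\gamma_{t_1}\Delta_{t_1}$, then evaluate $\gamma_j,\alpha_j$ explicitly along the block using the fact that all intermediate indices lie in the opposite class.

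Next I would bound everything using the geometric-tail estimate $\sum_{j \ge k}\Delta_j = \tfrac{3}{2}\Delta_k$, which follows directly from $\Delta_j = \delta/3^j$. The dominant transition term strictly outweighs the tail (the tail starting at $\Delta_{t_1+2}$ has magnitude at most $\tfrac{1}{2}\Delta_{t_1+1}$), so the sign of $a_{t_2}-a_{t_1}$ is unambiguously determined. The edge case $t_2=t_1+1$ is immediate from the recursive update rule.

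The main obstacle will be the boundary correction $\gamma_{t_1}\Delta_{t_1}$, which has magnitude $3\Delta_{t_1+1}$ and is therefore \emph{not} automatically dominated by the geometric tail. This forces a short case analysis on whether $t_1-1$ lies in $R$ or in $L$, which pins down the sign of $\gamma_{t_1}$; in each branch one verifies that the signs line up with the leading transition term, leaving enough geometric slack to conclude. The symmetric computation (exchanging the roles of $R/L$ and $a/b$) then yields the second sub-claim.
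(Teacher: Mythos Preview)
Your plan --- differences via the closed form of \Cref{cla:sequences-closed-form} plus a geometric-tail bound --- is the same strategy the paper uses (it just unfolds the recursion and bounds the tail). The real problem lies elsewhere: the claim as literally written is \emph{false}. Its two sub-statements have $a$ and $b$ swapped; what the paper's proof actually establishes (and what the surrounding argument needs) is that $b_t$ is increasing for $t\in R$ and $a_t$ is decreasing for $t\in L$. A one-line check: with $0\in R$ by convention and $1\in R$, one has $a_1=\delta/3$ but $a_2=b_1+\Delta_2=2\delta/3+\delta/9=7\delta/9>a_1$, so $a_t$ is not decreasing on~$R$.

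Your ``main obstacle'' --- the surviving $\pm\gamma_{t_1}\Delta_{t_1}$ of size $3\Delta_{t_1+1}$ --- is exactly the symptom of this swap. For the \emph{intended} pairing it vanishes: in $b_{t_2}-b_{t_1}=2\gamma_{t_2}\Delta_{t_2}-2\gamma_{t_1}\Delta_{t_1}+\sum_{j=t_1}^{t_2-1}\gamma_j\alpha_j\Delta_j$ with $t_1\in R$, the $j=t_1$ summand is $2\gamma_{t_1}\Delta_{t_1}$ (since $\alpha_{t_1}=2$) and cancels the boundary term exactly; symmetrically, for $a_{t_2}-a_{t_1}$ with $t_1\in L$ the $j=t_1$ summand is $\gamma_{t_1}\Delta_{t_1}$ (since $\alpha_{t_1}=1$) and cancels $-\gamma_{t_1}\Delta_{t_1}$. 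After this cancellation only a $\pm\Delta_{t_1+1}$ against a tail starting at index $t_1+2$ remains, and your geometric estimate finishes with no case analysis. For the statement as you read it, by contrast, the leftover $\pm\gamma_{t_1}\Delta_{t_1}$ genuinely dominates and its sign is governed by whether $t_1-1\in R$ or $L$; in one branch of your proposed case split the inequality goes the wrong way, so that verification cannot close.
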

    \begin{proof}[Proof of \Cref{cla:sequences-are-monotone}]
    We prove the result for $t \in R$; the other case is analogous. Fix any $t$ and $t'$ with $t<t'$ both in $R$, such that no intermediate time step is in $R$, we have the following: 
    \[
    b_{t'} - b_{t} = 2 \Delta_{t+1} - \left( \sum_{j=t+2}^{t'} 2 \Delta_{j+1} \right) = 2\delta \left(\frac{1}{3^{t+1}} - \sum_{j=t+2}^{t'} \frac{1}{3^{j}}\right) \geq \frac{2\delta}{3^{t+1}} \left(1- \sum_{j=1}^{\infty} \frac{1}{3^j}\right) = \frac{\delta}{3^{t+1}} > 0. \qedhere
    \]
    \end{proof}

    To show that there exists a threshold $\tau \in (0,1)$ that separates the two sequences, it suffices to show that for $t',t''$ such that $t'$ is the last index in $L$, and $t''$ is the last one in $R$,
    it holds that $b_{t''} < a_{t'}$. This property is sufficient because then we can choose $\tau$ halfway between $b_{t''}$ and $a_{t'}$ to ensure that $\tau \in (0,\delta)$ (by point (i) of the Proposition), while for all $(a_t,\zeta)$ resp.~$(b_t,1)$ in the sequence it will hold that $a_t \geq a_{t''} > \tau$ resp.~$b_t < b_{t''} < \tau$ (by Claim~\ref{cla:sequences-are-monotone}).
    
    \paragraph{\bf Case 1:} $t' < t''$. In this case, by the definition of $t'$ and $t''$, we have that
    \begin{align*}
    b_{t''} &= a_{t'} - 2\Delta_{t+1} + \sum_{j = t+2}^{T} 2\Delta_{j}= a_{t'} - 2\frac{\delta}{3^{t+1}} + \sum_{j = t+2}^{T} 2\frac{\delta}{3^j}\\
    &< a_{t'} - 2\frac{\delta}{3^{t+1}} + \sum_{j = t+2}^{\infty} 2\frac{\delta}{3^j}=a_{t'} - 2\frac{\delta}{3^{t+1}} \left(1 - \sum_{j = 1}^{\infty} \frac{1}{3^j}\right)= a_{t'} - \frac{\delta}{3^{t+1}} < a_{t'}.
    \end{align*}
    
    \paragraph{\bf Case 2:} $t' > t''$. In this case, we can again use the definition of $t'$ and $t''$ to conclude that
    \begin{align*}
    a_{t'} &= b_{t''} + \Delta_{t+1} - \sum_{j = t+2}^{T} \Delta_{j}= b_{t'} + \frac{\delta}{3^{t+1}} - \sum_{j = t+2}^{T} \frac{\delta}{3^j}\\
    &> b_{t''} + \frac{\delta}{3^{t+1}} - \sum_{j = t+2}^{\infty} \frac{\delta}{3^j}=b_{t''} + \frac{\delta}{3^{t+1}} \left(1 - \sum_{j = 1}^{\infty} \frac{1}{3^j}\right)= b_{t''} + \frac{\delta}{2 \cdot 3^{t+1}} > b_{t''}. \qedhere
    \end{align*}
    \end{proof}

    \subsection[Proofs Omitted From Section 5]{Proofs Omitted from \Cref{sec:smooth_upper}}
    \label{app:smooth_upper}

    \productdecomposition*
        \begin{proof}[Proof of \Cref{cl:prod}]
        Fix any sequence of valuations $(v_1^1,v_2^1), \dots (v_1^t,v_2^t)$ and any path $\pi \in \mathcal P_{\eps}$ (with $M = M_{\pi}$ as its associated mechanism), we prove the following equality, which implies the statement of the claim:
        \[
            \sum_{s=1}^t \rev_s(M) = \sum_{s=1}^t \sum_{e \in \pi} w_e \ind{(v^s_1,v^s_2) \in A_e},
        \]
        where $w_e$ is the intrinsic weight defined in \Cref{def:weights} and stands for $\nicefrac{i} \eps$ if $e$ connects $(\nicefrac{i}\eps,\nicefrac{j}\eps)$ to $(\nicefrac{i}\eps,\nicefrac{j-1}\eps)$ or $(\nicefrac{j}\eps,\nicefrac{i}\eps)$ to $(\nicefrac{j+1}\eps,\nicefrac{i}\eps)$.
        In particular, it is enough to prove that, for any time step $s$, the following hold:
        \begin{equation}
        \label{eq:path_to_edges}
            \rev_s(M) = \sum_{e \in p} w_e \ind{(v^s_1,v^s_2) \in A_e}.
        \end{equation}
        To see why this equality holds, we consider two cases, according to whether a trade happened or not. If $(v_1^s,v_2^s)$ does not lie in the allocation region of $M$, then it means there is no rectangle $A_e$ with $e \in \pi$ that contains it: therefore, the two sides of \Cref{eq:path_to_edges} are both $0.$ If $(v_1^s,v_2^s)$ induces a trade, then it belongs to precisely two rectangles $A_e$ and $A_{e'}$ for $e, e'$ (with $e$ horizontal and $e'$ vertical, note this holds because the realized valuations belong to the edges of $G_{\eps}$ with zero probability) in $\pi$; in particular, the payments of the agents for the mechanisms are exactly $w_e$ (for the second agents) and $w_{e'}$ (for the first agent). 
    \end{proof}

    \sumprod*
        \begin{proof}[Proof of \Cref{cl:node}]
        We prove this result by induction on the distance $d$ of $u$ to the sink $(1,0)$. We start with the base case $d=1$ for which the equality clearly holds: there are only two nodes to consider there, $(1-\nicefrac 1\eps, 0)$ and $(1, \nicefrac 1\eps)$ and only one edge connecting them to $(1,0)$ (recall the weight of the node $(1,0)$ is set to $1$).

        We now consider the general case and let $N_u^+$ be its out-neighborhood and $E_u^+$ the edge(s) that connect(s) $u$ to its neighbor(s). The paths in $\mathcal P_{\eps}^u$ can be divided according to their first edge:
        \begin{align}
            w_u^t &= \sum_{v \in  N_u^+} w_{(u,v)} w_v^t 
            = \sum_{v \in  N_u^+} w_{(u,v)} \sum_{\pi' \in \mathcal P_{\eps}^v} \prod_{e \in \pi'} w_e^t  
            = \sum_{\pi \in \mathcal P_{\eps}^u} \prod_{e \in \pi} w_e^t,
        \end{align}
        where the first equality is due to the $w_u^t$ definition, and the second to the induction hypothesis.
    \end{proof}

\begin{claim}
        \label{cl:sampling}
            It holds that 
            \(
            \P{(e_1,e_2, \dots, e_{i}) \in \pi^t \mid (e_1,e_2, \dots, e_{i-1}) \in \pi^t} = q_{e_i}^t.
            \)
        \end{claim}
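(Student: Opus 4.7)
The plan is to rewrite both sides as ratios of explicit sums of path weights and then reduce them to the node weights $w^t_u$ using the two previously established Claims (the product decomposition and the sum--product identity), at which point the definition of $q^t_{e_i}$ closes the argument.

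Concretely, let $u_0 = (0,1)$ denote the source and let $u_j$ denote the endpoint of the edge $e_j$, so that $e_j = (u_{j-1}, u_j)$. First I would express the conditional probability via the sampling distribution of \pL:
\[
\P{(e_1,\dots,e_i) \in \pi^t \mid (e_1,\dots,e_{i-1}) \in \pi^t}
= \frac{\sum_{\pi \in \mathcal{P}_\eps : \pi \supseteq (e_1,\dots,e_i)} w^t_\pi}{\sum_{\pi \in \mathcal{P}_\eps : \pi \supseteq (e_1,\dots,e_{i-1})} w^t_\pi},
\]
since the normalizing constant $W_t$ cancels. Next, I would use Claim~\ref{cl:prod} to write $w^t_\pi = \prod_{e \in \pi} w^t_e$, and observe that any complete path with prefix $(e_1,\dots,e_j)$ is uniquely the concatenation of that prefix with a path $\pi' \in \mathcal{P}^{u_j}_\eps$ going from $u_j$ to $(1,0)$. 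Factoring the prefix contribution out and invoking Claim~\ref{cl:node} yields
\[
\sum_{\pi \supseteq (e_1,\dots,e_j)} w^t_\pi
\;=\; \Bigl(\prod_{k=1}^{j} w^t_{e_k}\Bigr) \sum_{\pi' \in \mathcal{P}^{u_j}_\eps} \prod_{e \in \pi'} w^t_e
\;=\; \Bigl(\prod_{k=1}^{j} w^t_{e_k}\Bigr) w^t_{u_j},
\]
where we use the convention $w^t_{(1,0)} = 1$ to handle the case $u_j = (1,0)$.

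Substituting this identity for $j=i$ (numerator) and $j=i-1$ (denominator), the prefix product up to $i-1$ cancels, and we obtain
\[
\P{(e_1,\dots,e_i) \in \pi^t \mid (e_1,\dots,e_{i-1}) \in \pi^t}
\;=\; \frac{w^t_{e_i} \cdot w^t_{u_i}}{w^t_{u_{i-1}}}.
\]
Finally, I would match this with the definition of $q^t_{e_i}$. When $u_{i-1}$ has a unique out-neighbor $v$, the recursion gives $w^t_{u_{i-1}} = w^t_{(u_{i-1},v)} w^t_v$, so the ratio equals $1 = q^t_{e_i}$. When $u_{i-1}$ has two out-neighbors $\{d,r\}$, the recursion gives $w^t_{u_{i-1}} = w^t_{(u_{i-1},d)} w^t_d + w^t_{(u_{i-1},r)} w^t_r$, and substituting this denominator into the ratio yields exactly the definition of $q^t_{e_i}$ stated in the proof of \Cref{thm:sampling}.

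The only delicate step is the sum-factorization: one must be careful that the prefix $(e_1,\dots,e_j)$ indeed extends to at least one complete path (so the set summed over is non-empty and its elements are in bijection with $\mathcal{P}^{u_j}_\eps$). This is immediate from property (iv) of orthogonal graphs, which guarantees every intermediate node has an outgoing edge, and from the acyclicity of $\Geps$, so that continuations from $u_j$ to the sink $(1,0)$ always exist and are uniquely identified with elements of $\mathcal{P}^{u_j}_\eps$. The remaining computations are purely algebraic, so I do not anticipate any real obstacle beyond getting these bookkeeping details right.
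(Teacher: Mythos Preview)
Your proposal is correct and follows essentially the same approach as the paper: express the conditional probability as a ratio of path-weight sums (the normalizer $W_t$ cancels), apply Claim~\ref{cl:prod} to decompose into edge weights, factor out the common prefix, and invoke Claim~\ref{cl:node} to reduce to node weights before matching with the definition of $q^t_{e_i}$. Your presentation is in fact slightly cleaner than the paper's, since you derive the uniform expression $\tfrac{w^t_{e_i}\,w^t_{u_i}}{w^t_{u_{i-1}}}$ first and only then perform the out-degree case split, whereas the paper splits into the one-neighbor and two-neighbor cases up front.
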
        \begin{proof}
        We have two cases: if $e_i$ is the only possible edge after $e_{i-1}$, i.e., the endpoint of $e_{i-1}$ has only one out-neighbor, then \Cref{eq:sampling} holds as both the sides of the formula are $1.$ Consider now the other case. Let $u$ be the end-point of edge $e_{i-1}$, $d$ and $r$ the two out-neighbors of $u$ and assume without loss of generality that $e_i = (u,r)$. To simplify notation, let $\mathcal P_{:i}$ be the set of all paths that coincide with $(e_1,\dots, e_{k})$ up to its $i^{th}$ edge, included.
        We have the following chain of equalities:
        \begin{align*}
            \mathbb P ((e_1,e_2, \dots, e_{i}) \in \pi^t \mid (e_1,e_2, \dots, e_{i-1}) \in \pi^t) &= \frac{\P{(e_1,e_2, \dots, e_{i}) \in \pi^t}}{\P{(e_1,e_2, \dots, e_{i-1}) \in \pi^t}}\\
            &= \frac{\sum_{\pi\in \mathcal P_{:i}}q^t_{\pi}}{\sum_{\pi'\in \mathcal P_{:i-1}} q^t_{\pi'}}= \frac{\sum_{\pi\in \mathcal P_{:i}}w^t_{\pi}}{\sum_{\pi'\in \mathcal P_{:i-1}} w^t_{\pi'}},
        \end{align*}
        where we exploited the definition of the $q_{\pi}^t$. We apply \Cref{cl:prod} on the right-most term of the previous formula to express the path weights in terms of the edge weights. To simplify notation, for any fixed path $\pi$, we denote with $\pi_{:i}$ its prefix of the first $i$ edges (included). We get:
        \begin{align*}
            \mathbb P ((e_1,e_2, \dots, e_{i}) \in \pi^t &\mid (e_1,e_2, \dots, e_{i-1}) \in \pi^t) =\frac{\sum_{\pi\in \mathcal P_{:i}}w^t_{\pi}}{\sum_{\pi'\in \mathcal P_{:i-1}} w^t_{\pi'}}\\
            &=\frac{\sum_{\pi\in \mathcal P_{:i}} \prod_{e \in \pi} w_e^t}{\sum_{\pi'\in \mathcal P_{:i-1}} \prod_{e' \in \pi'} w_{e'}^t} \\
            &= \frac{ w_{e_i}^t \sum_{p\in \mathcal P_{:i}} \prod_{e \in p_{:i-1}} w_e^t}{w_{e_i}^t \sum_{p\in \mathcal P_{:i}} \prod_{e \in p_{:i-1}} w_e^t + w_{(u,d)}^t \sum_{p\in \mathcal P_{:i}} \prod_{e \in p_{:i-1}} w_e^t} \\
            &= \frac{ w_{e_i}^t  w_u^t}{w_{e_i}^t  w_u^t + w_{(u,d)}^t w_d^t} = q_e^t,
        \end{align*}
        where the second to last equality follows by \Cref{cl:node} and the last one by definition of $q_e^t.$
        \end{proof}

\subsection[Proofs Omitted from Section 6]{Proofs Omitted from \Cref{sec:smooth_lower}}
    \label{app:smooth_lower}
        Before 
    {providing the formal proofs that we omitted from}
    \Cref{sec:smooth_lower}, we recall some results from the literature on the ``one-shot'' version of the problem. We recall the definition of virtual valuation, crucial in characterizing revenue-maximizing auctions \citep{myerson81,Guth86}.

        \begin{definition}(Virtual Valuation)
        Let $X$ be a random variable with probability density function $f_X :[a,b] \to \mathbb{R}_+$ and cumulative density function $F_X$. If $f_X$ is strictly positive on its domain, we can define the virtual valuation function $\phi_X:[a,b] \to \mathbb{R}$ as follows:
        \[
            \phi_X(x) = x - \frac{1-F(x)}{f(x)}.
        \] 
        The random variable $X$ is regular if $\phi_X$ is monotone non-decreasing on its domain. 
    \end{definition}
    \begin{example}
        The virtual valuations of a uniform random variable $U$ in $[a,b]$ is 
        \(
            \phi_U(x) = 2x - b. 
        \)
    \end{example}

    We present a simple lemma that is crucial to the following construction. 
    
        \begin{lemma}
        \label{lem:squares}
            Consider agents whose valuations are drawn from a uniform distribution on a rectangle $R = [a,b] \times [c,d]$. The following properties hold:
            \begin{itemize}
                \item[(i)] For any mechanism $M$ with allocation region $A_M$, it holds that \[
                \E{\rev (M)} = \frac 1{(b-a)(d-c)}\int_{A_M \cap R} (2(v_1 + v_2) - (b+d)) dv_1 dv_2
                \]
                \item[(ii)] The optimal mechanism $M^\star$ allocates if and only if $2(v_1 + v_2) \ge (b+d)$.
                \item[(iii)] If $2(a + c) \ge b+d$, then the optimal mechanism always allocates, with $\E{\rev(M^\star)} = a+c$
            \end{itemize}
        \end{lemma}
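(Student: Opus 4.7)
The plan is to prove (i) first via the standard Myerson virtual-value representation of expected revenue, and then derive (ii) and (iii) as essentially immediate corollaries. For a uniform distribution on $[a,b]$ the virtual-value function is $\phi_1(v) = v - (1-F(v))/f(v) = 2v - b$, and analogously the uniform on $[c,d]$ has $\phi_2(v) = 2v - d$. Since the two coordinates of $V$ are independent, the single-parameter Myerson revenue identity applies: any DSIC+IR mechanism $M$ with allocation indicator $x \in \{0,1\}$ satisfies
\[
\E{\rev(M)} \;=\; \E{x(V_1,V_2) \cdot \bigl(\phi_1(V_1) + \phi_2(V_2)\bigr)}.
\]
The ``sum of virtual values'' form is correct here because the good is non-excludable, so whenever $x=1$ both agents are allocated and both pay their own critical prices, whose expectations are given by Myerson's formula. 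Writing this expectation against the uniform density $1/((b-a)(d-c))$ and using that the density vanishes outside $R$ immediately yields (i).

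Part (ii) then follows from (i) by pointwise maximization: the integrand $2(v_1+v_2) - (b+d)$ depends only on $v_1+v_2$ and is nonnegative precisely on the half-plane $\{2(v_1+v_2)\ge b+d\}$, so taking $A_M$ to be (the intersection of $[0,1]^2$ with) this half-plane maximizes the integral. It remains to check that the resulting region is monotone (northeast-closed), which is immediate because moving northeast only increases $v_1+v_2$; hence by the characterization in \Cref{sec:DSIC} it corresponds to a valid DSIC+IR mechanism. For (iii), the hypothesis $2(a+c) \ge b+d$ combined with $(v_1,v_2) \in R$ gives $2(v_1+v_2) \ge 2(a+c) \ge b+d$, so the optimal allocation region from (ii) contains the whole rectangle $R$, i.e., $M^\star$ always allocates on the support. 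For the revenue value one can either integrate the expression in (i) directly over $R$ (the terms telescope to $(b-a)(d-c)(a+c)$), or more transparently apply the geometric view of \Cref{sec:DSIC}: the minimal monotone extension of $R$ is $[a,1]\times[c,1]$, and at every $(v_1,v_2) \in R$ the corresponding critical prices are exactly $a$ for agent $1$ and $c$ for agent $2$, giving total payment $a+c$.

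The only delicate point I expect is the justification of step (i) in the non-excludable setting, where one must explain why a single allocation indicator $x(v_1,v_2)$ multiplies both virtual values (rather than two separate per-agent indicators as in an excludable auction). This is settled once and for all by \Cref{sec:DSIC}: DSIC+IR mechanisms here are exactly those with a single monotone allocation region together with the two corresponding critical-price payments, so the two per-agent Myerson identities combine additively into the displayed formula. Everything else in the argument reduces to elementary calculus and a pointwise maximization.
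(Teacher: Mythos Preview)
Your proposal is correct and follows essentially the same route as the paper: both prove (i) via the Myerson virtual-surplus identity for the two independent uniforms (with $\phi_1(v)=2v-b$, $\phi_2(v)=2v-d$), derive (ii) by pointwise maximization of the integrand, and obtain (iii) either by the critical-price observation that each agent pays the lower endpoint of its interval or by direct integration. Your write-up is in fact a bit more explicit than the paper's about the monotonicity check in (ii) and about why the single allocation indicator multiplies both virtual values in the non-excludable setting, but these are elaborations of the same argument rather than a different one.
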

        \begin{proof}
            Denote with $V_1$ and $V_2$ the two independent uniform random variables that represent the two valuations. When the valuations of the two agents are drawn by two independent and regular distributions $V_1$ and $V_2$, then the expected revenue is equal to the expected virtual surplus (e.g., Theorem 13.10 of \citet{NisanRTV07}\footnote{Their result is stated for $a = 0$, but can be derived for general $a$ with minimal effort.}), this proves point (i). The optimal mechanism then allocates if and only if the sum of the virtual valuations is positive, i.e., if the realized $v_1$ and $v_2$ are such that
            \begin{equation}
            \label{eq:virtual}
                \phi_{V_1}(v_1) + \phi_{V_2}(v_2) = 2(v_1 + v_2) - (b+d)\ge 0.
            \end{equation}
            Stated differently, the optimal mechanism allocates if and only if $2(v_1 + v_2) \ge (b+d)$. This proves point (ii). Finally, the expected revenue is then easy to compute when $2(v_1 + v_2) \ge 2 (a+c) \ge (b+d)$, i.e., when the trade always happens: with probability $1$, the first agent pays $a$, and the second $c$. Equivalently, This can be computed by integrating the virtual valuation on the allocation region:
            \begin{align*}
                \E{\rev(M^\star)} = \frac{1}{(b-a)(d-c)}\int_{a}^b \int_{c}^d [2(v_1 + v_2)-(b+d)] dv_2 dv_1 = a+c.
            \end{align*}
            {This shows (iii) and completes the proof.}
        \end{proof}

    
    \Domination*
        \begin{proof}
            Let $M$ be any mechanism and $\alpha$ any parameter in $(\nicefrac 4{15}, \nicefrac 35)$. We prove the Lemma via a case analysis, according to the relative position of its allocation region $A_M$ and the support of $\Pb^{\alpha}$.
            \paragraph{Case I} If $Q_1 \cup Q_2 \subseteq A_M$, $M$ always trades the item, and $M_1$ clearly dominates $M$ (by monotonicity, the highest price any valuation in $A_M$ can be charged is $\nicefrac 12$ for each agent, which is what $M_1$ does).
            \paragraph{Case II} If $Q_1 \cap A_M = \emptyset$, then $M$ only considers valuations in $Q_2$, but we know that the optimal mechanism that considers only $Q_2$ is $M_2$ (by \Cref{lem:squares}).\\

            The first two cases rule out all the mechanisms such that the intersection of $A_M$ and $Q_1$ is not proper ($A_M \cap Q_1 = Q_1$ in the first case and $A_M \cap Q_1 = \emptyset$ in the second). Moreover, if $A_M \cap Q_1 \neq \emptyset$, then $Q_2 \subseteq A_M$, by monotonicity of the allocation rule (for any $(x_1,y_1)$ in $Q_1$ and $(x_2,y_2)$ in $Q_2$ it holds that $x_1 \le x_2$ and $y_1 \le y_2$). For the sake of simplicity, we denote with $\ell_i$ the sides of $Q_1$, naming $\ell_1$ the $(\nicefrac 12, \nicefrac 34)$-$(\nicefrac 34, \nicefrac 34)$ segment and proceeding clock-wise.
            
            \paragraph{Case III} The boundary of $A_M$ intersects $\ell_1$ in $(x,\nicefrac 34)$ and $\ell_3$ in $(y, \nicefrac 12)$, with $\nicefrac 12 \le x \le y \le \nicefrac 34.$ {Then, without loss of generality, we can assume that $M$ is completed by the segments $(x, \nicefrac34) -  (x,1)$ and $(y,\nicefrac12) -(1,\nicefrac12)$. So the} 
            behaviour of $M$ with respect to the valuations in $Q_2$ is fixed: accept them and charge $x + \nicefrac 12$. The behavior of $M$ inside $Q_1$ is not clear. However, the virtual valuation is positive on $Q_1$; thus, by \Cref{lem:squares}, the best thing $A_M \cap Q_1$ can be is the whole 
            $[x,\nicefrac 34]\times [\nicefrac 12,\nicefrac 34]$ 
            rectangle, with an expected revenue of $(x+ \tfrac 12)$ times the probability of a trade. {We thus have} 
            \begin{align*}
                \E{\rev{(M)}} &\leq  (x+ \tfrac 12) [(1-\alpha) + \alpha ({3 - 4x})]\\
                &\le \max_{x \in [\nicefrac 12, \nicefrac 34]}(x+ \tfrac 12) [(1-\alpha) + \alpha ({3 - 4x})]= 1 = \E{\rev(M_1)},
            \end{align*}
            {where we used that} 
            the $\max$ is attained in $x = \nicefrac 12$ for all relevant $\alpha$.

            \paragraph{Case IV} The boundary of $A_M$ intersects $\ell_1$ in $(x,\nicefrac 34)$ and $\ell_2$ in $(\nicefrac 34,y)$, with $\nicefrac 12 \le x,y \le \nicefrac 34.$ Reasoning as in Case III, the best possibility for $M$ is that $A_M\cap M_1  = [x,\nicefrac 34] \times [y,\nicefrac 34]$, for an expected revenue of $(x+y)$ times the probability of a trade. {We thus have:} 
            \begin{align*}
                \E{\rev{(M)}} &\leq (x+ y) [(1-\alpha) + \alpha ({3 - 4x})(3-4y)]\\
                &\le (x+ \tfrac 12) [(1-\alpha) + \alpha ({3 - 4x})]\\
                &\le \max_{x \in [\nicefrac 12, \nicefrac 34]}(x+ \tfrac 12) [(1-\alpha) + \alpha ({3 - 4x})]= 1 = \E{\rev(M_1)},
            \end{align*}
            where the 
            {second}
            inequality 
            can be verified {to hold} 
            numerically 
            for any choice of the parameter $\alpha \in (\nicefrac{4}{15}, \nicefrac{2}{5})$, and of $x$ and $y$ in $[\nicefrac 12, \nicefrac 34]$.

            \paragraph{Case IV} The boundary of $A_M$ intersects $\ell_4$ in 
            $(\nicefrac 12,x)$
            and $\ell_3$ in $(y, \nicefrac 12)$, with $\nicefrac 12 \le y,x \le \nicefrac 34.$ This case is easy: by using \Cref{lem:squares} and reasoning as in Case III the best possibility for $A_M$ is to contain the whole $Q_1$ (which is exactly what $M_1$ does).

            \paragraph{Case V} The boundary of $A_M$ intersects $\ell_4$ in 
            $(\nicefrac 12,x)$
            and $\ell_2$ in $(\nicefrac 34,y)$, with $\nicefrac 12 \le y \le x \le \nicefrac 34.$ This case is analogous to Case IV (by symmetry).
        \end{proof}

        \lowersmooth*
        \begin{proof}
            Let $\eps\in (0,\nicefrac 1{15})$ be a small parameter to be set later, and let $\alpha_0= \nicefrac 13$. We consider three probability distributions from the family introduced in the main body, corresponding to parameters $\alpha_0$, $\alpha_0 + \eps$, and $\alpha_0 - \eps$. For the sake of simplicity, let's rename these measures $\Pb^0$, $\Pb^1$, respectively $\Pb^2$. 
            We use the same random variable $V$ to denote the valuations drawn from the different probability distributions. When we change the underlying measure, we are changing its law. We denote with $\mathbb E^i$ the corresponding expectation. Consider now the push forward measures on $[0,1]^2$ (with the Borel $\sigma$-algebra) induced by these tree measures: $\mathbb P_V^0$, $\mathbb P_V^1$ and $\mathbb P_V^2$. A first, crucial result bounds the difference between these three distributions in terms of the KL divergence:
        \begin{claim}
        \label{cl:KLsmooth}
        It holds that $ \max\left\{\kl{\mathbb P_V^1}{\mathbb P_V^0}, \kl{\mathbb P_V^2}{\mathbb P_V^0}\right\} \le 7 \eps^2$.
        \end{claim}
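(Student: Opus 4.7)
The plan is to reduce the KL divergence between the two mixture measures to the KL divergence between two Bernoulli distributions, and then apply a standard quadratic bound.

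\textbf{Step 1: Reduce to Bernoulli KL.} Both $\mathbb P^i_V$ (for $i=0,1,2$) are supported on $Q_1 \cup Q_2$ and within each of these squares they are just the uniform distribution, rescaled by the total mass allocated to that square. Concretely, under $\mathbb P^i_V$ the density is $16\alpha_i$ on $Q_1$ and $16(1-\alpha_i)$ on $Q_2$, where $\alpha_0 = \nicefrac13$, $\alpha_1 = \nicefrac13 + \varepsilon$, $\alpha_2 = \nicefrac13 - \varepsilon$. Hence the density ratio $f_{\alpha_i}/f_{\alpha_0}$ is piecewise constant, equal to $\alpha_i/\alpha_0$ on $Q_1$ and $(1-\alpha_i)/(1-\alpha_0)$ on $Q_2$, and a direct computation gives
\[
\mathrm{KL}(\mathbb P^i_V, \mathbb P^0_V)
= \alpha_i \log \frac{\alpha_i}{\alpha_0} + (1-\alpha_i) \log \frac{1-\alpha_i}{1-\alpha_0}
= \mathrm{KL}(\mathrm{Ber}(\alpha_i), \mathrm{Ber}(\alpha_0)).
\]

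\textbf{Step 2: Apply the quadratic Bernoulli KL bound.} Using the elementary inequality $\log(1+x) \le x$ (valid for $x > -1$, with both signs), expanding $\alpha_i = \alpha_0 + \delta$ with $|\delta| = \varepsilon$ gives
\[
\mathrm{KL}(\mathrm{Ber}(\alpha_0+\delta), \mathrm{Ber}(\alpha_0))
\le \frac{\delta^2}{\alpha_0(1-\alpha_0)}.
\]
Plugging in $\alpha_0 = \nicefrac13$ yields $\alpha_0(1-\alpha_0) = \nicefrac29$, so both KL divergences are at most $\tfrac{9}{2}\varepsilon^2 \le 7\varepsilon^2$.

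\textbf{Main obstacle.} There is no real obstacle here: the result is a routine calculation made easy by the fact that the mixture measures differ only in how much total mass they put on each of the two disjoint uniform pieces, which collapses the continuous KL to a two-point Bernoulli KL. The only care needed is to verify the quadratic upper bound on Bernoulli KL with the correct sign analysis of the Taylor remainder, which the $\log(1+x)\le x$ inequality handles uniformly.
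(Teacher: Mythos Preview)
Your proof is correct and follows essentially the same route as the paper: both reduce the continuous KL to the Bernoulli KL $\alpha_i\log\frac{\alpha_i}{\alpha_0}+(1-\alpha_i)\log\frac{1-\alpha_i}{1-\alpha_0}$ and bound it by $\tfrac{9}{2}\eps^2\le 7\eps^2$. The only cosmetic difference is that the paper first applies Jensen (concavity of $\log$) to merge the two terms into $\log(1+\tfrac{9}{2}\eps^2)$ and then uses $\log(1+x)\le x$, whereas you apply $\log(1+x)\le x$ termwise to obtain the $\chi^2$-type bound $\eps^2/(\alpha_0(1-\alpha_0))$ directly; both arrive at the same constant.
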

        \begin{proof}[Proof of \Cref{cl:KLsmooth}]
            We produce the calculations for $\kl{\mathbb P_V^1}{\mathbb P_V^0}$, the ones for $\Pb^2_V$ are analogous. We apply the definition of KL divergence and obtain the following chain of inequalities:
            \begin{align*}
                \kl{\mathbb P_V^1}{\mathbb P_V^0} &= (\alpha_0 + \eps)\log \tfrac{\alpha_0 + \eps}{\alpha_0} + (1-\alpha_0 - \eps)\log \tfrac{(1-\alpha_0 - \eps)}{1-\alpha_0}\\
                &\le \log \left( \tfrac{(\alpha_0 + \eps)^2}{\alpha_0} + \tfrac{(1-\alpha_0 - \eps)^2}{1-\alpha_0}\right)
                = \log(1 + \tfrac 92 \eps^2) \le 7 \eps^2,
            \end{align*}
            where we used the concavity of $\log$  and that  $\log x \le \nicefrac{x}{\log 2}$ for all $x > -1.$
        \end{proof}

        When the sequence of agents' valuations is drawn i.i.d.~under $\mathbb P^0$, the two mechanisms analyzed in \Cref{lem:domination} are optimal: $M_1$ (allocating in $[\tfrac12,1] \times [\tfrac12,1]$) and $M_2$ (allocating in $[\nicefrac 34,1]\times [\nicefrac 34,1]$). Simple calculations show that when the sequence is generated according to $\mathbb P^i$  for $i \in \{1,2\}$, then the optimal mechanism is $M_i.$ Moreover, by \Cref{lem:domination}, it is always convenient for the learner to either post $M_1$ or $M_2$  (as either $M_1$ or $M_2$ dominates $M$ {\em regardless} of the actual distribution).

        Consider now an adversary that selects uniformly at random between $\mathbb P^1$ and $\mathbb P^2$ and extracts the valuations i.i.d.~from it. We prove that any (deterministic)\footnote{It is enough to consider deterministic adversaries thanks to Yao's principle.} learning algorithm $\mathcal A$ suffers at least $\Omega(\sqrt{T})$ regret against this adversary, meaning that it suffers the same regret against at least one of the two distributions, thus proving the Theorem. In particular, note that every time the learner chooses the suboptimal mechanism, then it suffers instantaneous regret $\Omega(\eps)$:
        \begin{align}
        \label{eq:suboptimal}
            \mathbb{E}^1[\rev(M_1) - \rev(M_2)] = \mathbb{E}^2[\rev(M_2) - \rev(M_1)] = \tfrac 32 \eps.
        \end{align}

        Thus, to show a lower bound on the regret of the learner, it suffices to show an upper bound on how often it chooses the right mechanism.  
        Denote with $N_1^T$ and $N_2^T$ the random variables counting the number of times that $\mathcal A$ plays mechanism $M_1$, respectively $M_2$ (there is no point in posting any other mechanism). 
        For each $j \in \{0,1,2\}$, consider the run of $\mathcal A$ against the stochastic adversary which draws $V^1,V^2, \dots $ i.i.d.~from $\mathbb{P}^j$. For $i=1,2$, we have the following: 
        \begin{align}
        \nonumber
            \Ei{N_i^T} - \Eo{N_i^T} &= \sum_{t=2}^T \Pi{M^t = M_i} - \Po{M^t = M_i}\\
            &\le\sum_{t=2}^T ||\mathbb P^i_{V^1, \dots V^{t-1}} - \mathbb P^0_{V^1, \dots V^{t-1}}||_{\tv} \tag{\text{Definition of total variation}}\\
            &\le\sum_{t=2}^T \sqrt{\frac t2 \kl{\mathbb P^i_V}{\mathbb P^0_V}} \tag{By Pinsker's inequality on i.i.d. r.v.}\\
            \label{eq:KL}
            &\le 2\eps T \sqrt{T},
        \end{align}
        where in the last step we applied \Cref{cl:KLsmooth}. Note, $\mathbb P^j_{V^1, \dots V^t}$ is the push-forward measure on $([0,1]^2)^t$ induced by $t$ i.i.d. draws of $V$ from distribution $\mathbb{P}^j$, $j \in \{0,1,2\}$.  
        Averaging \Cref{eq:KL}, we get
        \begin{equation}
            \label{eq:last_step}
            \frac 12 \sum_{i=1,2}\Ei{N_i^T} \le \frac 12 \sum_{i=1,2}\Eo{N_i^T} + 2\eps T \sqrt{T} = \left(\frac 12 + 2 \eps \sqrt{T}\right) T.
        \end{equation}         
        We have all the ingredients to conclude the proof: the regret of the learning algorithm $\mathcal{A}$ is at least
        \begin{align}
            \nonumber
            R_T(\mathcal A) &\ge \frac 12 \sum_{i=1,2}\Ei{\sum_{t = 1}^{T} \revt(M_i) - \sum_{t = 1}^{T} \revt(M^t)}\\
            &\ge \frac 3 4 \eps \sum_{i=1,2}(T-\Ei{N_i^T}) \ge \frac 3 4 \eps \left(1 - 4 \eps \sqrt{T}\right)\ge \frac{3}{64} \sqrt{T}  \tag*{\text{(By \Cref{eq:last_step,eq:suboptimal})}}
        \end{align}
        where the last inequality follows by setting $\eps = \tfrac{1}{16\sqrt{T}}$).
        \end{proof}


\end{document}